\declaretheorem{proposition}
\declaretheorem{lemma}
\newtheorem*{lemma*}{Lemma}
\newtheorem{definition}{Definition}
\title{Two-sided fairness in rankings via Lorenz dominance}
\author{
    Virginie Do\textsuperscript{\rm 1,2}, Sam Corbett-Davies\textsuperscript{\rm 1}, Jamal Atif\textsuperscript{\rm 2}, Nicolas Usunier\textsuperscript{\rm 1}\\
    \textsuperscript{\rm 1}Facebook AI\\
    \textsuperscript{\rm 2}LAMSADE, Université PSL, Université Paris Dauphine, CNRS, France\\
    \texttt{virginiedo@fb.com, scd@fb.com, jamal.atif@dauphine.psl.eu, usunier@fb.com}
}
\begin{document}

\maketitle

\begin{abstract}
We consider the problem of generating rankings that are fair towards both users and item producers in recommender systems. We address both usual recommendation (e.g., of music or movies) and reciprocal recommendation (e.g., dating). Following concepts of distributive justice in welfare economics, our notion of fairness aims at increasing the utility of the worse-off individuals, which we formalize using the criterion of \emph{Lorenz efficiency}. It guarantees that rankings are Pareto efficient, and that they maximally redistribute utility from better-off to worse-off, at a given level of overall utility. We propose to generate rankings by maximizing concave welfare functions, and develop an efficient inference procedure based on the Frank-Wolfe algorithm. We prove that unlike existing approaches based on fairness constraints, our approach always produces fair rankings. Our experiments also show that it increases the utility of the worse-off at lower costs in terms of overall utility.\end{abstract}

\section{Introduction}

Recommender systems have a growing impact on the information we see and on our life opportunities, as they help us browse news articles, find a new job, house, or people to connect with. While the objective of recommender systems is usually defined as maximizing the quality of recommendations from the user's perspective, the recommendations also have an impact on the recommended ``items''. News outlets rely on exposure to generate revenue, finding a job depends on which recruiter gets to see our resume, and the effectiveness of a dating application also depends on who we are recommended to---and if we are being recommended, then someone else is not. 
\emph{Two-sided fairness in rankings} is the problem of generating personalized recommendations by fairly mediating between the interests of %all parties: 
users and items. It involves a complex multidimensional trade-off. Fairness towards item producers requires boosting the exposure of small producers (e.g., to avoid winner-take-all effects and popularity biases \citep{abdollahpouri2019unfairness}) at the expense of average user utility. Fairness towards users aims at increasing the utility of the least served users (e.g., so that least served users do not support the cost of item-side fairness), once again at the expense of average user utility. The goal of this paper is to provide an algorithmic framework to generate rankings that achieve a variety of these trade-offs, leaving the choice of a specific trade-off to the practitioner.
%
% We view the problem of two-sided fairness in rankings as solving a multidimensional trade-off between users' and items' utilities. On the item side, it is possible to boost the exposure of small producers at the expense of average utility, while fairness towards users aims at increasing the utility of the least served users, once again at the expense of average utility. The goal is, for instance, to avoid winner take-all-effects and popularity biases in item exposure, \citep{abdollahpouri2019unfairness} to increase the sustainability of smaller producers \citep{mehrotra2018towards,mladenov2020optimizing}, or to increase the retention and satisfaction of least-served users.

% In order to be fair towards their users and item producers, recommender systems need to solve a complex multi-dimensional trade-off. Fairness towards items  requires boosting the exposure of small producers at the expense of average user utility, for instance to avoid winner take-all-effects and popularity biases \citep{abdollahpouri2019unfairness} or to increase the sustainability of smaller producers \citep{mehrotra2018towards,mladenov2020optimizing}. Fairness towards users aims at increasing the utility of the least served users, once again at the expense of average utility. %, for instance to make sure that least served users do not support the cost of item-side fairness, or to avoid churn. 

The leading approach to fairness in rankings is to maximize user utility under constraints of equal item exposure (or equal quality-weighted exposure) \citep{singh2018fairness,biega2018equity} or equal user satisfaction \citep{basu2020framework}. When these constraints imply an unacceptable decrease in average user utility, so-called ``trade-offs between utility and fairness'' \citep{zehlike2020reducing,mehrotra2018towards} are obtained by relaxing the fairness constraints, leading to the optimization of a trade-off between average user utility and a measure of users' or items' inequality. 

Thinking about fairness in terms of optimal utility/inequality trade-offs has, however, two fundamental limitations. First, the optimization of a utility/inequality trade-off is not necessarily Pareto-efficient from the point of view of users and items: it sometimes chooses solutions that decrease the utility of some individuals without making anybody else better off. We argue that reducing inequalities by decreasing the utility of the better-off is not desirable if it does not benefit anyone. The second limitation 
is that focusing on a single measure of inequality does not address the question of how inequality is reduced, and in particular, which fraction of the population benefits or bears the cost of reducing inequalities.%which could be done by reducing inequalities within the better-offs, or by improving the utility of the worse-offs.

% or, equivalently, who benefits from the fairness algorithm.
% of the fairness-as-constraints approach is that it does not provide tools to assess the fairness of the trade-offs. In particular, focusing on a single measure of inequality does not address the question of how inequality is reduced, or, equivalently, who benefits from the fairness algorithm.

In this paper, we propose a new framework for two-sided fairness in rankings grounded in the analysis of generalized Lorenz curves of user and item utilities. Widely used to study efficiency and equity in cardinal welfare economics \citep{shorrocks1983ranking}, these curves plot the cumulative utility obtained by fractions of the population ordered from the worst-off to the best-off. A curve that is always above another means that all fractions of the populations are better off. We define fair rankings as those with non-dominated generalized Lorenz curves for users and items. First, this definition guarantees that fair rankings are Pareto efficient. Second, examining the entirety of the generalized Lorenz curves provides a better understanding of which fractions of the population benefit from an intervention, and which ones have to pay for it. We present our general framework based on Lorenz dominance in usual recommendation settings (e.g., music or movie recommendation), and also show how extend it to \emph{reciprocal recommendation} tasks such as dating applications or friends recommendation, where users are recommended to other users. 

We present a new method for generating rankings based on the maximization of concave welfare functions of users' and items' utilities. The parameters of the welfare function control the relative weight of users and items, and how much focus is given to the worse-off fractions of users and items. We show that rankings generated by maximizing our welfare functions are fair for every value of the parameters.  Our framework does not aim at defining what parameters are suitable in general — rather, the choice of a specific trade-off depends on the application.

From an algorithmic perspective, two-sided fairness is challenging because items’ utilities depend on the rankings of all users, requiring global inference. Previous work on item-side fairness addressed this issue with heuristic methods without guarantees or control on the achievable trade-offs. We show how the Frank-Wolfe algorithm can be leveraged to make inference tractable, addressing both our welfare maximization approach and existing item-side fairness penalties.

We demonstrate that our welfare function approach enjoys stronger theoretical guarantees than existing methods. While it always generates rankings with non-dominated generalized Lorenz curves, many other approaches do not. We show that one of the main criteria of the literature, called equity of attention by \citet{biega2018equity}, can lead to decrease user utility, while \emph{increasing} inequalities of exposure between items. Moreover, equal user satisfaction criteria in reciprocal recommendation can lead to decrease the utility of \emph{every user}, even the worse-off. Our notion of fairness prevents these undesirable behaviors. We report experimental results on music and friend recommendation tasks, where we analyze the trade-offs obtained by different methods by looking at different points of their Lorenz curves. Our welfare approach generates a wide variety of trade-offs, and is, in particular, more effective at improving the utility of worse-off users than the baselines.

We present our formal framework in Section~\ref{sec:onesided}. We discuss the theoretical properties of previous approaches in Section~\ref{sec:fairness_exposure}, and present our ranking algorithm in Section~\ref{sec:algo}. Our experiments are described in Section~\ref{sec:experiments}, and the related work is discussed in Section \ref{sec:related}.

\section{Two-sided fairness via Lorenz dominance}\label{sec:onesided}

\subsection{Formal framework}\label{sec:framework}

\paragraph{Terminology and notation.} 
We identify an item with its producer, so that ``item utility'' means ``item producer's utility''.  The main paper focuses on fairness towards individual users and items. We describe in Appendix~\ref{sec:groups} the extension of our approach to sensitive groups of users or items. $\card{\cal X}$ denotes the cardinal of the set $\cal X$. Given $n\in\mathbb{N}$, we denote by $\intint{n}=\xSet{1, \ldots, n}$. 
The set of users $\userS$ is identified with $\{1,..., \card{\userS}\}$ and the set of items $\itemS$ is identified with $\{\card{\userS}+1,...,\n\}$ where  $\n = \card{\userS} + \card{\itemS}$.
For $(i,j)\in \userS\times \itemS$, we denote by $\muij$ the value of item $j$ to user $i$.

A (deterministic) ranking %is a one-to-one mapping
$\sigma:\itemS\to\intint{\nitems}$ is a one-to-one mapping from items $j$ to their rank $\sigma(j)$. Following \citep{singh2018fairness}, we use \emph{stochastic rankings} because they allow us to perform inference using convex optimization (see Section~\ref{sec:algo}). 
The recommender system produces one stochastic ranking per user, represented by a 3-way \emph{ranking tensor} $\rrk$ where $\rrkijk$ is the probability that $j$ is recommended to $i$ at rank $k$. We denote by  $\rrkS$ the set of ranking tensors.

Utilities of users and items are defined through a position-based model, as in previous work~ \citep{singh2018fairness,biega2018equity,wu2021tfrom}. Let $\v\in\Re^{\nitems}$, where $\v_k$ is the exposure weight at rank $k$.
We assume that lower ranks receive more exposure, so that $\forall k\in\intint{\nitems-1}, \v_k\geq \v_{k+1}\geq 0$.\footnote{We use a user-independent $\v$ for simplicity. Considering user-dependent weights is straightforward.}
Given a user $i$ and a ranking $\sigma_i$, the \emph{user-side utility} of $i$ is the sum of the $\muij$s weighted by the exposure weight of their rank $\sigma_i(j)$: $u_i(\sigma_i)= \sum_{j\in\itemS} \v_{\sigma_i(j)} \muij$. Given an item $j$, the \emph{item-side utility} of $j$ is the sum over users $i$ of the exposure of $j$ to $i.$ These definitions extend to stochastic rankings by taking the expectation over rankings, written in matrix form:\footnote{We consider $\rrkij$ as a row vector in the formula, so that $\rrkij\v=\sum_{k=1}^\nitems \rrkijk\v_k$.} 
\begin{align}\label{eq:twosided}
\text{\emph{user-side utility:}} &\quad \tui(P)= \sum_{j\in\itemS} \muij \rrkij\v \quad&
%&&
\text{\emph{ item-side utility (exposure):}} &\quad \tuj(\rrk)=\sum_{i\in\userS} \rrkij\v
\end{align}
We denote by $\tuP = (\tuiP)_{i=1}^n$ the utility profile for $\rrk$, and by $\US=\xSet{\tuP:\rrk\in\rrkS}$ the set of feasible profiles. For $\bu\in\US$, $\buusers=(\bu_i)_{i\in\userS}$ and  $\tuitems=(\bu_i)_{i\in\itemS}$ denote the utility profiles of users and items respectively. 
%Thus, in general, finding rankings while taking into account users' and items' utilities leads to a global inference problem.

\paragraph{Two-sided fairness in rankings.} In practice, values of $\muij$ are not known to the recommender system. Ranking algorithms use an estimate $\hmu$ of $\mu$ based on historical data. We address here the problem of \emph{inference}: the task is to compute the ranking tensor given $\hmu$, with the goal of making fair trade-offs between (true) user and item utilities. Notice that the user-side utility depends only on the ranking of the user, but for every item, the exposure depends on the rankings of \emph{all} users. Thus, accounting for both users' and items' utilities in the recommendations is a global inference problem.

\paragraph{More general item utilities} We consider exposure as the item-side utility to follow prior work and for simplicity. Our framework and algorithm readily applies in a more general case of \emph{two-sided preferences}, where items also have preferences over users (for instance, in hiring, job seekers have preferences over which recruiters they are recommended to). Denoting $\muji$ the value of user $i$ to item $j$, the item side-utility is then $\tuj(P)=\smash{\sum\limits_{i\in\userS}} \muji\rrkij\v$.

\begin{figure}[t]
    \centering
    \includegraphics[width=0.62\linewidth]{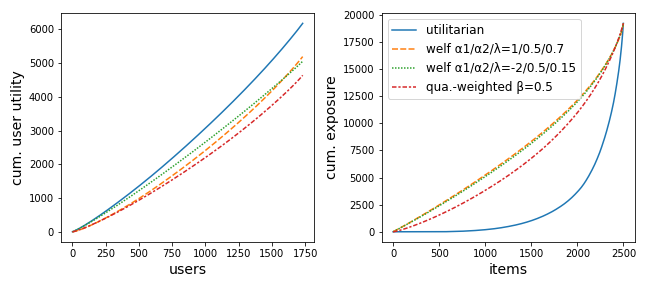}
    % {images/lorenz-efficiency/onesided.png}   
    \quad\quad\quad\quad \includegraphics[width=0.27\linewidth]{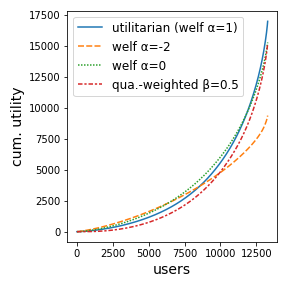}
    % {images/lorenz-efficiency/reciprocal.png}
    \caption{Generalized Lorenz curves for usual (left) and reciprocal (right) recommendation.}
    \label{fig:tradeoffs}
\end{figure}
\subsection{Lorenz efficiency and the welfare function approach}\label{sec:welfare}

Our notion of fairness aims at improving the utility of the worse-off users and items. Since this does not prescribe exactly which fraction of the worse-off users/items should be prioritized, the assessment of trade-offs requires looking at all fractions of the population. This is captured by the generalized Lorenz curve used in cardinal welfare economics \citep{shorrocks1983ranking}. 
Formally, given a utility profile $\bu$, let $(\bu_{(i)})_{i=1}^n$ be the sorted values in $\bu$ from smallest to largest, i.e., $\bu_{(1)} \leq \ldots \leq \bu_{(n)}$, then the generalized Lorenz curve plots $(\bucum_i)_{i=1}^n$ where $\bucum_i = \bu_{(1)}+\ldots+\bu_{(i)}$. To assess the fairness of trade-offs, we rely on the following dominance relations on utility profiles:

\textbf{Pareto-dominance $\pareto$.} $\bu\pareto\bv\iff (\forall i\in\intint{n}, \bu_i\geq\bv_i$ and $\exists i\in\intint{n}, \bu_i>\bv_i)$.

\textbf{Lorenz-dominance $\lorenz$.} Then $\bu\lorenz\bv \iff \bucum\pareto\bvcum$. 

We write $\lorenzw$ for non-strict Lorenz dominance (i.e., $\forall i, \bucum_i\geq \bvcum_i$). Notice that Pareto-dominance implies Lorenz-dominance. Our notion of fairness, which we call \emph{Lorenz efficiency}, states that a ranking is fair if the utility profiles for users and for items are not jointly Lorenz-dominated:
\begin{definition}[Lorenz efficiency]
A utility profile $\tu\in\US$ is \emph{Lorenz-efficient} if there is no $\bv\in\US$ such that either ($\bvitems\lorenzw\buitems$ and $\bvusers\lorenz\buusers$) or ($\bvusers\lorenzw\buusers$ and $\bvitems\lorenz\buitems$).
\end{definition}
%We also say that a ranking tensor $\rrk\in\rrkS$ is Lorenz-efficient if its utility profile $\profileP$ is Lorenz-efficient. 
We consider that Lorenz-dominated profiles are undesirable (and unfair) because the utility of worse-off fractions of the population could have been increased at no cost for total utility. %it is possible to increase the utility of some worse-off fractions of the population without losing in total utility. 
Examples of Lorenz-curves of users and items are given in Fig.~\ref{fig:tradeoffs}. The blue solid, green dotted and orange dashed curves are all non-dominated (the blue solid ranking has higher user utility but high item inequality, the green dotted and orange dashed curves have similar item exposure profiles, but user curves that intersect). On the other hand, the red dot/dashed curve is an unfair ranking: compared to the green dotted and orange dashed curve, all fractions of the worse off users have lower utility, together with less exposure for worse-off items.

A fundamental result from cardinal welfare economics is that concave welfare functions of utility profiles order profiles according to Lorenz dominance \citep{atkinson1970measurement,shorrocks1983ranking}. The choice of the welfare function specifies which (fair) trade-off is desirable in a specific context.  This result holds when all utilities are comparable. In our case where there are users and items,  we propose the following welfare
function parameterized by $\theta=(\lambda, \alpha_1, \alpha_2)$:%
\footnote{$\Wtheta(\bu)=-\infty$ if $\alpha\leq 0$ and $\exists i, \bu_i=0$. In practice, we use $\psi(x+\eta, \alpha)$ for $\eta>0$ to avoid this case.} 
%\vskip-0.3in
\begin{align}
    \resizebox{\linewidth}{!}{$\displaystyle 
    \forall\bu\in\Re_+^n:~~
    \Wthetau = (1-\lambda)\sumU \psi(\tui, \alpha_1) 
    +\lambda\sumI \psi(\tuj, \alpha_2) 
    \text{~~with~}
    \psi(x, \alpha) = 
    \begin{cases} 
    x^\alpha & \text{~if~} \alpha >0\\
    \log(x) & \text{~if~} \alpha=0\\
    -x^{\alpha}& \text{~if~} \alpha <0
    \end{cases}.
    $}
\end{align}
Inference is carried out by maximizing $\Wtheta$ (an efficient algorithm is proposed in Section~\ref{sec:algo}):
\begin{align}\label{eq:inference}
    \text{\emph{(ranking procedure)}} && \rrkopt\in\argmaxPls~ \WthetauP\vphantom{\sum}
    %&& \text{with~} \WthetaP=\WthetauP.
\end{align}
In $\Wtheta$, $\lambda\in[0,1]$ controls the relative weight of users and items. The motivation for the specific choice of $\psi$ is that it appears in scale invariant welfare functions \citep{moulin2003fair}, but other families can be used as long as the functions are \emph{increasing} and \emph{concave}. Monotonicity implies that maxima of $\Wtheta$ are Pareto-efficient. For $\alpha_1<1$ and $\alpha_2<1$, $\Wtheta$ is strictly concave. Then, $\Wtheta$ exhibits \emph{diminishing returns}, which is the key to Lorenz efficiency: an increment in utility for a worse-off user/item increases welfare more than the same increment for a better-off user/item. The effect of the parameters is shown in Fig.~\ref{fig:tradeoffs} (left): For \emph{item fairness} we obtain more item equality by using $\alpha_1<1$ (here, $\alpha_1=0.5$) and incrasing $\lambda$ (see blue solid vs orange dashed curve). The parameter $\alpha_2$ controls \emph{user fairness}: smaller values yield more user utility for the worse-off users at the expense of total utility, with similar item exposure curve (green dotted vs orange dahsed curves).
Let $\Theta = \xSet{(\lambda, \alpha_1, \alpha_2)\in\!(0,1)\times(\!{-}\infty,\!1)^2}$. For every $\theta\in\Theta, \Wtheta$ is strictly concave, and users and items have non-zero weight. We then have (the result is a straightforward consequence of diminishing returns, see Appendix~\ref{sec:proofswelfare}):
\begin{restatable}{proposition}{welfareisfair}\label{prop:efficiency}
$\forall\theta\in\Theta, \forall\rrkopt\in\argmaxPls \Wtheta(\tu(\rrk))$, $\rrkopt$ is Lorenz-efficient.
\end{restatable}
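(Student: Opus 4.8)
The plan is to argue by contradiction and invoke the classical correspondence between generalized Lorenz dominance and sums of increasing concave functions \citep{atkinson1970measurement,shorrocks1983ranking}. Fix $\theta=(\lambda,\alpha_1,\alpha_2)\in\Theta$ and let $\rrkopt\in\argmaxPls\WthetauP$ with optimal profile $\bu:=\tuopt\in\US$, so that $\Wtheta(\bu)=(1-\lambda)\sumU\psi(\bu_i,\alpha_1)+\lambda\sumI\psi(\bu_j,\alpha_2)$. Since $\alpha_1,\alpha_2<1$, a one-line check in each of the three regimes ($\alpha>0$, $\alpha=0$, $\alpha<0$) shows that $\psi(\cdot,\alpha_1)$ and $\psi(\cdot,\alpha_2)$ are both strictly increasing and strictly concave on $\Re_+$; moreover $\lambda>0$ and $1-\lambda>0$ because $\lambda\in(0,1)$. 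These are precisely the ``diminishing returns'' properties alluded to before the statement.

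The tool I would isolate as a lemma is the following. For two utility vectors $\bx,\bx'$ of equal length and $h$ increasing and concave: \emph{(i)} if $\bx'\lorenzw\bx$ then $\sum_k h(\bx'_k)\ge\sum_k h(\bx_k)$; and \emph{(ii)} if in addition $h$ is strictly concave and $\bx'\lorenz\bx$ (strict Lorenz dominance), then the inequality is strict. Part (i) is the standard Atkinson--Shorrocks characterization; the content I really need is the strict part (ii).

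With the lemma in hand the argument is short. Suppose, for contradiction, that $\bu$ is \emph{not} Lorenz-efficient. Then some $\bv\in\US$ satisfies one of the two clauses of the definition. Consider the first, $\bvitems\lorenzw\buitems$ and $\bvusers\lorenz\buusers$ (the second is symmetric, exchanging the roles of users and items and of $\alpha_1,\alpha_2$). Apply (i) to the item coordinates with $h=\psi(\cdot,\alpha_2)$ to get $\sumI\psi(\bv_j,\alpha_2)\ge\sumI\psi(\bu_j,\alpha_2)$, and apply (ii) to the user coordinates with $h=\psi(\cdot,\alpha_1)$ (legitimate since $\psi(\cdot,\alpha_1)$ is strictly concave) to get $\sumU\psi(\bv_i,\alpha_1)>\sumU\psi(\bu_i,\alpha_1)$. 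Multiplying the first inequality by $\lambda>0$, the second by $1-\lambda>0$, and adding yields $\Wtheta(\bv)>\Wtheta(\bu)$. Since $\bv\in\US$, there is $\rrk'\in\rrkS$ with $\tu(\rrk')=\bv$, so $\Wtheta(\tu(\rrk'))>\Wtheta(\tu(\rrkopt))$, contradicting the optimality of $\rrkopt$. Hence $\bu=\tuopt$ is Lorenz-efficient.

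The main obstacle is the strict inequality (ii): weak Lorenz dominance only gives the weak conclusion, and I must ensure that \emph{strict} dominance on the side where $\lorenz$ holds translates into a \emph{strict} welfare gap, since this is what overrides the merely weak gain on the other side. I would establish (ii) by decomposing the passage from $\bx$ to $\bx'$ into finitely many elementary steps---nonnegative increments of a coordinate (Pareto improvements) and Pigou--Dalton progressive transfers---and observing that a strictly increasing $h$ makes each increment strict while a strictly concave $h$ makes each transfer strict; strict Lorenz dominance guarantees that at least one such step is nontrivial. (A self-contained alternative is the integral representation of a concave $h$ as a nonnegative mixture of the functions $x\mapsto\min(x,c)$ plus a linear term, for which $\sum_k\min(\bx'_k,c)\ge\sum_k\min(\bx_k,c)$ is exactly the generalized Lorenz condition, with strictness coming from the mass that strict concavity places on the region where the Lorenz gap is positive.) The remaining steps---the concavity and monotonicity check on $\psi$ and the recombination with the weights $\lambda$ and $1-\lambda$---are routine.
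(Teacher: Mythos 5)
Your proof is correct and takes essentially the same route as the paper's: a proof by contradiction using the fact that sums of increasing, strictly concave functions are weakly (resp.\ strictly) monotone with respect to weak (resp.\ strict) generalized Lorenz dominance, applied blockwise to the user and item coordinates with the positive weights $1-\lambda$ and $\lambda$. The only difference is presentational: the paper cites this monotonicity fact from the cardinal-welfare literature and chains the two inequalities through the intermediate profile $(\bvitems,\buusers)$, whereas you combine the two block inequalities by weighted addition and additionally sketch a self-contained proof of the strict part via Pigou--Dalton decomposition, which is a harmless elaboration rather than a different argument.
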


\paragraph{Relationship to inequality measures} A well-known measure of inequality is the Gini index, defined as $1 - 2\times {\rm AULC}$, where ${\rm AULC}$ is the area under the Lorenz curve. The difference between Lorenz and generalized Lorenz curves is that the former is normalized by the cumulative utility. This difference is fundamental: we can decrease inequalities while dragging everyone's utility to 0. However, this would lead to dominated \emph{generalized} Lorenz curves. Interestingly, for \emph{item-side} fairness, the cumulative exposure is a constant and thus trade-offs between user utility and  item exposure inequality are not really problematic. However, for user-side fairness, the total utility is not constant and reducing inequalities might require dragging the utility of some users down for the benefit of no one.

\paragraph{Additional theoretical results} %In App.~\ref{sec:leximin} we study the utility profiles as $\alpha_1,\alpha_2$ tend to $-\infty$. We show that they tend to leximin-optimal solutions \citep{moulin2003fair}.
In App.~\ref{sec:leximin}, we show that as $\alpha_1,\alpha_2 \rightarrow -\infty,$ utility profiles tend to leximin-optimal solutions \citep{moulin2003fair}.
Leximin optimality corresponds to increasing the utility of the worst-off users/items one a a time, similarly to a lexical order. 
In App. \ref{sec:regretbound}, we present an excess risk bound, which provides theoretical guarantees on the \emph{true} welfare when computing rankings based on \emph{estimated} preferences, depending on the quality of the estimates.

\subsection{Extension to reciprocal recommendation}\label{sec:reciprocal}

In reciprocal recommendation problems such as dating, the users are also items. The notion of fairness simplifies to increasing the utility of the worse-off users, which can in practice be done by boosting the exposure of worse-off users. Our framework above applies readily by taking $\userS=\itemS$ and $n=\card{\userS}$. The critical step however is to redefine the utility of a user to account for the fact that (1) the user utility comes from both the recommendation they receive and who they are recommended to, and (2) users have preferences over who they are recommended to. 

To define this \emph{two-sided utility}, let us denote by $\muij$ the mutual preference value between $i$ and $j$, and our examples follow the common assumption that $\muij=\muji$ (see e.g., \cite{palomares2021reciprocal}). For instance, when recommending CVs to recruiters, $\muij$ can be the probability of interview, while in dating, it can be that of a ``match''. The two-sided utility is then the sum of the user-side utility and item-sided utility of the user:
\begin{align}\label{eq:twosided}
    \overbrace{\vphantom{\sum}\UiP= 
    %\smash{
    \sum_{j\in\itemS}
    %} 
    \muij \rrkij\v}^{
    \substack{\text{user-side utility}\\\text{\emph{($j$ recommended to $i$)}}}
    }
    && \overbrace{\vphantom{\sum}\ViP=
    %\smash{
    \sum_{j\in\userS}
    %} 
    \muij\rrkji\v}^{
    \substack{\text{item-side utility}\\\text{\emph{($i$ recommended to $j$)}}}
    } 
    && \overbrace{\tuiP = \UiP + \ViP}^{\text{(two-sided) utility}} %= \sum_{j=1}^\n \muij(\rrkij+\rrkji)\v.
\end{align}
With this definition of two-sided utility, our previous framework can be readily applied using $\userS=\itemS$. A (two-sided) utility profile $\tu\in\US$ is \emph{Lorenz-efficient} if there is no $\bv\in\US$ such that $\bv\lorenz\bu$. The welfare function simplifies to $\Wthetau = \sum_{i=1}^n \psi(\bu_i, \alpha)$, and Proposition \ref{prop:efficiency} also holds true in this setting: maximizing the welfare function always yields Lorenz-efficient rankings. 

Fig.~\ref{fig:tradeoffs} (right) illustrates how decreasing $\alpha$ increases utilities for the worse-off users at the expense of total utility. It also shows a Lorenz-dominated (unfair) profile, in which all fractions from the worst-off to the better-off users have lower utility.

From now on, we refer to \emph{one-sided} recommendation for non-reciprocal recommendation.

% \section{The unfairness of penalty-based approaches}\label{sec:fairness_exposure}
\section{Comparison to utility/inequality trade-off approaches}\label{sec:fairness_exposure}

As stated in the introduction, leading approaches to fairness in ranking are based on utility/inequality trade-offs. We describe here the representative approaches we consider as baselines in our experiments. We then present theoretical results illustrating the undesirable behavior of some of them.

\subsection{Objective functions}

\paragraph{One-sided recommendation} In one-sided recommendation, the leading approach is to define exposure-based criteria for item fairness  \citep{singh2018fairness,biega2018equity}. The first criterion, \emph{equality of exposure}, aims at equalizing exposure across items. The second one, \emph{quality-weighted exposure}\footnote{We use here the terminology of \citep{wu2021tfrom}. This criterion has also been called ``disparate treatment'' \citep{singh2018fairness}, ``merit-based fairness'' \citep{singh2019policy} and ``equity of attention'' \citep{biega2018equity}.}, which is advocated by many authors, defines the \emph{quality} of an item as the sum of user values $\quaj=\sum_{i\in\userS} \muij$ and aims for item exposure proportional to quality. The motivation of quality-weighted exposure is to take user utilities into account in the extreme case where the constraint is strictly enforced. Interestingly, as we show later, this approach has bad properties in terms of trading off user and item utilities.

In our experiments, we use the standard deviation as a measure of inequality. Denoting by $\totexp=\nusers\norm{v}_1$ the total exposure and by $\totqua = \sum_{j\in\itemS}\quaj$ the total quality:
\begin{align*}
   %\resizebox{0.95\textwidth}{!}{
   %\hspace{-.2cm}
   %$
   \displaystyle
    \substack{\text{\emph{\normalsize quality-weighted }}\\\text{\normalsize\emph{exposure}}}
    ~~
    \Objgquabetau
    = \sum_{i\in\userS} \tui 
    -\beta \sqrt{\Penquau}
    \text{~~with~~} 
    \Penquau= \frac{1}{n}
    \sum\limits_{j\in\itemS} \Big( \tuj - 
    \frac{\quaj\totexp}{\totqua} \Big)^2
    .
    %$}
    %\label{eq:parityutility}
\end{align*}
\begin{align}
    \hspace{-0.4cm}\substack{\text{\emph{\normalsize equality of }}\\\text{\normalsize\emph{exposure}}} && \Objutilbetau &= \sumU \bu_i 
    -\beta \sqrt{\Penu}& \text{with~~} 
    \Penu= 
    \sumI \frac{1}{n}\Big( \bu_j - 
    \frac{1}{\nitems} \sum_{j'\in\itemS}\bu_{j'} \Big)^2
    .\label{eq:parityutility}
\end{align}

Some authors use $\Penpu=\sum_{(j,j')\in\itemS^2} |\frac{\bu_j}{q_j} - \frac{\bu_{j'}}{q_{j'}}|$ instead of $\sqrt{\Penqua}$ \citep{singh2019policy,morik2020controlling,basu2020framework}. $\Penqua$ and $\Penp$ have qualitatively the same behavior. We propose $\Penquau$ as a computationally efficient alternative to $\Penp$, since it involves only a linear number of terms and $\sqrt{\Penqua}$ is convex and differentiable except on $0$. 

\paragraph{Reciprocal recommendation} For reciprocal recommendation, we consider as competing approach a trade-off between total (two-sided) utility and inequality of utilities, as measured by the standard deviation:
\begin{align}
    \hspace{-0.4cm}\substack{\text{\emph{\normalsize equality of }}\\\text{\normalsize\emph{utility}}} && \Objutilbetau &= \sumU \bu_i 
    -\beta \sqrt{\Penu}& \text{with~~} 
    \Penu= 
    \sumI \frac{1}{n}\Big( \bu_j - 
    \frac{1}{\nitems} \sum_{j'\in\itemS}\bu_{j'} \Big)^2
    .\label{eq:parityutility}
\end{align}

\subsection{Inequity and inefficiency of some of the previous approaches}

We point out here to two deficiencies of previous approaches. 

First, for one-sided recommendation, we show that in some cases, compared to the welfare approach with any choice of the parameter $\theta\in\Theta$, quality-weighted exposure leads to the undesirable behavior of \emph{decreasing user utility} while \emph{increasing inequalities of exposure} between items. This is formalized by the proposition below, which uses the following notation: for $\theta\in\Theta$, let $\tutheta = \argmaxUls \Wthetau$, and for $\beta>0$, let $\USgqua = \argmaxUls ~\Objgquabetau$.
\begin{restatable}{proposition}{qualityweighted}
\label{thm:qw_onesided}
%For $\beta>0$, let $\USgqua = \argmaxUls ~\Objgquabetau$. 
The following claims hold irrespective of the choice of $\tugqua\in\USgqua$.

For every $d\in\mathbb{N}_*$ and every $N\in\mathbb{N}_*$, there is a one-sided recommendation problem, with $d+1$ items and $N(d+1)$ users, such that $\forall \theta\in\Theta$, we have:
\begin{align*}
    \big(\exists \beta>0, \tuthetausers \lorenz \tugquausers \text{~and~}
    \tuthetaitems\lorenz \tugquaitems\big)
    && \text{and} && 
    \lim_{\beta\rightarrow\infty} \frac{\sumU \tugquai}{\sumU\tuthetai} \xrightarrow[d\to\infty]{} \frac{5}{6}.
\end{align*}
\end{restatable}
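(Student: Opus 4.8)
The plan is to exhibit, for each $d$ and $N$, an explicit instance that forces the ``popularity-bias'' pathology of quality-weighted exposure, engineered so that the welfare optimum is \emph{the same profile for every} $\theta$. I would take $d+1$ items $\{0,1,\dots,d\}$, partition the $N(d+1)$ users into groups $G_0,\dots,G_d$ of size $N$, and set $\mu_{i0}=1$ for every user (item $0$ is a ``popular'' item liked modestly by all) and $\mu_{ij}=2$ for $i\in G_j$, $j\ge 1$ (item $j$ is a ``niche'' item liked strongly by its own group), all other values being $0$. For transparency I would use the exposure vector $v=(1,0,\dots,0)$, so that the item-side utility of $j$ is exactly the expected number of users ranking $j$ first, total exposure is conserved at $E=N(d+1)$, and user utilities are convex combinations of $\{1,2\}$; any other $v$ only rescales the computation.

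The crucial observation is that the user-utility-maximizing ranking, in which every $G_0$ user puts item $0$ first and every $G_j$ user puts item $j$ first, \emph{already} yields perfectly equal item exposure ($N$ per item) while handing each user their individual maximum ($1$ for $G_0$, $2$ for each $G_j$). Call this feasible profile $\bu^*$. Since the user block of $\Wtheta$ is increasing and $\bu^*$ attains every user's individual maximum, and since the item block is strictly Schur-concave subject to the fixed total $E$ with $\bu^*$ the unique perfectly equal item profile, $\bu^*$ simultaneously maximizes \emph{both} blocks; hence $\tutheta=\bu^*$ for every $\theta\in\Theta$ and $\sumU\tuthetai=N(2d+1)$, independently of $\theta$. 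This is exactly what makes the ``$\forall\theta$'' quantifier tractable.

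Next I would characterize $\tugqua$ as $\beta\to\infty$. Here $q_0=N(d+1)$ while $q_j=2N$, so exposure proportional to quality over-exposes item $0$ (target $\frac{N(d+1)^2}{3d+1}>N$) and under-exposes each niche item (target $\frac{2N(d+1)}{3d+1}<N$ for $d\ge2$); a counting argument pins the $G_0$ users at utility $1$, pushes the niche groups below their maxima, and gives $\lim_{\beta\to\infty}\sumU\tugquai=N(d+1)\frac{5d+1}{3d+1}$. Dividing by $N(2d+1)$ yields $\frac{(d+1)(5d+1)}{(2d+1)(3d+1)}\to\frac56$, which is the second claim. For the Lorenz claims I would take $\beta$ large. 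On the item side, $\bu^*_\itemS$ is perfectly equal and hence Lorenz-dominates any profile with the same total $E$, strictly once the quality-weighted exposures are unequal. On the user side, \emph{every} feasible profile has at most the $N$ users of $G_0$ at utility $\le 1$ and all remaining users at utility $\le 2$, so the cumulative sorted sums of any $\tugqua$ are bounded above by those of $\bu^*$, with a strict gap at the final index because the total user utility under quality-weighted exposure is strictly below $N(2d+1)$ once $\beta$ is large. These bounds use only the feasibility constraints, so they hold irrespective of the chosen $\tugqua\in\USgqua$.

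The main obstacle I anticipate is twofold. First, one must rigorously justify the $\theta$-independence of the welfare optimum: this rests on the instance being built so the two welfare blocks share a common maximizer, and one should verify that $\bu^*$ is genuinely feasible and is the \emph{unique} maximizer of each block (invoking strict concavity of $\Wtheta$ for $\theta\in\Theta$). Second, strict Lorenz dominance is claimed for some finite $\beta$, whereas the clean target profile arises only in the limit; I would argue that for all sufficiently large $\beta$ the strict total-utility gap on the user side and the strict inequality of item exposures both persist, since the item total is conserved exactly for every $\beta$ and the per-block caps $u_i\le 1$ on $G_0$ and $u_i\le 2$ on $G_j$ hold for every feasible profile. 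Finally, the borderline case $d=1$, where $q_0=q_1$ and quality-weighting degenerates to the welfare solution, does not affect the $d\to\infty$ limit and can be absorbed by a minor adjustment of the niche value.
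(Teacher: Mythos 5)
Your proposal follows essentially the same route as the paper's proof: an instance with one broadly-liked item and $d$ niche items (popular-to-niche value ratio $\tfrac12$), for which the utilitarian ranking is simultaneously user-optimal and perfectly item-equal---hence welfare-optimal for every $\theta\in\Theta$---while the quality-weighted exposure target over-exposes the popular item, forcing niche users' utility down and yielding the same $\tfrac56$ limit via the same target-matching computation as $\beta\to\infty$. The only discrepancy is that your construction degenerates at $d=1$ (the two qualities coincide, so quality-weighting changes nothing), which the paper avoids by giving the popular item full value from its own user so that its quality is $1+\tfrac d2>1$ for all $d\geq 1$; your acknowledged fix of adjusting the niche value does repair this, so the gap is minor.
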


Second, in reciprocal recommendation, striving for pure equality can even lead to $0$ utility for \emph{every user}, even that of the worst-off user. More precisely, we show that in some cases, compared to the welfare approach with any choice of parameter $\theta\in\Theta$, there exists $\beta>0$ such that equality of utility has lower utility for every user, eventually leading to 0 utility for everyone in the limit $\beta\rightarrow\infty$.

\begin{restatable}{proposition}{parityutility}\label{prop:parityutility}
For $\beta>0$, let $\USutil = \argmaxU \Objutilbetau$. The claim below holds irrespective of the choice of $\tuutil\in\USutil$. Let $n\geq 5$. There is a reciprocal recommendation task with $n$ users such that:
\begin{align}
    \forall \theta\in\Theta, \tutheta, \exists \beta>0:~~\forall i\in\intint{n}, \tuthetai> \tuutili
    &&
    \text{~and~}
    &&
    \smash{\lim_{\beta\rightarrow\infty}} \sumUls \tuutili = 0.
\end{align}
\end{restatable}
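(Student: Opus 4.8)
The plan is to exhibit one explicit reciprocal instance on which equality of utility collapses to the origin while every welfare optimum stays strictly positive. I would fix the exposure vector to $v=(1,0,\dots,0)$ (admissible since $v_k\ge v_{k+1}\ge 0$) and take binary mutual preferences $\muij=\muji\in\{0,1\}$, identified with a graph $E=\{\{i,j\}:\muij=1\}$. With this $v$ each user $i$ only distributes mass $p_{ij}:=P_{ij1}$ over a single top slot, so writing $x_{ij}:=\muij p_{ij}\ge 0$ (supported on $E$, with $\sum_{j}x_{ij}\le 1$), the two-sided utility becomes $\tu_i=\sum_j x_{ij}+\sum_j x_{ji}$, i.e.\ the $i$-th row-sum plus the $i$-th column-sum of the nonnegative matrix $X=(x_{ij})$. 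This linear reparametrisation reduces the whole statement to a combinatorial question about which profiles $(\tu_i)_i$ are reachable on $E$, and $\US$ is the (compact, convex) image of $\rrkS$ under this linear map.

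For the graph I would use a ``hub--leaves--tail'' structure: vertex $1$ is adjacent to $2,\dots,n-1$; vertices $3,\dots,n-1$ are leaves adjacent only to $1$; and a short tail is attached by making vertex $2$ adjacent to $1$ and to $n$, with vertex $n$ adjacent only to $2$. Every vertex then has a neighbour (so each user can attain positive utility) and a non-neighbour (so the all-zero profile is feasible: if every user points at a non-neighbour then all mass sits on $\mu=0$ pairs, forcing $X=0$ and $\tu=0$). The crux is the identity $\tu_1=(x_{12}+x_{21})+\sum_{j=3}^{n-1}\tu_j$, obtained by regrouping the hub's row/column sums. If a feasible profile is constant, $\tu_i=c$ for all $i$, this forces $x_{12}+x_{21}=(4-n)c$; since the left-hand side is $\ge 0$ and $4-n<0$ for $n\ge 5$, we get $c=0$. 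Hence the only feasible equal profile is $\bu=0$. I expect this degree-bookkeeping step to be the main obstacle: one must design $E$ so that equality is driven to $0$ (which wants an ``unbalanced'' graph) while keeping every vertex non-isolated and with a non-neighbour (which wants it balanced); the tail gadget is exactly what makes the threshold $n\ge 5$ sharp.

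With the instance fixed, the second claim follows from a compactness argument. Since $\US$ is compact and contains the origin, at which $\Pen=0$ and $\sum_i\bu_i=0$, any maximiser $\tuutil$ of $\Objutilbeta$ satisfies $\Objutilbeta(\tuutil)\ge \Objutilbeta(0)=0$, whence $\sqrt{\Pen(\tuutil)}\le \tfrac1\beta\sum_i\tuutili\to 0$ as $\beta\to\infty$ (the total utility is bounded on the compact $\US$). Taking any sequence $\beta_k\to\infty$ and any convergent subsequence of maximisers $\tuutil\to\bu^\star\in\US$, the limit has zero variance, hence is constant, hence equals $0$ by the previous paragraph. As this holds for every accumulation point and every choice of maximiser, $\sum_i\tuutili\to 0$; since utilities are nonnegative, each coordinate $\tuutili\to 0$ as well.

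Finally I would close the first claim. Fix $\theta\in\Theta$ and any $\tutheta\in\argmaxU \Wtheta(\bu)$, where in the reciprocal case $\Wtheta(\bu)=\sum_i\psi(\bu_i,\alpha)$ with $\alpha<1$. Because $\psi'(0^+)=+\infty$ for every $\alpha<1$, and redirecting a little of a null-utility user's mass onto one of its neighbours is feasible and only increases utilities (it raises that user's $\U$ and the neighbour's $\V$ while touching no other coordinate negatively), no maximiser can have a zero coordinate; thus $m:=\min_i\tuthetai>0$. By the second claim there is $\beta>0$ with $\sum_i\tuutili<m$ for every $\tuutil\in\USutil$, and then nonnegativity gives $\tuutili\le\sum_{i'}\tuutil_{i'}<m\le\tuthetai$ for all $i$, uniformly over the choice of maximiser, which is exactly the desired $\forall i,\ \tuthetai>\tuutili$.
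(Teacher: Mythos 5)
Your proof is correct, but it takes a genuinely different route from the paper's. The paper works with a small concrete instance (for $n=5$: $\mu_{i_1i_2}=\mu_{i_1i_3}=1$ and $\mu_{i_4i_5}=1$, extended to larger $n$ by enlarging the group $\{i_4,i_5\}$ into a clique) and analyzes the optima of the penalized objective directly: using symmetry of $i_2,i_3$ and Schur-concavity, any optimum is described by a few probabilities $p,q,q'$, which forces the invariant $\tu_{i_1}(\rrk)=2\,\tu_{i_2}(\rrk)$ at every optimum; equality in the limit $\beta\to\infty$ is therefore attainable only at the all-zero profile, while the welfare optimum coincides with the utilitarian/leximin ranking and gives every user utility at least $3/2$. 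You instead place the obstruction in the feasible set itself: in your hub--leaves--tail graph the identity $\tu_1=(x_{12}+x_{21})+\sum_{j=3}^{n-1}\tu_j$ holds for every feasible profile, so for $n\ge 5$ the only feasible \emph{constant} profile is $\mathbf{0}$; the collapse of the equality-of-utility solutions then follows from a generic compactness argument (since $\beta\sqrt{\Penu}\le\sum_i\bu_i\le M$ for any maximizer, every accumulation point as $\beta\to\infty$ has zero variance, hence is constant, hence is $\mathbf{0}$), and the welfare side needs only $\min_i\tuthetai>0$, which you obtain by a local perturbation rather than by computing the optimum explicitly. What your approach buys: it avoids the symmetry/Schur-concavity characterization of the penalized optima (the most delicate step in the paper's argument), it covers all $n\ge 5$ in a single construction without repeating preference patterns, and uniformity over the choice of maximizer in $\USutil$ falls out automatically from the accumulation-point argument. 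What the paper's approach buys: explicit constants and limiting rankings (every user gets at least $3/2$ under welfare for every $\theta\in\Theta$), which makes the counterexample more illustrative. One cosmetic remark: your appeal to $\psi'(0^+)=+\infty$ is unnecessary---strict monotonicity of $\psi$ already makes the perturbation argument go through, as your own parenthetical shows.
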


\paragraph{Proofs and additional results} All proofs are deferred to App.~\ref{sec:proofs_fairness_exp}, where we provide several additional results regarding the use of quality-weighted exposure and equality of exposure in reciprocal recommendation: We show in Prop.~\ref{thm:fe_reciprocal} that there are cases where both approaches lead to user utility profiles with Lorenz-dominated curves, and significantly lower total user utility than the welfare approach for any choice of the parameters. 

\section{Efficient inference of fair rankings with the Frank-Wolfe algorithm}\label{sec:algo}

% \begin{algorithm}[t]
% \caption{Frank-Wolfe algorithm for \eqref{eq:fwobj}\label{alg:fw}}
% \DontPrintSemicolon
%  \SetKwInOut{Input}{input}\SetKwInOut{Output}{output}
 
% %  \Input{value tensor $\mu$, number of iterations $T$}
% %  \Output{global ranking $\rrk$}
 
%  Let $\rrk^{(0)}\in\rrkS$ \;
%  %\tcp*[r]{e\!.\!g\!.\!,\! lines 3-8 with $\Phii'(1), \Phij'(1)$ instead of $\Phii'(\tuiP), \Phij'(\tujP)$}
%  \For{t=1, \ldots, T}{
%      Let $\forall i,j: \tempmuij = \Phii'(\tui(\rrk^{(t-1)})) \muij + \Phij'(\tui(\rrk^{(t-1)})) \muji$\;
% %  \For{$i\in\userS$}{

% %     Let $\tilde{\sigma}_i$ such that $\tempmu_{i\tilde{\sigma}_i(1)} \geq \tempmu_{i\tilde{\sigma}_i(2)} \geq \ldots \geq \tempmu_{i\tilde{\sigma}_i(\ni)}$
% %  }
%  Let $\tilde{\rrk}\in\rrkS$ such that $\forall i\in\userS, \tilde{\rrk}_i$ sorts $(\tempmuij)_{j\in\itemS}$ is descending order.\;
%  Let $\gamma^{(t)} = \frac{2}{t+2}$ and $\rrk^{(t)} = (1-\gamma^{(t)})\rrk^{(t-1)} + \gamma^{(t)}\tilde{\rrk}$
%  % $\gamma^{(t)} = \argmax\limits_{\gamma\in[0,1]} \tW\big((1-\gamma)\rrk^{(t-1)} + \gamma\tilde{\rrk}\big)$ \tcp*[r]{use $\gamma^{(t)} = \frac{2}{t+2}$ if not tractable.}
%  %Let $\rrk^{(t)} = (1-\gamma^{(t)})\rrk^{(t-1)} + \gamma^{(t)}\tilde{\rrk}$ 
%   }
%   Return $\rrk^{(T)}$.
% \end{algorithm}

We now present our inference algorithm for \eqref{eq:inference}.  Appendix~\ref{sec:fwproof} contains the proofs of this section and describes a similar approach for the objective functions of the previous section. From an abstract perspective, the goal is to find a maximum $\rrk^*$ such that:
\begin{equation}\label{eq:fwobj}
    \rrk^* \in \argmax_{\rrk\in\rrkS} \W(\rrk) \text{~~~with~~~} \W(\rrk) = \sum_{i=1}^n \Phii\!\bigg(\sum_{j=1}^n \muij(\rrkij+\rrkji)\v\bigg)
\end{equation}
where for every $i$, $\Phii:\Re_+\rightarrow\Re$ is concave increasing, $\muij\geq 0$ and $\v$ is a vector of non-negative non-increasing values.
Since $\W$ is concave and $\rrkS$ is defined by equality constraints, the problem above is a convex optimization problem. However, this is a global optimization problem over the rankings of all users, so a naive approach would require $\nusers\nitems^2$ parameters and $2\nusers\nitems$ linear constraints. The same problem arises with the penalties of previous work. In the literature, authors either considered applying the item-fairness constraints to each ranking individually \citep{singh2018fairness,basu2020framework}, which leads to inefficiencies with our definition of utility (see Appendix~\ref{sec:micro}), or resort to heuristics to compute the rankings one by one without guarantees on the trade-offs that are achieved \citep{morik2020controlling,biega2018equity}. 

%Our approach is the first which can study the trade-offs offered by various approaches when item-side utility is computed over all users, which is the only possible definition in reciprocal recommendation.

Our approach is based on the Frank-Wolfe algorithm \citep{frank1956algorithm}, which was previously used in machine learning in e.g., structured output prediction or low-rank matrix completion \citep{jaggi2013revisiting}, but to the best of our knowledge not for ranking. Denoting $\xdp{X}{Y}=\sum_{ijk} X_{ijk} Y_{ijk}$ the dot product between tensors, the algorithm creates iterates $\rrk^{(t)}$ by first computing $\tilde{\rrk} = \argmax_{\rrk\in\rrkS} \xdp{\rrk}{\grad \W(\rrkt)}$ and then updating $\rrk^{(t)} = (1-\gamma^{(t)})\rrk^{(t-1)} + \gamma^{(t)}\tilde{\rrk}$ with $\gamma^{(t)} = \frac{2}{t+2}$ \citep{clarkson2010coresets}. Starting from an initial solution\footnote{In our experiments, we initialize with the utilitarian ranking (Proposition~\ref{prop:utilitarian}).}, the algorithm always stays in the feasible region without any additional projection step. 
%The algorithm shines when $\argmax_{\rrk\in\rrkS} \xdp{\rrk}{\grad \W(\rrkt)}$ can be computed efficiently.  
%
%in particular for simplex-type constraints \citep{clarkson2010coresets}. It has been used in machine learning applications such as structured prediction and matrix factorization, but not in the context of ranking. In the context of simplex constraints, each $\tilde{\rrk}$ is a single element of the simplex. The Frank-Wolfe algorithm then constructs a solution that has a sparse representation. 
Our main contribution of this section is to show that $\argmax_{\rrk\in\rrkS} \xdp{\rrk}{\grad \W(\rrkt)}$ can be computed efficiently, requiring only one sort operation per user after computing the utilities. In the result below, for a ranking tensor $\rrk$ and a user $i$, we denote by $\rkS(\rrki)$ the support of $\rrki$ in ranking space.%
\footnote{Formally, $\rkS(\rrki)=\xSet[\big]{\sigma:\itemS\rightarrow\intint{\nitems}\big | \,\sigma \text{~is one-to-one, and~}\forall j\in\itemS, \rrk_{ij\sigma(j)} > 0}$.} 
%computed efficiently is the following result, which is the key to apply Frank-Wolfe in the context of ranking:  
%, and the resulting algorithm is described in Alg. \ref{alg:fw}.
\begin{restatable}{theorem}{frankwolfemax}\label{lem:fwgrad}
Let $\tempmuij = \Phii'\big(\tu_i(\rrkt)\big) \muij + \Phij'\big(\tu_j(\rrkt)\big) \muji$. 
Let $\tilde{\rrk}$ such that: 

\myrescale{1}{\forall i\in\userS, \forall \tilde{\sigma}_i\in\rkS(\tilde{\rrki})$:~~ $\tilde{\sigma}_i(j)<\tilde{\sigma}_i(j') \implies \tempmu_{ij}\geq \tempmu_{ij'}}.
Then \myrescale{1}{\tilde{\rrk} \in \argmaxPls \xdp{\rrk}{\grad \W(\rrkt)}}.
\end{restatable}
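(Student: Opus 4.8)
The plan is to compute $\grad\W(\rrkt)$ in closed form and then recognize the linear maximization as a separable, per-user rearrangement problem. First I would differentiate. Writing each individual utility as $\tui(\rrk)=\sum_{j}\muij(\rrkij+\rrkji)\v=\sum_{j}\muij\sum_{\ell}(\rrk_{ij\ell}+\rrk_{ji\ell})v_\ell$ and applying the chain rule to $\W=\sum_a\Phi_a(\tu_a)$, I would observe that a fixed entry $\rrk_{ij\ell}$ occurs in exactly two of the summands $\Phi_a(\tu_a)$: in $\tui$ through the user-side term $\muij\rrkij\v$ (contributing $\muij v_\ell$), and in $\tuj$ through the item-side term $\muji\rrkij\v$ (the $j'{=}i$ summand of $\tuj$, contributing $\muji v_\ell$). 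Collecting these yields $(\grad\W(\rrkt))_{ij\ell}=v_\ell\big(\Phii'(\tui(\rrkt))\muij+\Phij'(\tuj(\rrkt))\muji\big)=v_\ell\,\tempmuij$. This bookkeeping over the reciprocal pair $(\rrkij,\rrkji)$ is the one step that requires care, and is the main (if modest) obstacle; everything afterwards is standard.

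Substituting this gradient, the linear objective becomes $\xdp{\rrk}{\grad\W(\rrkt)}=\sum_{i\in\userS}\sum_{j\in\itemS}\sum_{\ell}\rrk_{ij\ell}\,v_\ell\,\tempmuij$, a sum in which each term depends only on the single block $\rrki$. Since the constraints defining $\rrkS$ couple entries only within a block (each $\rrki$ is doubly stochastic, i.e. $\sum_\ell\rrk_{ij\ell}=1$ for every item and $\sum_j\rrk_{ij\ell}=1$ for every rank), the maximization decouples completely across users, and it suffices to maximize $\sum_{j}\sum_{\ell}\rrk_{ij\ell}\,v_\ell\,\tempmuij$ over doubly stochastic $\rrki$, independently for each $i\in\userS$.

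For a fixed user I would invoke the Birkhoff--von Neumann theorem: the polytope of doubly stochastic matrices is the convex hull of the permutation matrices, so the linear functional attains its maximum at some permutation $\sigma_i$, where its value is $\sum_j v_{\sigma_i(j)}\tempmuij$. I would then apply the rearrangement inequality: because $\v$ is non-negative and non-increasing, $\sum_j v_{\sigma_i(j)}\tempmuij$ is maximized exactly by the permutations that assign the smallest ranks (the largest exposure weights) to the items with the largest $\tempmuij$ --- that is, by those $\sigma_i$ satisfying $\sigma_i(j)<\sigma_i(j')\implies\tempmuij\geq\tempmu_{ij'}$.

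Finally I would conclude that any $\tilde\rrk$ all of whose supporting permutations in $\rkS(\tilde{\rrki})$ obey this sorting condition attains the per-user maximum block by block, hence the global maximum of $\xdp{\rrk}{\grad\W(\rrkt)}$ over $\rrkS$, giving $\tilde\rrk\in\argmaxPls\xdp{\rrk}{\grad\W(\rrkt)}$. Ties in $\tempmuij$ are immaterial, since any tie-breaking produces the same objective value; this is precisely why the statement characterizes a whole set of maximizers, and why computing one $\tilde\rrk$ needs only a single sort of the $\tempmuij$ per user.
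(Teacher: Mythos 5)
Your proposal is correct and follows essentially the same route as the paper: the paper's Lemma~\ref{lem:derivative} performs the identical chain-rule bookkeeping over the reciprocal pair $(\rrkij,\rrkji)$ to get $(\grad\W(\rrkt))_{ijk}=\tempmuij\v_k$, and its Lemma~\ref{lem:ranking} then exploits the same per-user decoupling and the Hardy--Littlewood rearrangement inequality to characterize the maximizers. The only cosmetic difference is that you invoke Birkhoff--von Neumann explicitly to reduce the linear maximization over each doubly stochastic block to permutations, a step the paper leaves implicit in its support-based argument.
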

%In other words, the maximization step in the Frank-Wolfe algoithm only requires one sort operation per user after computing the utilities. 
Moreover, it produces a compact representation of the stochastic ranking as a weighted sum of permutation matrices. The number of iterations of the algorithm allows to control the trade-off between memory requirements and accuracy of the solution.
%in the form of a Birkhoff-von Neumann decomposition (i.e., a decomposition of bistochastic matrices as a convex sum of permutations matrices), which allows us to easily sample rankings without running the decomposition as a postprocessing contrarily to previous work \citep{singh2018fairness}. Note that if the exposure weights $\v$ have at most $K$ nonzero values, we only need a top-K ranking in Lemma~\ref{lem:fwgrad}. 
Using previous convergence results for the Frank-Wolfe algorithm \citep{clarkson2010coresets}, assuming each $\Phii''$ is bounded, we have:
\begin{restatable}{proposition}{fwcomplexity}\label{prop:fwcomplexity} Let $B=\max\limits_{i\in\intint{n}} \norm{\Phi_i''}_{\infty}$ and $U = \max\limits_{\bu\in\US} \norm{\bu}_{2}^2$. Let $K$ be the maximum index of a nonzero value in $\v$ (or $\nitems$). Then $\forall t\geq 1, \W(\rrk^{(t)}) \geq \max\limits_{\rrk\in\rrkS} \W(\rrk) - O(\frac{B U}{t})$. Moreover, for each user, an iteration costs $O(\nitems\ln K)$ operations and requires $O(K)$ additional bytes of storage.
\end{restatable}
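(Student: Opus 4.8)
The plan is to treat the proposition as two logically independent claims — the $O(BU/t)$ optimality gap and the per-iteration cost of $O(\nitems\ln K)$ time with $O(K)$ additional storage — and to prove each by reducing it to a fact already available: the standard Frank--Wolfe guarantee of \citet{clarkson2010coresets} for the first, and Theorem~\ref{lem:fwgrad} for the second.

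For the convergence rate, I would first note that $\rrkS$ is a compact convex set and, on the nonnegative orthant where the utilities live, $\W$ is concave and twice differentiable, so Clarkson's theorem applies: with step size $\gamma^{(t)}=\frac{2}{t+2}$ the iterates satisfy $\max_{\rrk\in\rrkS}\W(\rrk)-\W(\rrkt)\leq \frac{2C_\W}{t+2}$, where $C_\W$ is the curvature constant of $\W$. The whole content is then to show $C_\W=O(BU)$. Here I would exploit the structure $\W(\rrk)=g(\tu(\rrk))$, where $\rrk\mapsto\tu(\rrk)$ is a \emph{linear} map into $\Re_+^n$ (with constant Jacobian $J$) and $g(\bu)=\sum_{i}\Phii(\bu_i)$ is separable, so that $\nabla^2\W=J^\top\nabla^2 g\,J$ with $\nabla^2 g$ diagonal. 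Consequently, for any two vertices $\rrk,\tilde\rrk\in\rrkS$ and any point $\rrk_0$ on the segment between them,
\[
\big|(\tilde\rrk-\rrk)^\top\nabla^2\W(\rrk_0)(\tilde\rrk-\rrk)\big| = \Big|\sum_{i}\Phii''\big(\tui(\rrk_0)\big)\big(\tui(\tilde\rrk)-\tui(\rrk)\big)^2\Big| \leq B\,\|\tu(\tilde\rrk)-\tu(\rrk)\|_2^2 \leq 4BU,
\]
using $J(\tilde\rrk-\rrk)=\tu(\tilde\rrk)-\tu(\rrk)$ and $\|\tu(\tilde\rrk)-\tu(\rrk)\|_2\leq\|\tu(\tilde\rrk)\|_2+\|\tu(\rrk)\|_2\leq 2\sqrt U$. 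Since the curvature constant is (up to a constant) the supremum of exactly such second-order terms over the domain, $C_\W\leq 4BU$, and substituting gives $\W(\rrkt)\geq\max_{\rrk}\W(\rrk)-O(BU/t)$.

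For the per-iteration cost, I would start from Theorem~\ref{lem:fwgrad}, which reduces the oracle $\argmax_{\rrk\in\rrkS}\xdp{\rrk}{\grad\W(\rrkt)}$ to, for each user $i$, ranking the items in decreasing order of the scalar $\tempmuij=\Phii'(\tu_i)\muij+\Phij'(\tu_j)\muji$. Computing all $\tempmuij$ for a fixed $i$ is $O(\nitems)$ once the shared quantities $\tu_j(\rrkt)$ and $\Phij'(\tu_j)$ are maintained incrementally across iterations. The key observation is that $\grad\W(\rrkt)_{ijk}=\tempmuij\,\v_k$, so ranks beyond $K$ carry weight $\v_k=0$ and contribute nothing to the oracle objective; hence only the $K$ largest $\tempmuij$ need to be identified and sorted, which a single pass with a size-$K$ min-heap achieves in $O(\nitems\ln K)$ — this dominates the per-user work, since the linear utility update $\tu(\rrkt)=(1-\gamma^{(t)})\tu(\rrk^{(t-1)})+\gamma^{(t)}\tu(\tilde\rrk)$ touches only the $O(K)$ ranked entries. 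For storage, the iterate is kept in the compact weighted-sum-of-permutations form noted after the theorem; each iteration appends one new vertex per user, and since only its top-$K$ restriction is relevant, that vertex together with the heap require $O(K)$ additional bytes.

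The main obstacle is the curvature bound, specifically making the chain-rule reduction rigorous: one must check that the intermediate point $\rrk_0$ used in the second-order estimate yields a feasible utility profile $\tu(\rrk_0)\in\US\subseteq\Re_+^n$ (which holds because $\rrkS$ and hence $\US$ are convex), and that $\Phii''$ is evaluated only where $\|\Phii''\|_\infty\leq B$ holds — guaranteed since $\rrkij,\muij,\v_k\geq 0$ keep all utilities nonnegative and the proposition assumes each $\Phii''$ bounded (ensured in practice by the $\eta$-shift noted in the footnote to the welfare function). Everything else is bookkeeping: the diameter estimate is a one-line triangle inequality, and the time and storage claims follow directly from the top-$K$ heap argument together with the observation that $\v_k=0$ for $k>K$.
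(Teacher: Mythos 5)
Your proof is correct and takes essentially the same route as the paper's: the Clarkson Frank--Wolfe guarantee with the curvature constant bounded by pushing the second-order term through the linear map $\rrk\mapsto\tu(\rrk)$ (your explicit $J^\top\nabla^2 g\,J$ computation is exactly what the paper's bound $C_W\leq \frac{B}{2}\sup_{\bu,\bv\in\US}\norm{\bu-\bv}_2^2\leq 2BU$ encodes, and your constant $4BU$ versus the paper's $2BU$ is immaterial for the stated $O(BU/t)$ rate), followed by the top-$K$ min-heap argument for the $O(\nitems\ln K)$ time and $O(K)$ storage. Your direct observation that $\grad \W(\rrkt)_{ijk}=\tempmuij\,\v_k$ vanishes for ranks $k>K$ is precisely the content of Lemma~\ref{lem:ranking}, which the paper invokes for the same purpose.
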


\section{Experiments}\label{sec:experiments}\label{sec:experiments}
\begin{figure}[t]
    \centering
    \begin{subfigure}[b]{0.55\textwidth}
         \centering
         \includegraphics[width=\linewidth]{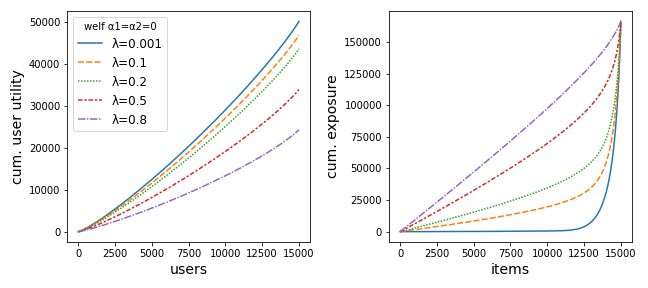}
         \caption{Examples of generalized Lorenz curves achieved by \welf.}
         \label{fig:lastfmsall_tradeoffs_welf}
     \end{subfigure}\begin{subfigure}[b]{0.245\textwidth}
         \centering
         \includegraphics[width=\linewidth]{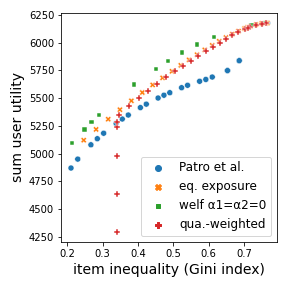}
         \caption{Summary of trade-offs}
         \label{fig:lastfmsmall_summary_tradeoffs}
     \end{subfigure}\begin{subfigure}[b]{0.20\textwidth}
         \centering
         \includegraphics[width=\linewidth]{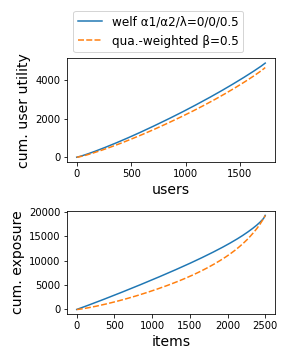}
         \caption{Dominated curve}
         \label{fig:lastfmsmall_qua_dominated}
     \end{subfigure}
    \caption{Summary of results on \lastfmsmall, focusing on the user utility/item inequality trade-off.}
    \label{fig:resultslastfm}
\end{figure}

\subsection{One-sided recommendation}\label{sec:experiments:1sided}

We first present experiments on movie recommendation task. We report here our experiments with the \lastfmsmall dataset \citep{Cantador:RecSys2011,patro2020fairrec}, which contains the music listening histories of $1.9k$ users. We present in App.~\ref{sec:lastfm360k} experiments on a larger portion of the Last.fm dataset, and in App.~\ref{sec:mlm} results using the MovieLens-20m dataset \cite{harper2015movielens}. Our results are qualitatively similar across the three datasets.

We select the top $2500$ items most listened to, and estimate preferences with a matrix factorization algorithm using a random sample of $80\%$ of the data. All experiments are carried out with three repetitions for this subsample. The details of the experimental protocol are in App.~\ref{app:experiments:1sided}. Since the goal is to analyze the behavior of the ranking algorithms rather than the quality of the preference estimates, we consider the estimated preferences as ground truth when computing user utilities and comparing methods, following previous work.
We compare our welfare approach (\welf) to three baselines. The first one is the algorithm of \citep{patro2020fairrec} (referred to as \patro in the figures), who consider envy-freeness for user-side fairness and, for item-side fairness, a constraint that the minimum exposure of an item is $\beta \frac{E}{\nitems}$ where $\beta$ is the trade-off parameter. The other baselines are quality-weighted exposure (\qualexpo) and equality of exposure (\equalexpo) as described in Sec.~\ref{sec:fairness_exposure}.

%the two-sided fairness approach of \citet{patro2020fairrec}, as well as quality-weighted exposure and equality of exposure (see Section~\ref{sec:fairness_exposure}). Since the goal is to analyze the behavior of the ranking algorithms rather than the quality of the preference estimates, we consider the estimated preferences as ground truth when computing user utilities and comparing methods, following previous work.

\paragraph{Item-side fairness} We first study in isolation item-side fairness, defined as improving the exposure of the worse-off item (producers). To summarize the trade-offs, we show the trade-offs by 
looking at exposure inequalities as measured by the Gini index (see Sec.~\ref{sec:welfare}). The results are given in Fig.~\ref{fig:resultslastfm}:
\begin{itemize}[leftmargin=*,noitemsep,nolistsep]
\item \emph{Generating user utility/item inequality trade-offs} is performed with our approach by keeping $\alpha_1=\alpha_2=0$ and varying the relative weight of items $\lambda$. Fig.~\ref{fig:lastfmsall_tradeoffs_welf} plots some trade-offs achieved by our approach.  As expected, the user utility curve degrades as we increase the weight of items, while at the same time the curve of item exposure moves towards the straight line, which corresponds to strict equality of exposure. Fig.~\ref{fig:appendix_lastfmsmall_all_trade_offs} in the appendix provides analogous curves for all methods, obtained by varying the weight $\beta$ of the inequality measure. % The curves look qualitatively similar, except for \qualexpo which does not reach equal exposure since it has specific exposure targets. 
\item {\it \qualexpo yields unfair trade-offs} Fig.~\ref{fig:lastfmsmall_qua_dominated} shows a \welf ranking that dominates a \qualexpo ranking on both user and item curves. This is in line with the discussion of Section~\ref{sec:fairness_exposure}, \qualexpo can lead to unfair rankings on utility/item inequality trade-offs. 
\item \emph{\welf dominates the user utility/item inequality (Gini) trade-offs} as seen on Fig. \ref{fig:lastfmsmall_summary_tradeoffs}: while all methods have the same total user utility when accepting high item inequality, \welf dominates \patro, \equalexpo and \qualexpo as soon as Gini $\leq 0.5$. Note, however, that the Gini index is only one measure of inequality. When measuring item inequalities with the standard deviation, \equalexpo becomes optimal since our implementation optimizes a trade-off with this measure (see Fig.~\ref{fig:appendix-lastfm-userfairness-std} in App.~\ref{app:experiments:1sided}). Overall, \welf and \equalexpo yield different fair trade-offs.
%Interestingly, \equalexpo has a relatively good behavior on these trade-offs.  
\end{itemize}

\paragraph{Two-sided fairness} Fig.~\ref{fig:lastfm-userfairness} shows the effect of the user curvature $\alpha_1\in\{-2,0,1\}$, keeping $\alpha_2=0$.  Fig.~\ref{fig:appendix-lastfm-userfairness-std} in App.~\ref{app:experiments:1sided} shows similar plots when the item inequality is measured by the standard deviation rather than the Gini index.
\begin{itemize}[leftmargin=*,noitemsep,nolistsep]
    \item \emph{Smaller $\alpha_1$ reduce user inequalities at the expense of total user utility, at various levels of item inequality}. This is observed by comparing the results for $\alpha_1\in\{-2,0,1\}$ in Fig.~\ref{fig:lastfmsmall_userfairness_giniutility} and Fig.~\ref{fig:lastfmsmall_userfairness_utility}.
    \item \emph{\welf $\alpha_1=0$ is better than \patro}, which can be seen by jointly looking at Fig.~\ref{fig:lastfmsmall_userfairness_utilitytenp}, \ref{fig:lastfmsmall_userfairness_utilitytwfp} and Fig.~\ref{fig:lastfmsmall_userfairness_utility} which give the cumulative utility at different points of the Lorenz curve ($10\%$, $25\%$ and $100\%$ of the users respectively). We observe that \welf $\alpha_1=0$ is similar to \patro at the $10\%$ and $25\%$ levels, but has higher total utility. Example curves are given in Fig.~\ref{fig:appendix_welf_nearlydom_patro_1} and \ref{fig:appendix_welf_nearlydom_patro_2} which plot \welf $\alpha_1=0$ and \patro at two levels of item inequality. \welf $\alpha_1=0$ obtains similar curves to \patro, except that it performs better at the end of the curve. A similar comparison can be made with \welf $\alpha_1=1$ and \equalexpo.
    \item \emph{More user inequalities is not necessarily unfair} as seen in Fig. \ref{fig:lastfmsmall_userfairness_giniutility} comparing \welf $\alpha_1=0$ and \patro. We observe that \welf $\alpha_1=0$ has slightly higher Gini index, but this is not unfair: as seen in Fig.~\ref{fig:appendix_welf_nearlydom_patro_1} and \ref{fig:appendix_welf_nearlydom_patro_2}, this is due to the higher utility at the end of the generalized Lorenz curve of \welf, but the worse-off users have similar utilities with \welf and \patro.
\end{itemize}

\begin{figure}[t]
    \centering
%     \caption{ }
%     \label{fig:welf_nearlydom_patro}
% \end{figure}
% \begin{figure}
%     \centering
    \begin{subfigure}[b]{0.23\textwidth}
         \centering
          \includegraphics[width=\linewidth]{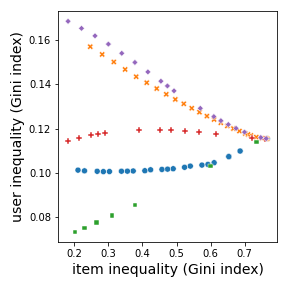}
         \caption{User inequality.~\\~}
         \label{fig:lastfmsmall_userfairness_giniutility}
     \end{subfigure}~~\begin{subfigure}[b]{0.23\textwidth}
         \centering
        \includegraphics[width=\linewidth]{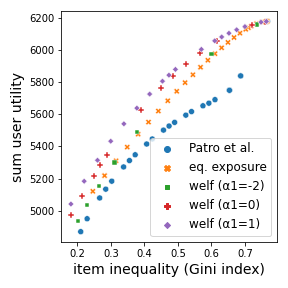}
         \caption{Total user utility.~\\~}
         \label{fig:lastfmsmall_userfairness_utility}
     \end{subfigure}~~\begin{subfigure}[b]{0.23\textwidth}
         \centering
         \includegraphics[width=\linewidth]{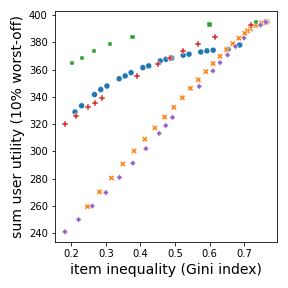}
         \caption{Cumulative utility at 10\% worse-off users.}
         \label{fig:lastfmsmall_userfairness_utilitytenp}
     \end{subfigure}~~
     \begin{subfigure}[b]{0.23\textwidth}
         \centering
         \includegraphics[width=\linewidth]{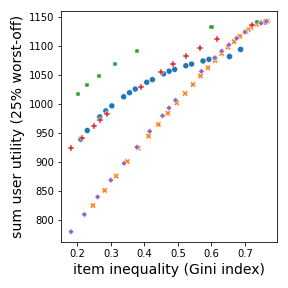}
         \caption{Cumulative utility at 25\% worse-off users.}
         \label{fig:lastfmsmall_userfairness_utilitytwfp}
     \end{subfigure}
     \begin{subfigure}[b]{0.45\linewidth}
    \centering
    \includegraphics[width=\linewidth]{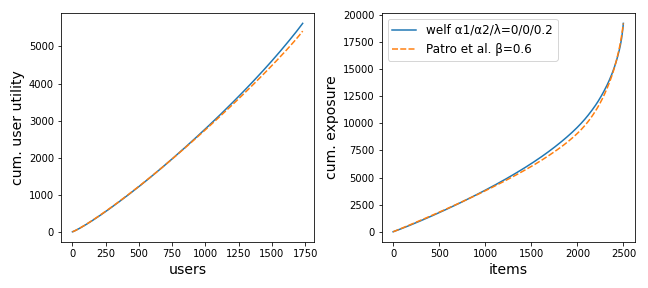}
    \caption{Item inequality $\approx 0.4$.}
    \label{fig:appendix_welf_nearlydom_patro_1}
    \end{subfigure}\quad
    \begin{subfigure}[b]{0.45\linewidth}
    \centering
    \includegraphics[width=\linewidth]{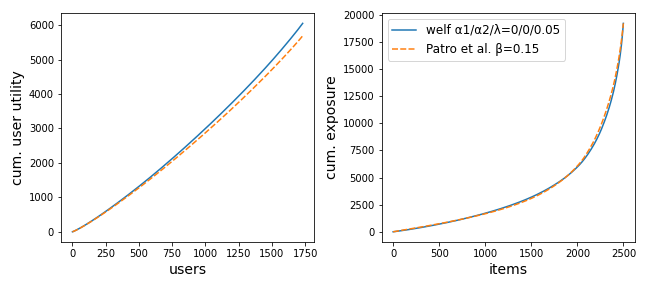}
    \caption{Item inequality $\approx 0.6$.}
    \label{fig:appendix_welf_nearlydom_patro_2}
    \end{subfigure}
    \caption{Summary of results on \lastfmsmall for two-sided fairness: effect of varying $\alpha_1$. 
    % (top) We plot different statistics of the distribution of user utilities, keeping the x-axis the same (Gini index of item exposure). (bottom) Examples where \welf  $\alpha_1=\alpha_2=0$ nearly dominates \patro.
    %The curves caption correspond to the points at their respective item inequality in Fig.~\ref{fig:lastfm-userfairness}.
    }
    \label{fig:lastfm-userfairness}
\end{figure}

\subsection{Reciprocal recommendation}\label{sec:experiments:reciprocal}

We now present results on a reciprocal recommendation task, where  fairness refers to increasing the utility of the worse-off users 
 (this can be done by boosting their exposure at the expense of total utility). 
Since there is no standard benchmark for reciprocal recommendation, we generate an artificial task based on the Higgs Twitter dataset \citep{de2013anatomy}, which contains follower links, and address the task of finding mutual followers (i.e., ``matches''). We keep users having at least $20$ mutual links, resulting in a subset of $13$k users. We build estimated match probabilities using matrix factorization. The experimental protocol is detailed in App.~\ref{app:experiments:reciprocal}. We also present in App.~\ref{sec:epinions} additional experiments using the Epinions dataset \cite{richardson2003trust}. The results are qualitatively similar.

%Note that in this case, the items are also the users, so we use the two-sided utility defined in Section~\ref{sec:reciprocal} and the welfare function has a single parameter $\alpha$. 
Our main baseline is equal utility (\equalu) defined in Section~\ref{sec:fairness_exposure}.
% , which optimizes a trade-off between total two-sided utility and inequality in two-sided utility. 
We also compare to quality-weighted exposure, and equality of exposure as baselines that ignore the reciprocal nature of the task.
The results are summarized in Fig.~\ref{fig:twitter-uu}:
\begin{itemize}[leftmargin=*,noitemsep,nolistsep]
    \item \emph{Example of trade-offs obtained by varying $\alpha$} are plotted in Fig.~\ref{fig:twitter-uu}a. As $\alpha$ decreases, the utility increases for the worse-off users at the expense of better-off users. We note that increasing the utility of worse-off users has a massive cost on total user utility: looking at the exact numbers we observe that $\alpha=-5$ has more than doubled the cumulative utility of the 10\% worse off users compared to $\alpha=1$ (120 vs 280), but at the cost of more than 60\% of the total utility ($17$k vs $6.4$k). Fig.~\ref{fig:appendix_lastfmsmall_all_trade_offs} in Appendix~\ref{app:experiments:reciprocal} contains plots of the trade-offs achieved by the other methods.
    \item \emph{\qualexpo and \equalexpo are dominated by \welf on a large range of hyperparameters.} An example is given in Fig.~\ref{fig:twitter-uu}b, where \welf $\alpha=0.5$ already dominates some of their models, even though in this region of $\alpha$ there is little focus on worse-off users. More generally, all values of $\beta\geq 0.1$ for \qualexpo and \equalexpo lead to rankings with dominated curves. This is expected since they ignore the reciprocal nature of the task. 
    \item \emph{\equalu is dominated by \welf near strict equality} as illustrated in Fig.~\ref{fig:twitter-uu}c: for large values of $\beta$, it is not possible to increase the utility of the worse off users, and \equalu only drags utility of better-off users down.
    \item \emph{welf is more effective at increasing utility of the worse-off users} as can be seen in Fig.~\ref{fig:twitter-uu}e-g, which plots the total utility as a function of the cumulative utility at different points of the Lorenz curve ($10\%$, $20\%$, $50\%$ worse-off users respectively). For total utilities larger than $50\%$ of the maximum achievable, \welf significantly dominates \equalu in terms of utility of worse-off users (10\% and 25\%) at a given level of total utility. \welf also dominates \equalu on the $50\%$ worse-off users (Fig.~\ref{fig:twitter-uu}h) in the interesting region where the total utility is within 20\% of the maximum. 
    \item \emph{More inequality is not necessarily unfair} As shown in Fig.~\ref{fig:twitter-uu}d, we see that for the same utility for the $10\%$ worse-off users, \welf models have higher inequalities than \equalu. As seen before, this higher inequality is due to a higher total utility (and higher total utilities for the $25\%$ worse-off users. The analysis of these Lorenz curves allow us to conclude that these larger inequalities are not due to unfairness. They arise because \welf optimizes the utility of the worse-off users at lower cost in terms of average utility than \equalu.
\end{itemize}

\begin{figure}[t]
\centering
    \begin{subfigure}[b]{0.23\textwidth}
         \centering
         \includegraphics[width=\linewidth]{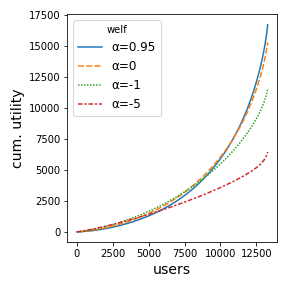}
         \caption{Example trade-offs achieved by \welf.}
         \label{fig:twitter-trafe-offs-welf}
     \end{subfigure}~~
     \begin{subfigure}[b]{0.23\textwidth}
         \centering
         \includegraphics[width=\linewidth]{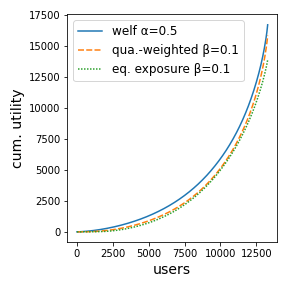}
         \caption{\welf dominates exposure-based methods.}
         \label{fig:twitter-welf-dom-quaexpo}
     \end{subfigure}~~
     \begin{subfigure}[b]{0.23\textwidth}
         \centering
         \includegraphics[width=\linewidth]{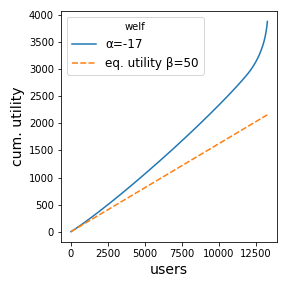}
         \caption{\welf dominates eq. utility near strict equality.}
         \label{fig:twitter-welf-dom-util}
     \end{subfigure}~~
%      \begin{subfigure}[b]{0.23\textwidth}
%          \centering
%          \includegraphics[width=\linewidth]{}
%          \caption{Comparing generalized Lorenz curves.}
%          \label{fig:welf-vs-util}
%      \end{subfigure}
% %     \label{fig:twitter-uu}
% % \end{figure}

% \begin{figure}
    \begin{subfigure}[b]{0.23\textwidth}
         \centering
         \includegraphics[width=\linewidth]{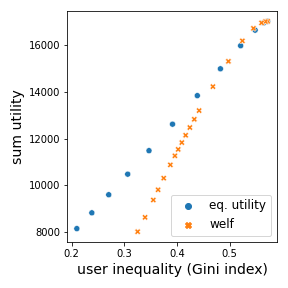}
         \caption{Total utility vs inequality.}
         \label{fig:twitter-uu-vs-ineq}
     \end{subfigure}~~
     \begin{subfigure}[b]{0.23\textwidth}
         \centering
         \includegraphics[width=\linewidth]{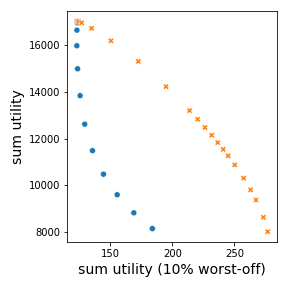}
         \caption{Total utility vs utility of 10\% worse-off users.}
         \label{fig:twitter-uu-first10}
     \end{subfigure}~~
     \begin{subfigure}[b]{0.23\textwidth}
         \centering
         \includegraphics[width=\linewidth]{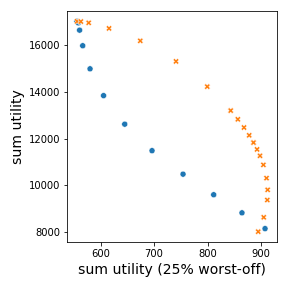}
         \caption{Total utility vs utility of 25\% worse-off users.}
         \label{fig:twitter-uu-first25}
     \end{subfigure}~~
     \begin{subfigure}[b]{0.23\textwidth}
         \centering
         \includegraphics[width=\linewidth]{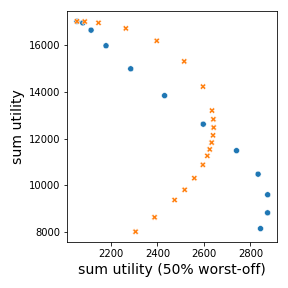}
         \caption{Total utility vs utility of 50\% worse-off users.}
         \label{fig:twitter-uu-first50}
     \end{subfigure}
    \caption{Results on the twitter dataset.}
    %The plot (d) is misleading showing that \welf  optimizes more efficiently (in terms of total utility) the utility of the $25\%$ worse-off users.}
    \label{fig:twitter-uu}
\end{figure}

\section{Related work}\label{sec:related}

The question of fairness in rankings originated from independent audits on recommender systems or search engines, which showed that results could exhibit bias against relevant social groups 
%by representing or exagerating stereotypes
\citep{sweeney2013discrimination,kay2015unequal,hannak2014measuring,mehrotra2017auditing,lambrecht2019algorithmic}
%\citep{sweeney2013discrimination,hannak2014measuring,kay2015unequal}. 
Our work follows the subsequent work on ranking algorithms that promote fairness of exposure for individual or sensitive groups of items %\citep{celis2017ranking,biega2018equity,singh2018fairness,morik2020controlling,burke2017multisided,wang2020fairness,basu2020framework,patro2020fairrec,wang2020fairness,wu2021tfrom}
\citep{celis2017ranking,burke2017multisided,biega2018equity,singh2018fairness,morik2020controlling,zehlike2020reducing}. The goal is often to prevent winner-take-all effects, combat popularity bias \citep{abdollahpouri2019unfairness} or promote smaller producers \citep{liu2019personalized,mehrotra2018towards}. Section~\ref{sec:fairness_exposure} is devoted to the comparison with this type of approaches.
%main line of these works.
%
Most of these works use a notion of fairness oriented towards items only. Towards two-sided fairness, \citet{wang2020fairness} promote user-side fairness using concave functions of user utilities, similarly to us. Other works use equality constraints to define user-side fairness \citep{basu2020framework,wu2021tfrom}. These three approaches rely on the definitions of item-side fairness discussed in Section~\ref{sec:fairness_exposure}. \citet{patro2020fairrec} generate rankings that are envy-free on the user side, and guarantees the fair min-share for items. This approach is not amenable to controllable trade-offs between user and item utilities.

We are the first to address one-sided and reciprocal recommendation within the same framework. There is less existing work studying the fairness of rankings in the reciprocal setting. \citet{xia2019we} aim at equalizing user utility between groups, which suffers from the problems discussed in Section~\ref{sec:fairness_exposure}.
\citet{jia2018online} generate rankings using a welfare function approach, but optimizing only the utility of users \emph{being recommended}.
%\footnote{Since \citet{jia2018online} apply different recommendation algorithms to different groups of users, it does not clearly address the overall fairness of the platform.} 
\citet{paraschakis2020matchmaking} postprocess rankings to correct for inconsistencies between estimated and declared preferences of users. %We do not aim at correcting biases in preference estimates through post-processing. 
In contrast, we aim at fair trade-offs between user and item utilities, under the assumption that biases in the preference estimates have been addressed earlier in the recommendation pipeline. Fairness is also studied in the context of ridesharing applications \citep{wolfson2017fairness,nixie2019balancing,nanda2020balancing}, but they address matching rather than ranking problems.

There is growing interest in making the relationship between fairness in machine learning and social choice theory \citep{heidari2018fairness,ustun2019fairness,balcan2018envy,golz2019paradoxes,hossain2020designing,chakraborty2019equality,DBLP:journals/corr/abs-2104-14527,finocchiaro2021bridging}, and welfare economics in particular \citep{speicher2018unified,hu2020fair,kleinberg2018algorithmic,lee2021formalising,zimmer2021learning}. In line with \citet{hu2020fair}, who focused on classification and parity penalties, we argue that Pareto-efficiency should be part of fairness assessments. We are the first to propose concave welfare functions and Lorenz dominance to address two-sided fairness in recommendation.

\section{Conclusion}\label{sec:discussion}

We view fairness in rankings as optimizing the distribution of user and item utilities, giving priority to the worse-off. Following this view, we defined fair rankings as having non-dominated generalized Lorenz curves of user and item utilities, and develop a new conceptual and algorithmic framework for fair ranking. The generality of the approach is showcased on several recommendation tasks, including reciprocal recommendation.

%The framework provides tools to compare the different trade-offs achieved by various approaches. Most noticeably, we showed that notions of quality-weighted exposure fare poorly in terms of trading off user and item inequalities, both theoretically and empirically. Experimentally, our results show that the welfare function approach achieves higher average user utility at given utility levels of the worse-off, both on one-sided and reciprocal recommendation tasks.

The expected positive societal impact of this work is to provide more principled approaches to mediating between several parties on a recommendation platform. Yet, we did not address several questions that are critical for the deployment of our approach. In particular, true user preferences are often not directly available, and we only observe proxies to them, such as clicks or likes. Second, interpersonal comparisons of utilities are critical in this work. It is thus necessary to make sure that the proxies we choose lead to meaningful comparisons of utilities between users. Third, estimating preferences or their proxies is itself not trivial in recommendation because of partial observability. The true fairness of our approach is bound to a careful analysis of (at least) these additional steps. 

\begin{ack}
We thank David Lopez-Paz, Jérôme Lang, as well as the anonymous NeurIPS reviewers for their constructive feedback on early versions of the paper. 
\end{ack}

\bibliography{references}
\bibliographystyle{abbrvnat}

%%%%%%%%%%%%%%%%%%%%%%%%%%%%%%%%%%%%%%%%%%%%%%%%%%%%%%%%%%%%

\appendix
\newpage
\section{Outline of the appendices}

These appendices are structured as follows:
\begin{itemize}
    \item In Appendix~\ref{sec:groups}, we present how our fairness framework can be applied to sensitive groups of users or categories of items. 
    \item In Appendix~\ref{sec:proofswelfare}, we present a deeper analysis of the trade-offs achieved by the welfare approach. We also provide a theoretical guarantee relating the true welfare obtained by maximizing the welfare using estimated preferences, depending on the quality of the estimates.
    \item In Appendix~\ref{sec:proofs_fairness_exp}, we present the proofs for the theoretical results comparing our results and previous criteria of fairness in rankings. In addition, in Appendix~\ref{sec:parity_fairness_exposure}, we describe how to extend the criteria of equality of exposure and quality-weighted exposure in a reciprocal recommendation setting. This is the extension used in our experiments on reciprocal recommendation. In Proposition~\ref{thm:fe_reciprocal}, we present an additional result regarding the inefficiency of these criteria in reciprocal recommendation.
    \item In Appendix~\ref{sec:fwproof}, we present the more general version of the Frank-Wolfe algorithm, which we use both to optimize the welfare function over stochastic rankings, as well as the penalty-based baselines. This appendix also contains the proofs of the results in Section~\ref{sec:algo}. In addition, this appendix contains fundamental lemmas that are used in other appendices.
    \item Appendix~\ref{app:experiments} gives the details of the experiments presented in Section~\ref{sec:experiments}, as well we many additional experiments (two additional, larger scale datasets on one-sided recommendation, and an additional dataset for reciprocal recommendation)
    \item Appendix~\ref{sec:pairwisepointwise} briefly discusses the difference between the penalty we use in our implementation of the baseline approaches and an alternative penalty used by some authors.
    \item Finally, Appendix~\ref{sec:micro} discusses the difference between applying item-side fairness criteria for every ranking, compared to what we do in the paper, which defines item-side utility as an aggregate over the rankings of all users.
\end{itemize}
\section{Fairness towards sensitive groups rather than individuals}
\label{sec:groups}

\newcommand{\usergroups}{{\cal S}}
\newcommand{\nugroups}{{\card{\usergroups}}}
\newcommand{\ugp}{s_p}
\newcommand{\itemcats}{{\cal C}}
\newcommand{\nicats}{{\card{\itemcats}}}
\newcommand{\icq}{c_q}

In all the paper we focus on fairness towards individual users and items rather than groups of users or items. Prior work \citep{singh2018fairness,morik2020controlling,singh2019policy} considered the utlity of a group as the sum or the average utility of its members. Using this definition of group utility, our framework dirrectly extends to groups rather than individuals. In this section we describe the case of one-sided recommendation with groups of users and item categories. The case of reciprocal recommendation (with user groups only) is similar but simpler.

Let $\usergroups=(\ugp)_{p=1}^\nugroups$ be (possibly overlapping) user groups, i.e., $\forall p\in\intint{\nugroups}, \ugp\subseteq \userS$ and $\cup_{p\in\intint{\nugroups}} \ugp = \userS$. Similarly, let $\itemcats=(\icq)_{q=1}^\nicats$ be (possibly overlapping) item categories, i.e., $\forall q\in\intint{\nicats}, \icq\subseteq \itemS$ and $\cup_{q\in\intint{\nicats}} \icq = \itemS$. On the user side, such groups would typically correspond to demographic groups considered sensitive for the application at hand \citep{sweeney2013discrimination}. On the item side, groups can represent a single producer for the case where we want to be fair to producers based on the aggregate utility they obtain from their products \citep{mehrotra2018towards}, or demographic groups as well \citep{kay2015unequal}.% that are represented by these pieces of content .

\newcommand{\grUp}{\tu_{\ugp}^{\rm gr}}
\newcommand{\grUpP}{\tu_{\ugp}^{\rm gr}(P)}
\newcommand{\grVq}{\tu_{\icq}^{\rm cat}}
\newcommand{\grVqP}{\tu_{\icq}^{\rm cat}(\rrk)}
\newcommand{\grU}{\boldsymbol{\tu}^{\rm gr}}
\newcommand{\grV}{\boldsymbol{\tu}^{\rm cat}}

In all cases, we redefine the user-side utility for groups and the item-side utility for categories:
\begin{align}
    \grUpP = \sum_{i\in\ugp} \tuiP&& \grVqP = \sum_{j\in\icq} \tujP
\end{align}
Let $\grU(\rrk) = (\grUpP)_{p=1}^\nugroups$ and $\grV(\rrk) = (\grVqP)_{q=1}^\nicats$ be the utility profiles of user groups and item categories associated to $\rrk$ respectively. The two-sided Lorenz efficiency for groups and categories is defined as:
\begin{definition}
Let $\usergroups$ be a set of user groups and $\itemcats$ a set of item categories. Let $\rrk\in\rrkS$. $\rrk$ is $(\usergroups, \itemcats)$-Lorenz efficient if there is no $\rrk'\in\rrkS$ such that either condition holds:
\begin{enumerate}
    \item $\grU(\rrk') \lorenzw \grU(\rrk)$ and $\grV(\rrk') \lorenz \grV(\rrk)$, or
    \item $\grV(\rrk') \lorenzw \grV(\rrk)$ and $\grU(\rrk') \lorenz \grU(\rrk)$.
\end{enumerate}
\end{definition}
The welfare function associated to $(\usergroups, \itemcats)$, still parametrized by $\theta=(\lambda, \alpha_1, \alpha_2)\in\Theta$, is defined as
\begin{equation}
    \Wtheta^{\rm gr}(\rrk) = (1-\lambda) \sum_{s\in\usergroups} \psi(\tu_{s}^{\rm gr}(\rrk), \alpha_1) + \lambda \sum_{c\in\itemcats} \psi(\tu_c^{\rm cat}(\rrk), \alpha_2)
\end{equation}
The welfare function follows the general form of objective function used for the algorithm in Appendix~\ref{sec:fwproof}, so the optimization of $\Wtheta^{\rm gr}$ requires similar computational complexity as $\Wtheta$.

Finally, the extension of Proposition~\ref{prop:efficiency} is straightforward. Its proof is similar to the proof presented in Appendix~\ref{sec:proofswelfare}.
\begin{proposition}
$\forall \theta\in\Theta, \forall\rrkopt\in\argmaxPl \Wtheta^{\rm gr}(\rrk)$, $\rrkopt$ is $(\usergroups, \itemcats)$-Lorenz efficient.
\end{proposition}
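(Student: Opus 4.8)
The plan is to transcribe the proof of Proposition~\ref{prop:efficiency} with user groups in place of users and item categories in place of items. The only structural observation to make is that $\Wtheta^{\rm gr}$ has exactly the same shape as $\Wtheta$: it is a $\lambda$-weighted sum of two separable aggregates, the term $(1-\lambda)\sum_{s\in\usergroups}\psi(\tu_s^{\rm gr}(\rrk),\alpha_1)$ applied coordinatewise to the group-utility profile $\grU(\rrk)$, and the term $\lambda\sum_{c\in\itemcats}\psi(\tu_c^{\rm cat}(\rrk),\alpha_2)$ applied to the category-utility profile $\grV(\rrk)$. For $\theta\in\Theta$ we have $\alpha_1,\alpha_2<1$, so both $\psi(\cdot,\alpha_1)$ and $\psi(\cdot,\alpha_2)$ are strictly concave and increasing, and $\lambda\in(0,1)$, so both aggregates carry strictly positive weight. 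Hence the diminishing-returns property that drives the individual case is present verbatim for groups and categories.

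I would argue by contradiction. Suppose $\rrkopt\in\argmaxPl\Wtheta^{\rm gr}(\rrk)$ but $\rrkopt$ is not $(\usergroups,\itemcats)$-Lorenz efficient, so some $\rrk'\in\rrkS$ satisfies condition~1 or condition~2 of the definition; by the symmetry of the two conditions I assume the first, namely $\grU(\rrk')\lorenzw\grU(\rrkopt)$ and $\grV(\rrk')\lorenz\grV(\rrkopt)$. The key ingredient, proved for the individual case in Appendix~\ref{sec:proofswelfare} and classical in cardinal welfare economics \citep{atkinson1970measurement,shorrocks1983ranking}, is that for any strictly concave increasing $\phi$ the aggregate $\bu\mapsto\sum_k\phi(\bu_k)$ is monotone with respect to generalized Lorenz dominance: $\bu\lorenzw\bv$ implies $\sum_k\phi(\bu_k)\geq\sum_k\phi(\bv_k)$, and the strict dominance $\bu\lorenz\bv$ implies a strict inequality. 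Applying this with $\phi=\psi(\cdot,\alpha_1)$ to the weak dominance on group profiles, and with $\phi=\psi(\cdot,\alpha_2)$ to the strict dominance on category profiles, yields
\[
\sum_{s\in\usergroups}\psi\bigl(\tu_s^{\rm gr}(\rrk'),\alpha_1\bigr)\ \geq\ \sum_{s\in\usergroups}\psi\bigl(\tu_s^{\rm gr}(\rrkopt),\alpha_1\bigr),
\]
\[
\sum_{c\in\itemcats}\psi\bigl(\tu_c^{\rm cat}(\rrk'),\alpha_2\bigr)\ >\ \sum_{c\in\itemcats}\psi\bigl(\tu_c^{\rm cat}(\rrkopt),\alpha_2\bigr).
\]
Multiplying the first inequality by $1-\lambda>0$ and the second by $\lambda>0$ and adding gives $\Wtheta^{\rm gr}(\rrk')>\Wtheta^{\rm gr}(\rrkopt)$, contradicting the optimality of $\rrkopt$. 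The case where $\rrk'$ satisfies condition~2 is identical after exchanging the roles of $\grU$, $\alpha_1$, $1-\lambda$ with $\grV$, $\alpha_2$, $\lambda$.

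I do not expect a genuine obstacle here: the statement is a direct lifting of Proposition~\ref{prop:efficiency}, and the feasibility of the compared profiles is automatic since both $\rrkopt$ and $\rrk'$ range over $\rrkS$, so no convexity or attainability argument is needed. The only point requiring care is the \emph{strict} half of the welfare--Lorenz lemma: a single strict inequality among the cumulative sums must translate into a strict welfare gap, which relies on $\psi(\cdot,\alpha_2)$ being simultaneously strictly concave and strictly increasing, so that the excess is absorbed either by a strict Pigou--Dalton transfer or by monotonicity. This is exactly what $\theta\in\Theta$ guarantees, and it is the same step that already underlies the individual-case proof in Appendix~\ref{sec:proofswelfare}.
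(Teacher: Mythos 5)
Your proof is correct and follows essentially the same route as the paper, which establishes this proposition by transcribing the proof of Proposition~\ref{prop:efficiency}: both arguments rest on the classical fact that a separable sum of strictly increasing, strictly concave functions is monotone (weakly for $\lorenzw$, strictly for $\lorenz$) with respect to generalized Lorenz dominance, applied to the group-side and category-side profiles and combined to contradict the optimality of $\rrkopt$. The only cosmetic difference is that the paper's template proof chains the two inequalities through an intermediate mixed profile $(\bvitems, \buusers)$, whereas you add the two $\lambda$-weighted inequalities directly, which is equivalent.
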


Note that this way of treating groups is not necessarily optimal. In particular, in does not account for within-group fairness. The separate consideration of within-group and between-group fairness has been studied extensively in the literature on equality of opportunity \citep{roemer2016equality}, which has inspired several works on algorithmic fairness \citep{hardt2016equality,heidari2019moral}. Yet, how to apply these principles to two-sided fairness in recommendation is still open, and is left as future work.
\section{More on welfare functions}\label{sec:proofswelfare}

This appendix provides an in-depth analysis of the trade-offs that are achievable by the welfare approach. We first pove the proposition of Section~\ref{sec:welfare}, and analyze the utilitarian rankings (obtained with $\alpha_1=\alpha_2=1$).  We then analyze how to obtain leximin optimal solutions on the side of the items in Appendix~\ref{sec:leximin}, as mentioned in Section~\ref{sec:welfare}. Finally, we prove Theorem~\ref{thm:regretbound} in Appendix~\ref{sec:regretbound}, which provides a regret bound relating the true welfare achieved when maximizing welfare on estimated preferences. Some results in this section use Lemma~\ref{lem:ranking} of Appendix~\ref{sec:algo}, which is proved in Appendix~\ref{sec:algo}.

Throughout the appendices, we use the more general version of item utilities (two-sided preferences), described at the end of Section~\ref{sec:framework}. Moreover, to clarify the notation, we remind that a \emph{ranking tensor} is a three-way tensor $\rrk$ where $\rrkijk$ is the probability that item $j$ is recommended to user $i$ at rank $k$. We consider $\rrk$ as an $\n\times\n\times\nitems$ tensor, where irrelevant entries are set to $0$. With this notation, the utility for both users and items can be written with the same formula:
\begin{equation}\label{eq:twosidedformula}
    \forall i\in\intint{n}, \tuiP = \sum_{j=1}^n \muij(\rrkij + \rrkji)\v.
\end{equation}
Note that this formula also corresponds to the two-sided utility in reciprocal recommendation. In general, the results in this appendix can be extended to reciprocal recommendation with minimal changes to their proofs, using $\userS=\itemS=\intint{n}$ and the formula above for the utility.

\subsection{Lorenz efficiency and utilitarian ranking}
We first prove Proposition~\ref{prop:efficiency}:
\welfareisfair*
\begin{proof}
It is well known that if $\Phi$ is increasing and strictly concave, then $F(\bu) = \sum_{i=1}^n \Phi(\bu_i)$ is monotonic with respect to Lorenz dominance \citep{shorrocks1983ranking,thistle1989ranking}: $\bu\lorenz\bv \implies F(\bu)>F(\bv)$. 

In the case of $\Wtheta$, for every $\theta=(\lambda, \alpha_1, \alpha_2)\in\Theta$, both $\psi(., \alpha_1)$ and $\psi(., \alpha_2)$ are strictly concave by the definition of $\Theta$ (recall that in $\Theta$, we have $\alpha_1, \alpha_2 < 1$). 

%Now, if we are in one-sided recommendation with $\userS\cap\itemS=\emptyset$, 
The partial function\footnote{We denote by $(\tuitems, {\tu}'_{\userS})$ the vector $\Re^d$ such that $(\tuitems, {\tu}'_{\userS})_i = \tu_i$ if $i\in\itemS$ and $(\tuitems, {\tu}'_{\userS})_i = u'_i$ if $i\in\userS$.} ${\tu}'_{\userS} \mapsto \Wtheta((\tuitems, {\tu}'_{\userS}))$ is, up to a constant, of the form of $F$ and likewise for the partial function ${\tu}'_{\itemS} \mapsto \Wtheta(({\tu}'_{\itemS},\tuusers))$. We now prove the result by contradiction. Assume that $\bu\in\argmaxU \Wthetau$ is not Lorenz efficient. Then there is $\bv\in\US$ such that $(\bvusers\lorenzw\buusers$ and $\bvitems\lorenz\buitems)$ or $(\bvusers\lorenz\buusers$ and $\bvitems\lorenzw\buitems)$. Let us assume $(\bvusers\lorenzw\buusers$ and $\bvitems\lorenz\buitems)$, the other case is dealt with similarly. We then have:
\begin{align}
    \Wtheta(\bv) &\geq \Wtheta((\bvitems, \buusers)) \tag*{(because $\bvusers\lorenzw\buusers$)}\\
    & > \Wtheta((\buitems, \buusers)) \tag*{(because $\bvitems\lorenz\buitems$)} 
\end{align}
which contradicts the maximality of $\bu$.
\end{proof}

The analogous for Proposition~\ref{prop:efficiency} for reciprocal recommendation is a direct consequence of standard results that concave welfare functions are monotonic with respect to Lorenz dominance \citep{shorrocks1983ranking,thistle1989ranking}.

\paragraph{Utilitarian ranking} Proposition~\ref{prop:utilitarian} below generalizes to two-sided utilities the well-known result that maximizing user-side utility is achieved by sorting $j\in\itemS$ by decreasing $\muij$ (see e.g., \citep{cossock2008statistical}). For a ranking tensor $\rrk$ and a user $i$, we denote by $\rkS(\rrki)$ the support of $\rrki$ in ranking space.%
\footnote{Formally, $\rkS(\rrki)=\xSet[\big]{\sigma:\itemS\rightarrow\intint{\nitems}\big | \,\sigma \text{~is one-to-one, and~}\forall j\in\itemS, \rrk_{ij\sigma(j)} > 0}$.} 
We remind that $\sigma(j)$ is the rank of item $j$, and that lower ranks are better. For a user $i$ and item $j$, we use $\mu_{ji} = 1$.
\begin{restatable}[Utilitarian ranking]{proposition}{proputilitarian} \label{prop:utilitarian}
Assume $\forall k\in\intint{\n-1}, \v_k>\v_{k+1}\geq 0$ and let
\begin{equation}
    \rrkopt\in\argmaxPls \W_{\frac{1}{2}, 1, 1}(\rrk) = \argmaxPls \sum\limits_{i\in\intint{\n}} \tuiP.
\end{equation}
%be the \emph{utilitarian ranking}. Then, $\forall i\in\userS$:
%for every ranking $\sigma_i$ in the support of $\rrkopt_i$, we have: $\sigma_i(j) < \sigma_i(j') \implies \tmuij \geq \tmuijprime$.
\begin{enumerate}[leftmargin=*]
    \item $\displaystyle \forall i\in\userS, \forall \sigma\in\rkS(\rrkopti):
    \sigma(j) < \sigma(j') \implies \tmuij \geq \tmuijprime
    \text{~~with~~} \tmuij=\muij+\muji.$
    \item If $\forall (i,j)\in\intint{n}^2, \muij=\muji$, then $\tmuij\geq \tmuijprime\iff \muij\geq\mu_{ij'}$.
\end{enumerate}
\end{restatable}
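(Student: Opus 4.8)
The plan is to collapse the global optimization over the ranking tensor into a family of independent single-user ranking problems, each of which is solved by the rearrangement inequality. First I would rewrite the objective $\sum_{i\in\intint{n}}\tuiP$ so that it decouples across users. Starting from the two-sided formula \eqref{eq:twosidedformula}, we have $\sum_{i\in\intint{n}}\tuiP=\sum_{i,j}\muij(\rrkij+\rrkji)\v$. In the term carrying $\rrkji$ I would relabel the dummy indices $(i,j)\mapsto(j,i)$, turning $\sum_{i,j}\muij\rrkji\v$ into $\sum_{i,j}\muji\rrkij\v$. Collecting the two contributions gives
\begin{equation*}
\sum_{i\in\intint{n}}\tuiP=\sum_{i\in\intint{n}}\sum_{j=1}^n\tmuij\,\rrkij\v,\qquad \tmuij=\muij+\muji .
\end{equation*}
The key observation is that the $i$-th summand depends only on $\rrki$, and that $\rrkS$ imposes no coupling between the rankings of different users (each $\rrki$ is an independent doubly stochastic object). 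Hence maximizing the total welfare amounts to maximizing $\sum_j\tmuij\,\rrkij\v$ separately for every user $i$.

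Second, for a fixed user $i$ the remaining problem is to maximize the linear functional $\sum_{j,k}\tmuij\,\v_k\,\rrkijk$ over the Birkhoff polytope of doubly stochastic matrices $(\rrkijk)_{j,k}$. I would invoke the rearrangement inequality (packaged as Lemma~\ref{lem:ranking}): since the assumption $\v_k>\v_{k+1}\geq 0$ makes $\v$ strictly decreasing, this objective is maximized exactly by matching the largest coefficients $\tmuij$ with the smallest (best) ranks, i.e., by any permutation listing the items in non-increasing order of $\tmuij$. Because every optimal doubly stochastic matrix is a convex combination of optimal permutation matrices, each $\sigma\in\rkS(\rrkopti)$ must itself be optimal, and the strict monotonicity of $\v$ forbids such a $\sigma$ from placing a strictly smaller $\tmuij$ ahead of a strictly larger one (a single transposition would strictly increase the objective). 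This yields $\sigma(j)<\sigma(j')\implies\tmuij\geq\tmuijprime$ and settles Part~1.

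Finally, Part~2 is immediate: under the symmetry assumption $\muij=\muji$ we get $\tmuij=2\muij$ and $\tmuijprime=2\mu_{ij'}$, so $\tmuij\geq\tmuijprime\iff\muij\geq\mu_{ij'}$.

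The only genuinely delicate point is the passage from ``an optimal doubly stochastic matrix'' to ``every permutation in its support sorts by $\tmuij$''. The clean way is to recall that a linear function over a polytope attains its maximum on a face, that the vertices of the Birkhoff polytope are precisely the permutation matrices, and that a convex combination lies on the optimal face if and only if every permutation matrix appearing with positive weight is an optimal vertex; the strictness of $\v$ is exactly what pins down which vertices are optimal (up to ties in $\tmuij$). If Lemma~\ref{lem:ranking} already records this correspondence, this step reduces to a direct citation and the remaining work is only the index-relabeling of the first paragraph.
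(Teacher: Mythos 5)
Your proposal is correct and follows essentially the same route as the paper's own proof: your index-relabeling yielding the coefficients $\tmuij=\muij+\muji$ is exactly the paper's Lemma~\ref{lem:optiadd}, and your per-user decoupling plus rearrangement-inequality step (with strict monotonicity of $\v$ giving the converse statement about every permutation in the support) is exactly the paper's Lemma~\ref{lem:ranking}. Part~2 is handled identically, as an immediate consequence of the symmetry assumption.
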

When mutual preferences are symmetric (i.e., $\muij=\muji$), the utilitarian ranking is the same as the usual sort by decreasing $\muij$. This also obviously holds when we consider exposue as item utility ($\muji=1$). This means that without considerations of two-sided fairness ($\alpha_1, \alpha_2<1$), the optimal ranking for two-sided utilities is the same as the usual ranking. This might explain why the two-sided utility has never been studied before, even in reciprocal recommendation \citep{palomares2021reciprocal}.

For the proof of Proposition~\ref{prop:utilitarian}, the main part is the following lemma:
\begin{lemma}\label{lem:optiadd}
Let $F(\tuP) = \sum\limits_{i=1}^n \tuiP$ and $\tmuij = \muij + \muji$. Assume $\forall k\in\intint{\n-1}, \v_k\geq\v_{k+1}\geq 0$.

If $\rrk^*\in\rrkS$ is such that
$\forall \sigma \in \rkS(\rrki^*), \forall j, j'$,
%,, \tmu_{i\sigma(1)} \geq \tmu_{i\sigma(2)} \geq \ldots \geq  \tmu_{i\sigma(\ni)}$, 
%
$\sigma(j) < \sigma(j') \implies \tmuij \geq \tmuijprime$
then $\rrk^*\in\argmax\limits_{\rrk\in\rrkS} \tuP$. 

Moreover, if $\forall k\in\intint{\n-1}, \v_k>\v_{k+1}\geq 0$, then the reciprocal is true.
\end{lemma}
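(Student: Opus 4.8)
The lemma claims that sorting items by decreasing $\tmuij = \muij + \muji$ maximizes the total utility $F(\tuP) = \sum_i \tuiP$, where by Eq.~\eqref{eq:twosidedformula} each $\tuiP = \sum_j \muij(\rrkij + \rrkji)\v$.

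Let me work out the plan.
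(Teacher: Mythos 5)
Your proposal stops before the proof begins: after restating the definitions of $F$ and $\tmuij$, there is no argument at all --- no decomposition of the objective, no inequality, no treatment of the optimality claim, and no discussion of the converse (``reciprocal'') direction under strictly decreasing $\v$. As it stands, nothing is established beyond a paraphrase of the statement itself, so this is not a partial proof with a fixable gap but a missing proof.

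Two ideas are genuinely needed, and both are absent. First, the claim cannot be attacked user by user directly, because $F$ couples all users' rankings: each pair $(i,j)$ contributes to the sum twice, once through $\rrkij$ in $\tu_i$ and once through $\rrkji$ in $\tu_j$. The key step is the index-swap identity, valid because $i$ and $j$ range over the same index set $\intint{n}$ in the zero-completed tensor formulation:
\begin{align*}
F(\tuP)=\sum_{i=1}^n\sum_{j=1}^n \muij(\rrkij+\rrkji)\v=\sum_{i=1}^n\sum_{j=1}^n(\muij+\muji)\rrkij\v ,
\end{align*}
which collapses the global objective into a sum of \emph{independent} linear functionals of each user's stochastic ranking matrix with coefficients $A_{ij}=\tmuij$; this is precisely what makes ``sort by decreasing $\tmuij$'' the correct criterion. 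Second, once decoupled, you still need the rearrangement (Hardy--Littlewood) inequality to conclude: for non-negative non-increasing $\v$, a stochastic ranking maximizes $\sum_{j} A_{ij}\rrkij\v$ if every permutation in its support places larger $A_{ij}$ at better ranks, and --- for the reciprocal part --- when $\v$ is \emph{strictly} decreasing, every permutation in the support of any maximizer must sort the $A_{ij}$ in decreasing order; the strictness hypothesis is exactly what rules out maximizers that shuffle items across ranks carrying equal exposure weight. The paper's proof consists of exactly these two steps (the swap computation followed by an application of Lemma~\ref{lem:ranking}); your proposal would need to supply both, or something equivalent, to engage with the actual content of the lemma.
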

\begin{proof}
Notice that, thanks to the completion of $\rrk$ with zeros on irrelevant entries and formula \ref{eq:twosidedformula}, $F(\tuP)$ can be rewritten as:
\begin{align}
    F(\tuP) & = \sum\limits_{i=1}^n \tuiP = \sum_{i=1}^n \sum_{j=1}^n \muij(\rrkij + \rrkji)\v
     = \sum_{i=1}^n \sum_{j=1}^n (\muij+\muji)\rrkij \v
\end{align}
where the last equality is obtained by swapping $i$ and $j$ in the second sum, which is possible since $i$ and $j$ span the same range.

The result is then a direct consequence of Lemma~\ref{lem:ranking} in Appendix~\ref{sec:fwproof}, using $A_{ij}=\muij+\muji$.
\end{proof}

The first of statement of Proposition~\ref{prop:utilitarian} assumes that the exposure weights $\v$ are non-negative and strictly decreasing as per the second point of Lemma~\ref{lem:optiadd}. Lemma~\ref{lem:optiadd} above gives the statement for the more general case of non-increasing $\v$.
\begin{proof}[Proof of Proposition~\ref{prop:utilitarian}]
The first statement is the consequence of Lemma~\ref{lem:optiadd} above, noticing that $F(\tuP)$ in Lemma~\ref{lem:optiadd} always has the same argmax. The second statement is obvious from the assumptions. 

% The third statement come from the fact that 
% \begin{align}
%     \max_{\rrk\in\rrkS} \sum_{i\in\userS} \tuiP &= \max_{\rrk\in\rrkS}\sum_{i\in\userS} \Big( \sum_{j\in\itemS}\muij\rrkij\v\Big)
%     = \sum_{i=1}^n \max_{\rrk\in\rrkS} \sum_{j\in\itemS}\muij\rrkij\v
% \end{align}
% The last equality follows from the fact that each inner sum can be maximinzed independently. Maximizing the sum is equivalent to maximizing the ranking tensor for each user individually: no other ranking can give a better user-side utility for any user. In particular, it is leximin optimal for the user-side utility.
\end{proof}

\subsection{Item-side leximin optimality}\label{sec:leximin}

The most egalitarian trade-off achievable by our method is described by the leximin order \citep{sen1979equality}. Given two utility profiles $\bu$ and $\bv$,  $\bu\leximinw\bv$ if $\bucum$ is greater than $\bvcum$ according to the lexicographic order.\footnote{Formally, $\bu\leximin\bv$ if $(\exists k\in\intint{d}$ s.t. $\forall i< k$, $\bucum_i=\bvcum_i$ and $\bucum_k>\bvcum_k)$. ~~$\bu\leximinw\bv \iff \lnot(\bv\leximinw\bu)$.} The leximin optimal profile is egalitarian in the sense that it maximizes the utility of individuals in sequence, from the worse-off to the better-off. Depending on the set of feasible profiles, this may not lead to equal utility for everyone, but any further reduction of inequality can only be achieved by making people worse off for the benefit of no other, in violation of Pareto-dominance.

The proposition below formalizes how leximin optimal solutions on the side of items are found. It shows that item-side leximin solutions are obtained by having $\alpha_2\to-\infty$ and $\lambda\to 1$ at the same time. The proposition gives a formal statement of the rate at which $\lambda$ should converge to $1$ relative to $\alpha$.

In the statement of the proposition, given two functions $F$ and $G$, we use $F(\alpha) \geqinf G(\alpha)$ as a shorthand for $F(\alpha) \geq G(\alpha)$ for $\alpha$ sufficiently small.\footnote{Formally, $F(\alpha) \geqinf G(\alpha) \iff \exists\alpha_0\in\Re, \forall \alpha\leq \alpha_0, F(\alpha) \geq G(\alpha)$.}.
% \begin{proposition}\label{prop:leximin}
% In one-sided recommendation ($\userS\cap\itemS=\emptyset$), let $\lambda\in(0,1)$, $\eta_1>0$,$\eta_2>0$. We have 
% \begin{align}\label{eq:leximin}
%   \W_{\lambda, \alpha\eta_1, \alpha\eta_2}(\bu) \geqinf 
%   \W_{\lambda, \alpha\eta_1, \alpha\eta_2}(\bv) \implies \phi(\bu) \leximinw \phi(\bv) 
%   \text{~~where~~} \phi(\bx) = \big(\bx_i^{\eta_1} \indic{i\in\userS} + \bx_i^{\eta_2}\indic{i\in\itemS}\big)_{i=1}^n  
% \end{align}
% \end{proposition}

{
\newcommand{\USitemlex}{\US^{\rm item}_{\rm lex}}
\newcommand{\buopt}{\bu^*}
\newcommand{\buoptitems}{\bu^*_{\itemS}}

\begin{proposition}\label{prop:leximinitem} 
Let $\USitemlex = \xSet{\bu\in\US: \forall \bv\in\US, \buitems\leximinw\bvitems}$ and let $\buopt = 
\smash{
\argmax\limits_{\bu\in\USitemlex}} \sumUl \psi(\bu_i, \alpha_1)$.
%
%Let $\eta>\max(1, \norm{\buitems}_\infty)$. We have:
\begin{align}
   \forall \eta>\max(1, \norm{{\buopt}_{\itemS}}_\infty), \forall \bu\in\US: \W_{1-\eta^\alpha, \alpha_1, \alpha}(\buopt) \geqinf \W_{1-\eta^\alpha, \alpha_1, \alpha}(\bu).
\end{align}
\end{proposition}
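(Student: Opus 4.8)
The plan is to fix $\eta>\max(1,\norm{\buoptitems}_\infty)$ and an arbitrary competitor $\bu\in\US$, set $\lambda=1-\eta^\alpha$ (so that for $\alpha<0$ we have $\eta^\alpha\in(0,1)$ and $\theta=(1-\eta^\alpha,\alpha_1,\alpha)\in\Theta$), and study the sign of
\[
\Delta(\alpha) := \W_{1-\eta^\alpha,\alpha_1,\alpha}(\buopt) - \W_{1-\eta^\alpha,\alpha_1,\alpha}(\bu)
\]
as $\alpha\to-\infty$. Using $\psi(x,\alpha)=-x^\alpha$ for $\alpha<0$, I would split $\Delta(\alpha)$ into a user part carrying weight $\eta^\alpha$ and an item part carrying weight $1-\eta^\alpha$:
\[
\Delta(\alpha) = \eta^\alpha\,\Delta_\userS + (1-\eta^\alpha)\Big(\sumI \bu_j^\alpha - \sumI \buopt_j^\alpha\Big),
\quad
\Delta_\userS := \sumU\big[\psi(\buopt_i,\alpha_1)-\psi(\bu_i,\alpha_1)\big].
\]
Here $\Delta_\userS$ is a finite constant independent of $\alpha$ (finiteness uses positivity/boundedness of utilities on the compact $\US$, or the $+\eta$ shift of the footnote; if some $\bu_i=0$ with $\alpha_1\le 0$ then $\W(\bu)=-\infty$ and the claim is trivial). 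It then remains to show $\Delta(\alpha)\geq0$ for $\alpha$ small enough, for each fixed $\bu$ — crucially, $\geqinf$ only requires a threshold $\alpha_0$ depending on $\bu$ and $\eta$, so no uniformity over the continuum $\US$ is needed and a pointwise limiting argument suffices.

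The key technical ingredient is a quantitative ``leximin via power sums'' lemma: if two nonnegative profiles satisfy $\mathbf a\leximin\mathbf b$ and $\beta$ is the smallest value at which their ascending-sorted versions first differ (so $\beta$ is that coordinate of $\mathbf b$, and the matching coordinate of $\mathbf a$ is strictly larger), then $\sum_j b_j^\alpha-\sum_j a_j^\alpha = \beta^\alpha(C+o(1))$ with $C>0$ as $\alpha\to-\infty$. I would prove this by cancelling the equal leading coordinates, factoring out $\beta^\alpha$, and taking the limit term by term: the decisive coordinate contributes $1$, each coordinate of $\mathbf b$ tied with $\beta$ contributes $1$, and every strictly larger coordinate of either profile contributes $0$ (since $t^\alpha/\beta^\alpha\to0$ when $t>\beta$ and $\alpha\to-\infty$); hence the ratio tends to the positive multiplicity $C$ of $\beta$ among the tail coordinates.

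I would then split into two cases. If $\buitems$ is leximin-optimal, then $\bu\in\USitemlex$; because the leximin order has a unique maximal sorted profile, $\buitems$ and $\buoptitems$ share the same multiset of item values, so the item part of $\Delta(\alpha)$ vanishes and $\Delta(\alpha)=\eta^\alpha\,\Delta_\userS\geq0$ for every $\alpha<0$, by the defining optimality $\buopt=\argmax_{\bu\in\USitemlex}\sumU\psi(\bu_i,\alpha_1)$. If $\buitems$ is not leximin-optimal, then $\buoptitems\leximin\buitems$; applying the lemma (with $\mathbf a=\buoptitems$, $\mathbf b=\buitems$, decisive value $\beta$), the item difference equals $\beta^\alpha(C+o(1))$ with $C>0$, so dividing $\Delta(\alpha)$ by $\beta^\alpha>0$ yields $(\eta/\beta)^\alpha\Delta_\userS+(1-\eta^\alpha)(C+o(1))$.

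The crux — and the only place the hypothesis on $\eta$ enters — is the rate comparison in this last case: the decisive value $\beta$ is strictly below the corresponding coordinate of $\buoptitems$, which is at most $\norm{\buoptitems}_\infty<\eta$, so $\eta/\beta>1$ and $(\eta/\beta)^\alpha\to0$. Thus the user term is negligible against the item term, the ratio tends to $C>0$, and $\Delta(\alpha)>0$ for all sufficiently negative $\alpha$. The main obstacle I anticipate is the careful bookkeeping in the power-sum lemma (identifying which coordinates dominate after cancellation and tracking the multiplicity), together with the clean verification that $\beta<\eta$; once those are in place, the result follows from the elementary fact that $t^\alpha$ decays the more slowly the smaller $t$ is as $\alpha\to-\infty$.
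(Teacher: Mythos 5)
Your proof is correct and takes essentially the same route as the paper's: the same decomposition of the welfare gap into a user part with weight $\eta^\alpha$ and an item part with weight $1-\eta^\alpha$, the same case split on whether the sorted item profiles coincide, and the same asymptotic argument at the first sorted index where they differ, driven by the fact that $\eta$ strictly exceeds the decisive item value so the user term decays faster. The only difference is cosmetic: you factor out the competitor's decisive value $\beta^\alpha$ and track tie multiplicities explicitly, whereas the paper factors out the optimum's decisive value $(u^*_{j_m})^\alpha$; your bookkeeping is in fact slightly tidier, since it handles tied coordinates correctly (the paper's display treats them as vanishing in the limit) and avoids the paper's swapped user/item weights in the first line of its computation.
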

This means that among the leximin-optimal item-side utility profiles, $\alpha_1$ still controls the redistribution profile on the user side, since it is possible that $\card{\USitemlex}>1$ in one-sided recommendation. A similar result holds for user-side item leximin. 
\begin{proof}
Let $\buopt = 
\smash{
\argmax\limits_{\bu\in\USitemlex}} \sumUl \psi(\bu_i, \alpha_1)$ and $\bu\in\US$. Let $\theta=(\lambda, \alpha_1, \alpha)$ and take $\alpha<\min(0, \alpha_1)$.

\newcommand{\usjk}{u^*_{j_k}}
\newcommand{\ujk}{u_{j'_k}}
\newcommand{\usjm}{u^*_{j_m}}
\newcommand{\ujm}{u_{j'_m}}
\newcommand{\pa}[1]{\big(#1\big)^\alpha}

Let $(j_1, j_2, \ldots, j_{\nitems})$ be the ranking of $\buoptitems$ in increasing order: $u^*_{j_1} \leq \ldots u^*_{j_{\nitems}}$. Likewise, let $(j'_1, j'_2, \ldots, j'_{\nitems})$ be the ranking of $\buitems$ in increasing order: $u_{j'_1} \leq \ldots \leq u_{j'_{\nitems}}$.

Let $m = \max\{ k \in\intint{\nitems}\cup\{0\} : \forall \ell\leq k, u^*_{j_\ell} = u_{j'_\ell}\} + 1$, be the last index (+1) such that the smallest values of $\buopt$ and $\bu$ are equal ($m=1$ if the smallest values are different). 

Let $C(\alpha)=\W_{1-\eta^\alpha, \alpha_1, \alpha}(\buopt) - \W_{1-\eta^\alpha, \alpha_1, \alpha}(\bu)$.

Let $K = \sum_{i\in\userS} \big(\psi(\bu_i^*, \alpha_1) - \psi(\bv_i, \alpha_1)\big)$. 

\textbf{case 1: $m=\nitems+1$.} Then $C(\alpha) = (1-\eta^\alpha)K \geq 0$ since $\buoptitems = \buitems$ and $\buopt$ maximizes the user-side welfare.

\textbf{case 2: $m<\nitems$.} Then, we have $\ujm<\usjm$ by the leximin optimality of $\buoptitems$. We then have:
\begin{align}
    C(\alpha) &= (1-\eta^\alpha) K + \eta^\alpha\sum_{j\in\itemS} -(u^*_j)^\alpha + (u_j)^\alpha \\
    & = - (1-\eta^\alpha)\pa{\usjm} \Big(\frac{K}{1-\eta^\alpha} \underbrace{\pa{\frac{\eta}{\usjm}}}_{\xrightarrow[\alpha\to-\infty]{}0} + 1 + \sum_{k>m} \underbrace{\pa{\frac{\usjk}{\usjm}}}_{\xrightarrow[\alpha\to-\infty]{}0} - \underbrace{\pa{\frac{\ujm}{\usjm}}}_{\xrightarrow[\alpha\to-\infty]{}+\infty} - \sum_{k> m} \underbrace{\pa{\frac{\ujm}{\usjm}}}_{\geq 0}\Big)
\end{align}
which implies $\lim\limits_{\alpha\to-\infty}C(\alpha) =+\infty$ and thus the desired result.
\end{proof}
}

\subsection{Guarantees when performing inference with estimated preferences}\label{sec:regretbound}

In practice, inference is carried out on an estimate $\hmu$ of $\mu$, meaning that, denoting $\htu$ the resulting estimated utility\footnote{We have $\htu_i(\rrk)=\sumI \hmu_{ij}\rrkij\v$ for $i\in\userS.$} the system output $\hrrk = \argmaxP \Wtheta(\htu(\rrk))$. 
The following result extends surrogate regret bounds that exist in classification \citep{bartlett2006convexity,zhang2004statistical} and learning to rank \citep{cossock2008statistical,ravikumar2011ndcg,agarwal2014surrogate}%,kotlowski2011bipartite} 
to the case of welfare functions and global stochastic rankings. It makes the link between the quality of the estimate $\hmu$ and an optimality guarantee for $\tu(\hrrk)$ (i.e., the true welfare of the ranking inferred on the estimated values). We prove the result for $\theta=(\frac{1}{2},\alpha,\alpha)$ for $\alpha\leq 1$ to simplify notation.\footnote{The dependency on $\hmu$ in $B(\hmu)$ is because $\psi'(., \alpha)$ is not bounded in general. In practice, we use $\psi(x+\eta, \alpha)$ for a small $\eta>0$ to avoid the singular point at $0$, in which case $B<\psi'(\eta, \alpha)$.}
\begin{restatable}{theorem}{regretbound}\label{thm:regretbound}
Let $\alpha\leq 1$ and $\theta=(\frac{1}{2}, \alpha, \alpha)\in\Theta$. Let $\hmu\in\Re_+^{\card{\userS}\times \card{\itemS}}$, $\hrrk = \argmaxP \Wtheta(\htu(\rrk))$, and $\rrkopt = \argmaxP \WthetauP$.

Let furthermore $B(\hmu) = \max\big(\max_{i\in\intint{n}} \psi'(\tui(\hrrk), \alpha), \max_{i\in\intint{n}} \psi'(\htui(\rrkopt), \alpha)\big)$.
We have:
% \begin{align}
%     % \Wtheta(\tu(\rrkopt)) - \Wtheta(\tu(\hrrk)) \leq 4B(\hmu)\sqrt{n\norm{\v}_2^2}\smash{\sqrt{\sum_{(i,j)\in\intint{n}^2}(\hmuij-\muij)^2}.}
%     \Wtheta(\tu(\rrkopt)) - \Wtheta(\tu(\hrrk)) \leq 4B(\hmu)\sqrt{n\norm{\v}_2^2}
%     %\smash{
%     \sqrt{\sum_{\substack{i\in\userS\\j\in\itemS}}(\hmuij-\muij)^2}.
%     %}
% \end{align}
\begin{align}
    % \Wtheta(\tu(\rrkopt)) - \Wtheta(\tu(\hrrk)) \leq 4B(\hmu)\sqrt{n\norm{\v}_2^2}\smash{\sqrt{\sum_{(i,j)\in\intint{n}^2}(\hmuij-\muij)^2}.}
    \Wtheta(\tu(\rrkopt)) - \Wtheta(\tu(\hrrk)) \leq 4B(\hmu)\sqrt{n\norm{\v}_2^2}
    %\smash{
    \sqrt{\sum_{(i,j)\in\userS\times\itemS}(\hmuij-\muij)^2}.
    %}
\end{align}
\end{restatable}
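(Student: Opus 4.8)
The plan is to establish a surrogate-regret decomposition that isolates the effect of using $\hmu$ in place of $\mu$. Writing the true welfare gap as a telescoping sum through the estimated welfare evaluated at both rankings,
\[
\Wtheta(\tu(\rrkopt)) - \Wtheta(\tu(\hrrk)) = \big[\Wtheta(\tu(\rrkopt)) - \Wtheta(\htu(\rrkopt))\big] + \big[\Wtheta(\htu(\rrkopt)) - \Wtheta(\htu(\hrrk))\big] + \big[\Wtheta(\htu(\hrrk)) - \Wtheta(\tu(\hrrk))\big],
\]
the middle bracket is nonpositive because $\hrrk$ maximizes $\rrk\mapsto\Wtheta(\htu(\rrk))$ by definition, so it can be dropped. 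It then remains to control the first and third brackets, each of which compares the welfare of one \emph{fixed} ranking under the true utilities $\tu$ and the estimated utilities $\htu$.

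Next, I would convert each welfare gap into a weighted $\ell_1$ perturbation of the utility profile using concavity. With $\theta=(\tfrac12,\alpha,\alpha)$ and $\alpha\le 1$, $\psi(\cdot,\alpha)$ is increasing and concave, so the tangent inequality $\psi(a,\alpha)-\psi(b,\alpha)\le\psi'(b,\alpha)(a-b)$ applied coordinatewise (and bounding the weights $1-\lambda,\lambda$ by $1$) gives
\[
\Wtheta(\tu(\rrkopt)) - \Wtheta(\htu(\rrkopt)) \le \sum_{i=1}^n \psi'\big(\htui(\rrkopt),\alpha\big)\big(\tui(\rrkopt)-\htui(\rrkopt)\big) \le B(\hmu)\sum_{i=1}^n\big|\tui(\rrkopt)-\htui(\rrkopt)\big|,
\]
and symmetrically the third bracket is at most $B(\hmu)\sum_{i=1}^n|\tui(\hrrk)-\htui(\hrrk)|$, now invoking $\psi'(\tui(\hrrk),\alpha)$. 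This is precisely where the definition of $B(\hmu)$ is used: its two arguments are exactly the two endpoints produced by the tangent step, so no control of $\psi'$ at any other (unknown) point is needed.

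Finally, I would bound the $\ell_1$ utility perturbation for an arbitrary ranking $\rrk$ by the Frobenius estimation error. Setting $\Delta_{ij}=\muij-\hmuij$ and $X_{ij}=\rrkij\v$, the general formula $\tui(\rrk)=\sum_{j}\muij(\rrkij+\rrkji)\v$ gives $\tui(\rrk)-\htui(\rrk)=\sum_j\Delta_{ij}(X_{ij}+X_{ji})$, and two nested applications of Cauchy--Schwarz (first over $j$, then over $i$) yield $\sum_{i=1}^n|\tui(\rrk)-\htui(\rrk)|\le 2\big(\sum_{i,j}\Delta_{ij}^2\big)^{1/2}\big(\sum_{i,j}X_{ij}^2\big)^{1/2}$. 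The structural ingredient is that, for each $i$, the exposure vector $(\rrkij\v)_j$ equals $M^{(i)}\v$ where $M^{(i)}$ is the matrix of rank-marginals of user $i$, which is doubly stochastic (Birkhoff--von Neumann); since such a matrix is a convex combination of permutation matrices it has spectral norm $1$, hence $\sum_j(\rrkij\v)^2=\norm{M^{(i)}\v}_2^2\le\norm{\v}_2^2$ and $\sum_{i,j}X_{ij}^2\le n\norm{\v}_2^2$. Combining this with the previous step at $\rrk=\rrkopt$ and $\rrk=\hrrk$ gives the claim, the constant $4$ collecting the factor $2$ from the two regret brackets and the factor $2$ from the two-sided ($\rrkij+\rrkji$) structure of the utility.

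I expect the main obstacle to be the dimension-free control of $\sum_j(\rrkij\v)^2$: naive bounds (e.g. $\rrkij\v\le\v_1$) would introduce a spurious dependence on $\nitems$, and it is the doubly-stochastic/spectral-norm argument — the only place the geometry of the feasible set $\rrkS$ enters — that keeps the bound clean. The rest is bookkeeping, the one subtle point being to apply the tangent inequality at the $\htu$-endpoint for the first bracket and at the $\tu$-endpoint for the third, so that both are covered by the single quantity $B(\hmu)$.
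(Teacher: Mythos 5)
Your proof is correct, and its skeleton coincides with the paper's: the same telescoping decomposition through the estimated welfare with the middle bracket dropped by optimality of $\hrrk$, and the same first-order concavity (tangent) step, applied at $\htu(\rrkopt)$ for the first bracket and at $\tu(\hrrk)$ for the third --- which is precisely what makes the single quantity $B(\hmu)$ suffice, as you note. Where you genuinely diverge is in the reduction from the linearized gap to the Frobenius error. The paper keeps the \emph{signed} linearized gap: after an index swap it collects the perturbation into coefficients $A_{ij}=\Phi'(\htu_i(\rrkopt))(\muij-\hmuij)+\Phi'(\htu_j(\rrkopt))(\muji-\hmuji)$, bounds $\sum_{i,j}A_{ij}\rrkopt_{ij}\v$ by $\max_{\rrk\in\rrkS}\sum_{i,j}A_{ij}\rrkij\v$, and invokes its rearrangement lemma (Lemma~\ref{lem:ranking}, the same lemma that powers the Frank--Wolfe step) to identify the maximizer as a per-user sort; Cauchy--Schwarz is then applied against the matrix $V=[\v_{\sigma_i(j)}]$, whose rows are permutations of $\v$, so $\norm{V}_F=\sqrt{\nusers\norm{\v}_2^2}$. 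You instead take absolute values immediately and control the actual exposure matrix $X_{ij}=\rrkij\v$ of the fixed ranking, via the observation that each per-user marginal matrix $(\rrkijk)_{jk}$ is doubly stochastic, hence by Birkhoff--von Neumann a convex combination of permutation matrices with spectral norm at most $1$, giving $\sum_j(\rrkij\v)^2\le\norm{\v}_2^2$. Both routes invoke the geometry of $\rrkS$ exactly once, both avoid the spurious $\nitems$ dependence you warn about, and both land on the same constant $4$ (with the same mild looseness in absorbing the weights $\lambda=1-\lambda=\tfrac12$). Yours is self-contained and arguably more elementary, needing no ranking-specific lemma; the paper's version reuses machinery it has already built and retains the signed structure of the error, which is what one would exploit if one wanted a tighter bound via cancellation rather than absolute values.
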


The existing results closest to our Theorem~\ref{thm:regretbound} are Theorem 2 of \citep{cossock2008statistical}. Here the result is substantially more difficult to prove because of the concave function and the fact that utilities are two-sided, calling for considering the rankings of multiple users at once.

% We restate Theorem~\ref{thm:regretbound} before proving it:
% \regretbound*
% \vskip 1cm
\begin{proof}
We have:
\begin{align*}
    \Wtheta(\tu(\rrkopt)) - \Wtheta(\tu(\hrrk)) 
    &= \Wtheta(\tu(\rrkopt)) - \underbrace{\Wtheta(\htu(\hrrk))}_{\geq \Wtheta(\htu(\rrkopt)} + \Wtheta(\htu(\hrrk)) - \Wtheta(\tu(\hrrk))\\
    &\leq \underbrace{\Wtheta(\tu(\rrkopt)) - \Wtheta(\htu(\rrkopt))}_{=C_1} + 
    \underbrace{\Wtheta(\htu(\hrrk)) - \Wtheta(\tu(\hrrk))}_{=C_2}
\end{align*}

Let $B_1(\hmu) = \max_{i\in\intint{n}} \psi'(\htu_i(\rrkopt), \alpha)$. 

We first prove:
\begin{equation}\label{eqproof:boundA}
    C_1 \leq 2B_1(\hmu)\sqrt{n\norm{\v}_2^2}\sqrt{\sum_{(i,j)\in\intint{n}^2}(\hmu_{ij}-\muij)^2}.
\end{equation}
To prove \eqref{eqproof:boundA}, we start by using the concavity of $\psi(., \alpha)$ for $\alpha \leq 1$. Let $\Phi(.) = \frac{1}{2} \psi(., \alpha)$. We have:
\begin{align}
    C_1 & = \sum_{i=1}^n \big(\Phi(\tu(\rrkopt)) - \Phi(\htu(\rrkopt))\Big)
     \leq \sum_{i=1}^n \Phi'(\htu_i(\rrkopt))\big(\tu(\rrkopt)) - \htu(\rrkopt)\big)\\
    \text{thus~~} C_1& \leq \sum_{i=1}^n \sum_{j=1}^n \Phi'(\htu_i(\rrkopt))(\muij-\hmuij)(\rrkopt_{ij}+\rrkopt_{ji})\v\\
    & = \sum_{i=1}^n \sum_{j=1}^n \big(
    \underbrace{\Phi'(\htu_i(\rrkopt))(\muij-\hmuij) + \Phi'(\htu_j(\rrkopt))(\muji-\hmuji)}_{=A_{ij}} \big)\rrkopt_{ij}\v
\end{align}
where, similarly to the proof of Lemma~\ref{lem:optiadd}, we swapped the indexed $(i,j)$ in the $\Phi'(\htu_i(\rrkopt))\muij\rrkopt_{ji})\v$, which is possible because $i$ and $j$ span the same range in the sum.

Notice that the terms $A_{ij}\rrkopt_{ij}\v$ are all zero except if $i\in\userS$ and $j\in\itemS$ (because $\rrkopt_{ijk} = 0$ otherwise).
For $i\in\userS$, let $\sigma_i$ be a ranking which ranks $(A_{ij})_{j\in\itemS}$ in decreasing order, i.e., $\sigma_i(j)<\sigma_i(j') \implies A_{ij} \geq A_{ij'}$. Using Lemma~\ref{lem:ranking} in Appendix~\ref{sec:fwproof}, we have: 
\begin{align}
    C_1&\leq \max_{\rrk\in\rrkS} \sum_{i\in\userS}^n \sum_{j\in\itemS} A_{ij}\rrkij\v
     = \sum_{i\in\userS}^n \sum_{j\in\itemS} A_{ij}\v_{\sigma_i(j)}
\end{align}
Now let $V = [\v_{\sigma_i(j)}]_{\substack{i\in\userS\\j\in\itemS}}$. By Cauchy-Shwarz inequality and denoting $\norm{X}_F = \sqrt{\sum_{ij}X_{ij}^2}$ the Frobenius norm of matrix $X$, we have $\norm{V}_F = \sqrt{n\norm{\v}_2^2}$ and $\norm{A}_F \leq B_1(\hmu)(\norm{\mu-\hmu}_F + \norm{\mu^\top - \hmu^\top}_F)$, leading to:
\begin{align}
    C_1 \leq \sqrt{n\norm{\v}_2^2}\norm{A}_F \leq 2B_1(\hmu)\sqrt{n\norm{\v}_2^2}\norm{\mu-\hmu}_F
\end{align}
which proves \eqref{eqproof:boundA}.

Similarly, using $B_2(\hmu) = \max_{i\in\intint{n}} \psi'(\tui(\hrrk), \alpha)$ and the same arguments as above, we obtain:
\begin{equation}
    C_2 \leq \sqrt{n\norm{\v}_2^2}\norm{A}_F \leq 2B_2(\hmu)\sqrt{n\norm{\v}_2^2}\norm{\mu-\hmu}_F
\end{equation}
which yields the desired result.
\end{proof}
\section{Comparison to utility/inequality trade-offs}\label{sec:proofs_fairness_exp}
In this appendix, we provide the proofs of Section~\ref{sec:fairness_exposure}, and describe more precisely how we applied quality-weighted exposure and equality of exposure in reciprocal recommendation.
\subsection{One-sided recommendation: quality-weighted exposure}
We prove here Proposition~\ref{thm:qw_onesided} of Section~\ref{sec:fairness_exposure}. The result shows that in some cases, compared to any choice of the parameter $\theta\in\Theta$ of the welfare approach, quality-weighted exposure leads to the undesirable behavior of \emph{decreasing user utility} while \emph{increasing inequalities of exposure} between items. Figure~\ref{fig:fairnessexpo} gives an example. 
\qualityweighted*
\begin{proof}
We prove it for $N=1$, the more general case is just obtained by repeating the pattern with $d+1$ items and $d+1$ users.

Let $i_1, ..., i_{d+1}$ be the indexes of the users and $j_1, ..., j_{d+1}$ the indexes of the items. The preferences have the following pattern:
\begin{align}
    \forall k\in\intint{d+1}, \mu_{i_kj_k} = 1&& \forall k\in\intint{d}, \mu_{i_kj_{d+1}}=\frac{1}{2}
\end{align}
all other $\muij$ (for user $i$ and item $j$) are set to $0$ (note that we are in a problem with one-sided preferences, which means $\muji=1$ for every item $j$ and user $i$.

We consider a task with a single recommendation slot ($\v_1=1, \v_2 = \ldots=\v_{\nitems}=0$). On that problem, the optimal ranking for every $\theta \in \Theta$ is to show item $j_k$ to user $i_k$, which leads to perfect equality in terms of item exposure, and maximizes every user utility. It is thus leximin optimal for both users and items for every $\theta\in\Theta$.

Then, the qualities are equal to:
\begin{align}
    \forall k\in \intint{d}, q_{j_k} = 1 && q_{j_{d+1}} = \frac{1}{2}d + 1
\end{align}
the target exposure is thus $t_{j_k} = \frac{d+1}{\frac{3}{2}d+1}$ for $k\in\intint{d}$ and $t_{j_{d+1}} = (d+1)\frac{\frac{1}{2}d+1}{\frac{3}{2}d+1}$. 

Since the problem is symmetric in the users $i_1, ..., i_d$, by the concavity of $\Objgquabeta(\tuP)$ with respect to $\rrk$, there is an optimal ranking described by a single probability $p$ as:
\begin{align}
    \forall k\in\intint{d}, \rrk_{i_kj_k} &= 1-p &&  \rrk_{i_kj_{d+1}} = p&&
    \rrk_{i_{d+1}j_{d+1}} = 1
\end{align}
Note that for such a $\rrk$, $\forall k\in\intint{d}$, $u^{\text{qua}, \beta}_{i_k}(\rrk) = 1-\frac{1}{2}p$, and it is clearr that there is $\beta>0$ such that $p>0$, which then implies $\tutheta \lorenz \tugquausers \text{~and~}
\tutheta\lorenz \tugquaitems$.
    
Now, as $\beta\to\infty$, $p$ is such that exposure equals %quality
its target, which leads to the following equation:
\begin{align}
d p+1 = (d+1)\frac{\frac{1}{2}d+1}{\frac{3}{2}d+1}.
\end{align}
We thus get $p = \frac{d+1}{d} \frac{d+2}{3d+2} - \frac{1}{d}\xrightarrow[d\to\infty]{}\frac{1}{3}$, which gives the result $u^{\text{qua}, \beta}_{i_k}(\rrk) = 1-\frac{1}{2}p\xrightarrow[p\to\frac{1}{3}]{} \frac{5}{6}$.

Notice that similarly to Proposition~\ref{prop:parityutility}, the result does not depend on the choice of $\tugqua$ because the sum of user utilities converges.
\end{proof}

\subsection{Reciprocal recommendation: equality of exposure}

We now prove Proposition~\ref{prop:parityutility}.
\parityutility*
\begin{proof}
The example is given in Figure~\ref{fig:fairnessexpo}. We still consider a recommendation task with a single recommendation slot.

Let us rename the users by $i_1, i_2, ..., i_5$. The preference patterns are $\mu_{i_1i_2} = \mu_{i_1i_3} = 1$ and $\mu_{i_4i_5} = 1$. Apart from $\muij=\muji$, other $\muij$s are $0$. In this proof, we show that $\tu_{i_1}^{{\rm eq}, \beta} = 2\tu_{i_2}^{{\rm eq}, \beta}$ for every $\beta$, which implies that $\tu_{i_1}^{{\rm eq}, \beta} \xrightarrow[\beta\to\infty]{}0$ because $0$ utility for every user is feasible. On this task, the leximin ranking also maximizes the sum of users utilities (as shown in Figure~\ref{fig:fairnessexpo}), so the optimal ranking is the same for every $\theta\in\Theta$, and every user has a two-sided utility of at least $1.5$.

Since $\Objutilbetau$ is stricly Schur-concave for $\beta>0$, $i_2$ and $i_3$ always have the same utility in an optimal utility profile (because they play a symmetric role). $i_4$ and $i_5$ also have the same utility. Note that the interest of $i_4$ and $i_5$ in that problem is to make it possible to recommend them to $i_1$, which has $0$ value. 

Similarly to the problem in one-sided recommendation, the only way to decrease the penalty is to reduce the utility of $i_1, i_4, i_5$. However, reducing the utility of $i_1$ can only be done by either recommending $i_4$ or $i_5$ to $i_1$, or recommending $i_4/i_5$ to $i_2/i_3$. In all cases, decreasing $i_1$'s utility decreases $i_2/i_3$'s utilities.

More precisely, because of the symmetries and the concavity of $\ObjutilbetauP$ with respect to $\rrk$, for every $\beta>0$, there is an optimal ranking tensor described by three probabilities $p, q, q'$ such that:\footnote{Since there is a single recommendation slot, we identify $\rrk_{ij1}$ with $\rrkij$}
\begin{align}
    \rrk_{i_1i_2} &\!=\! \rrk_{i_1i_3} \!=\!  \frac{1}{2}p
    & \rrk_{i_2i_1} & \!=\! \rrk_{i_3i_1} \!=\! q
    & \rrk_{i_4i_5} &\!=\! \rrk_{i_5i_4} \!=\! q'
    \\
    \rrk_{i_1i_4} &\!=\! \rrk_{i_1i_5} \!=\! \frac{1}{2}(1-p)
    &  \rrk_{i_2i_3} &\!=\! \rrk_{i_2i_4} \!=\! \rrk_{i_2i_5} \!=\! \frac{1}{3}(1-q)
    & \rrk_{i_4i_1} &\!=\! \rrk_{i_4i_2} \!=\! \rrk_{i_4i_3} \!=\! \frac{1}{3}(1- q')
    \\
    && \rrk_{i_3i_2} &\!=\! \rrk_{i_3i_4} \!=\! \rrk_{i_3i_5} \!=\! \frac{1}{3}(1-q)
    & \rrk_{i_5i_1} &\!=\! \rrk_{i_5i_2} \!=\! \rrk_{i_5i_3} \!=\! \frac{1}{3}(1- q')
    \\
    %
    % \rrk_{i_2i_3} &= \rrk_{i_2i_4} = \rrk_{i_2i_5} = \frac{1}{3}(1-q)\\
    % \rrk_{i_3i_2} &= \rrk_{i_3i_4} = \rrk_{i_3i_5} = \frac{1}{3}(1-q)\\
    % \rrk_{i_4i_5} &= \rrk_{i_5i_4} = q'\\
    %
    % \rrk_{i_4i_1} &= \rrk_{i_4i_2} = \rrk_{i_4i_3} = \frac{1}{3}(1- q')\\
    % \rrk_{i_5i_1} &= \rrk_{i_5i_2} = \rrk_{i_5i_3} = \frac{1}{3}(1- q')\\
\end{align}
In all cases, the two-sided utility are
\begin{align}
    u_{i_1}(\rrk) &= \underbrace{p}_{\substack{\rrk_{i_1i_2}\mu_{i_1i_2}+\rrk_{i_1i_3}\mu_{i_1i_3}\\\text{user-side utility}}} + \underbrace{2q}_{\substack{\rrk_{i_2i_1}\mu_{i_2i_1}+\rrk_{i_3i_1}\mu_{i_3i_1} \\\text{item-side utility}}}&\text{~and~}&&
    u_{i_2}(\rrk) &= q+\frac{1}{2}p
\end{align}
Thus, in an optimal ranking for $\Objutilbetau$, we must have $u_{i_1}(\rrk) = 2u_{i_2}(\rrk)$. Equality, which is achieved at $\beta\to\infty$ can then only be at $0$ utility for every user (since $0$ is feasible).

The task used in the proof contains only $5$ users. Any number of users can be added to the group $\{i_4, i_5\}$, with a ``complete'' preference profile ($\muij=1$ for all pair $i,j$ in that group).
\end{proof}

The Lorenz efficiency of our welfare approach guarantees that it cannot exhibit the undesirable behaviors of equality or quality-weighted exposure penalties described in Propositions \ref{thm:qw_onesided} and \ref{thm:fe_reciprocal}.

\subsection{Equality of exposure and quality-weighted exposure in reciprocal recommendation}\label{sec:parity_fairness_exposure}

In one-sided recommendation with one-sided preferences, equality of exposure is the same as equality of utility. More generally, let $\gexpjP = \sum_{i\in\userS} \rrkij\v$ the total exposure of item $j$. Equality of exposure is defined by:
\begin{align}\label{eq:parity}
    \ObjgparbetaP &= \sum_{i\in\userS} \UiP - \beta  \sqrt{\sum\limits_{j\in\itemS} \Big( \gexpjP - 
    \frac{\nusers}{\nitems}\norm{\v}_1\Big)^2}
\end{align}
In one-sided recommendation, parity of exposure is relatively well behaved because the exposure target $\frac{\nusers}{\nitems}\norm{\v}_1$ is constant. Driving towards equality can thus not lead to a decrease of the total exposure budget, which was the problem with equality of utility in settings with two-sided preferences (driving towards equality of utility leads to a decrease of total utility), as we described in Section~\ref{sec:fairness_exposure}. 

The formula allows us to extend parity of exposure in the next section and in our experiments, since it is also valid in reciprocal recommendation. 
Likewise, the formula of quality-weighted exposure that is also valid in reciprocal recommendation is given by:
\begin{align*}
   %\resizebox{0.95\textwidth}{!}{
   %\hspace{-.2cm}
   %$
    \ObjgquabetaP
    = \sum_{i\in\userS} \UiP 
    -\beta \sqrt{
    \sum\limits_{j\in\itemS} \Big( \gexpjP - 
    \frac{\quaj\totexp}{\totqua} \Big)^2}
    .
    %$}
    %\label{eq:parityutility}
\end{align*}

The result below shows that equality of exposure and quality-weighted exposure lead to inefficiencies in reciprocal recommendation settings:
\begin{proposition}\label{thm:fe_reciprocal}
For every $n\in\mathbb{N}_*$, there is a reciprocal recommendation task with $n$ users such that: % $\forall \theta\in\Theta$ and every $\beta>0$:
\begin{align}
    \forall \theta\in\Theta, \exists\beta>0:&&\tutheta\lorenz\tugpar
    &&
    \text{~and~}
    &&
    \tutheta\lorenz\tugqua.
\end{align}
Moreover,  $\displaystyle\lim_{\beta\rightarrow\infty} \sum_{i\in\userS} \tugpari = \frac{2}{n} \sum_{i\in\userS} \tuthetai $ and $\displaystyle\lim_{\beta\rightarrow\infty} \sum_{i\in\userS} \tugquai = \frac{2+n}{2n} \sum_{i\in\userS} \tuuvi$.
\end{proposition}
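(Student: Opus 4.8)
The plan is to exhibit a single \emph{star} task for which the welfare maximizer is the same for every $\theta\in\Theta$, so that the universally-quantified claim reduces to one profile comparison. I take one hub user $h$ together with $n-1$ leaf users, with symmetric preferences $\mu_{h\ell}=\mu_{\ell h}=1$ for every leaf $\ell$, all other entries equal to $0$, and a single recommendation slot ($\v_1=1$, $\v_k=0$ for $k\geq 2$). The first step is to pin down $\tutheta$. Since a leaf can gain utility only by being recommended the hub, rerouting any leaf's slot to $h$ strictly raises both that leaf's utility and the hub's exposure term $V_h$ while leaving every other coordinate unchanged (leaves have value $0$ to one another); by the monotonicity underlying Proposition~\ref{prop:utilitarian} every welfare maximizer must therefore have all leaves recommend $h$. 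It then remains only to fix how $h$ splits its single slot, and because $\psi(\cdot,\alpha)$ is strictly concave for $\alpha<1$ the unique optimum spreads it uniformly. Hence, for every $\theta\in\Theta$,
\[
  \tutheta_h = 1+(n-1),\qquad \tutheta_\ell = 1+\tfrac{1}{n-1}\ \text{for each leaf},\qquad \sumU\tuthetai = 2n,
\]
which also equals the utilitarian total $\sumU\tuuvi$, since the redistribution among leaves does not change the total.

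Next I compute the two limiting baselines. For equality of exposure the target is uniform ($e_j\to\norm{\v}_1=1$), which forces a permutation of recommendations; the utility-maximizing permutation pairs $h$ with a single leaf and deranges the rest, giving the sorted two-sided profile $(0,\dots,0,2,2)$ of total $4$, so $\lim_{\beta\to\infty}\sumU\tugpari = 4 = \tfrac{2}{n}\sumU\tuthetai$. For quality-weighted exposure the qualities are $q_h=n-1$ and $q_\ell=1$, so $Q=2(n-1)$, $E=n$, and the targets are $t_h=n/2$, $t_\ell=\tfrac{n}{2(n-1)}$. Since each leaf's user-side utility equals its probability of recommending $h$, and these probabilities sum to the hub exposure $e_h\to n/2$, the total user-side utility tends to $1+n/2$ and the two-sided total to $2+n=\tfrac{2+n}{2n}\sumU\tuuvi$, which gives the two ``moreover'' identities (both choice-independent because the relevant sums converge).

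The third step is Lorenz dominance against these limiting profiles. For equality of exposure the comparison is immediate: the cumulative sums of $\tutheta$ are $k\tfrac{n}{n-1}$ at position $k\le n-1$ and $2n$ at position $n$, which dominate $0,\dots,0,2,4$ coordinatewise for $n\geq 3$, strictly at position $1$. For quality-weighted exposure the hub value $1+n/2$ is the strict maximum for $n\geq 3$, so the first $n-1$ cumulative sums of $\tugqua$ involve only leaf utilities; the key tool is that, by majorization, the sum of the $k$ smallest of $n-1$ leaf utilities with fixed total $T=1+n/2$ is at most $k\tfrac{T}{n-1}$, and $\tfrac{T}{n-1}=\tfrac{n+2}{2(n-1)}\le\tfrac{n}{n-1}$ for $n\geq 2$, so $\tutheta$ again dominates at every position and strictly at position $1$. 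This yields $\tutheta\lorenz\tugpar$ and $\tutheta\lorenz\tugqua$ in the limit, irrespective of which optimal baseline profile is selected.

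The main obstacle is turning this clean $\beta\to\infty$ picture into an honest ``$\exists\beta>0$'' statement: I would argue that the baseline utility profiles converge to the limiting profiles as $\beta\to\infty$ (the $\sqrt{\cdot}$ penalty, with its exploding gradient at $0$, drives the exposures to their feasible targets), so that by continuity the strict cumulative-sum inequality at position~$1$ persists for all sufficiently large finite $\beta$, while the weak inequalities at the remaining positions are preserved as well. Handling the non-uniqueness of the baseline optima and the degenerate small-$n$ cases (where the three profiles coincide and strict dominance fails) requires the separate care noted above; this convergence-and-strictness bookkeeping, rather than any of the algebra, is where the real work lies.
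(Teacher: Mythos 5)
You follow essentially the same route as the paper: the same star/leader instance with one recommendation slot, the same welfare-optimal profile (hub utility $n$, leaf utilities $1+\tfrac{1}{n-1}$, total $2n$) for every $\theta$, the same limiting totals $4$ and $n+2$, and Lorenz dominance of that profile over the limiting baselines. One slip worth fixing: equality of exposure does not force a permutation --- the feasible set under equal exposure is doubly stochastic, and \emph{every} exposure-equal profile ties at total $4$ (the paper's own limiting solution is the uniform one, with leaf utilities $\tfrac{2}{n-1}$) --- so comparing only against $(0,\dots,0,2,2)$ is not by itself sufficient; the fix is immediate, however, since your own majorization bound (the $k$ smallest leaf utilities sum to at most $k$ times their average) dominates every exposure-equal profile, exactly as in your quality-weighted argument. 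Your caveats about converting the $\beta\to\infty$ picture into a finite-$\beta$ statement and about the degenerate cases $n\le 2$ are apt, but they apply equally to the paper's proof, which exhibits only one limiting solution, argues dominance at finite $\beta$ qualitatively (reducing the leader's exposure necessarily reduces the utility of the leader and of those the leader is shown to), and states the result for every $n$ even though strict dominance under this construction requires $n\ge 3$.
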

\begin{proof}
An example of extreme case is with $n$ users when there is a ``leader'' who is the only possible match with other users. We consider a single recommendation slot. The preferences are:
\begin{align}
    \forall j\in\{2, \ldots, n\}, \mu_{1j} = \mu_{j1} =1&&  \forall (i,j)\in\{2, \ldots, n\}^2, \mu_{ij}=0.
\end{align}
On this task, the for every $\theta\in\Theta$, the optimal ranking is given by:
\begin{align}
    \forall j\in\{2, \ldots, n\}, \rrk_{1j} = \frac{1}{n-1} && \forall i\in\{2, \ldots, n\}  \rrk_{i1} = 1.
\end{align}
The reason it is the only possible optimal ranking is because it is leximin optimal and has the maximum achievable sum of utilities.
The utilities are then $\tu_1(\rrk) = n$ and $\tu_i(\rrk) = 1+\frac{1}{n-1}$, which leads to $\sum_{i=1}^n \tu_i = 2n$.

\textbf{Equality of exposure}
Driving towards equality of exposure requires to reduce the exposure of user $1$, which in turn reduces the utility of user $1$ and the utilities of those who user 1 is less exposed to. Thus, there is $\beta>0$ such that  $\tutheta\lorenz\tugpar$ because of the loss of efficiency. Finally, by the concavity of the objective with respect to $\rrk$, and by the symmetry of the problem with respect to $i_2, \ldots, i_n$, we can conclude that an optimal way to achieve perfect equality of exposure is to recommend, to every user $i$, every user $j\neq i$ with probability $\frac{1}{n-1}$. The utility is then $\tu_1(\rrk) = 1+ (n-1)\frac{1}{n-1}$ and $\tu_i(\rrk) = \frac{2}{n-1}$ for $i\geq 2$, leading to $\sum_{i=1}^n \tu_i = 4$, which gives the result.

\textbf{Quality-weighted exposure} On the same example, the qualities are $q_1 = n-1$ and $q_i = 1$ for $1\geq 2$. The total exposure targets are then $t_1 = \frac{1}{2}n$ and $t_i = \frac{n}{2(n-1)}$. These exposure targets mean less exposure for $1$ than in the leximin ranking. Thus $\beta$ sufficiently large has the effect of reducing $1$'s exposure\footnote{Direct calculations of the derivatives show that when $\beta>0$ is too small the penalty has no effect.}, which reduces the utility of $1$ and the users to whom $1$ is less recommended. Thus $\tutheta\lorenz\tugqua$. By the symmetry of the problem, as $\beta\to\infty$, quality weighted exposure is achieved by setting:
\begin{align}
    \forall j\in\{2,\ldots,n\}: \rrk_{1j} &= \frac{1}{n-1} & \rrk_{j1} = \frac{n}{2(n-1)} \\
    \forall j'\in\{2,\ldots,n\}, j'\neq j, \rrk_{jj'} &= \frac{1 - \frac{n}{2(n-1)}}{n-2} = \frac{1}{2(n-1)}
\end{align}
The utilities are then $\tu_1(\rrk) = 1+(n-1)\frac{n}{2(n-1)} = 1+\frac{n}{2}$ and $\tu_i(\rrk) = \frac{n}{2(n-1)} + \frac{1}{n-1} = \frac{n+2}{2(n-1)}$. The total utility is thus $2+n$, which gives the result.
\end{proof}
The Lorenz efficiency of our welfare approach guarantees that it cannot exhibit the undesirable behaviors of parity or quality-weighted exposure penalties described in Propositions \ref{thm:qw_onesided} and \ref{thm:fe_reciprocal}.

\begin{figure}
    \centering
    \includegraphics[height=3.11cm,trim={.22cm 3.3cm 6.4cm 0cm}, clip]{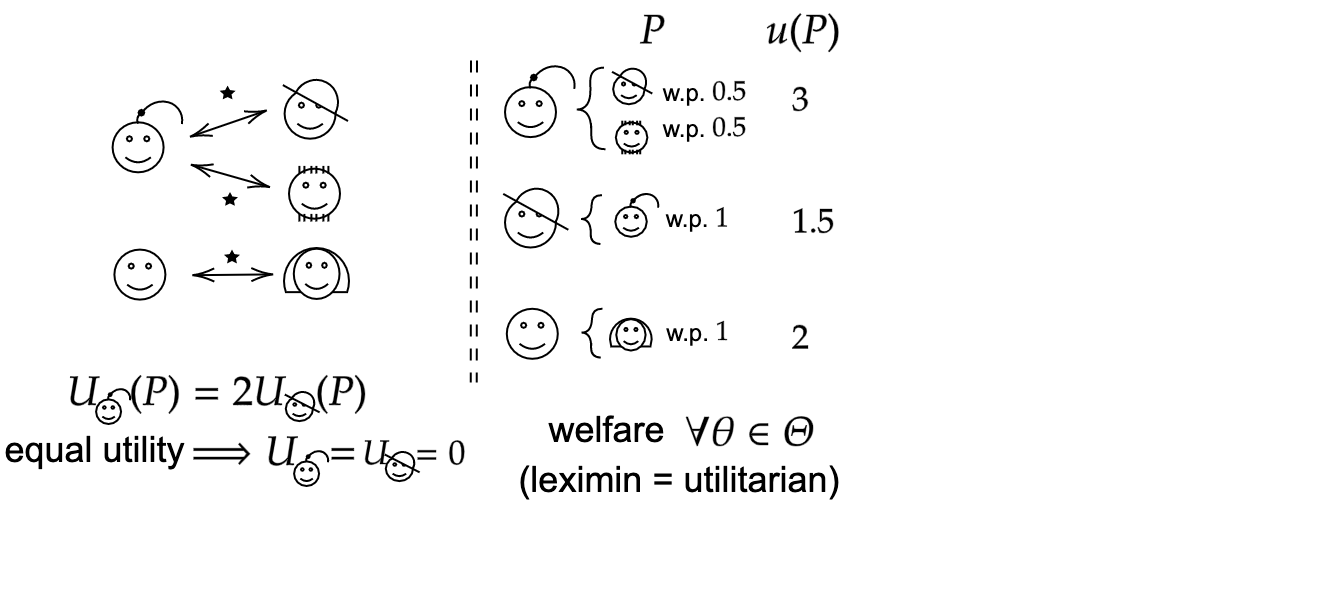}\quad\quad
    \includegraphics[height=3.11cm,trim={0 5.2cm 14.9cm 0cm},clip]{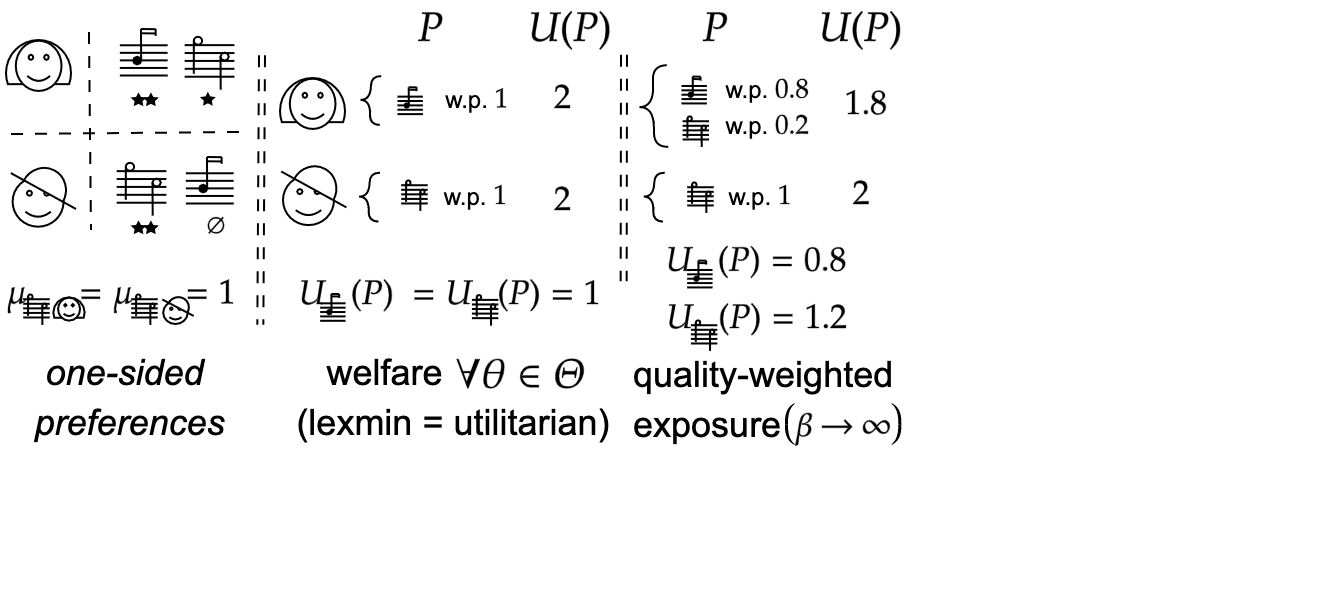}
    % \caption{\textbf{Left:} Example of a reciprocal recommendation task where equality of utility leads to $0$ utility. There is one recommendation slot per user. We give the recommendation probabilities and utilities for the utilitarian ranking and three users, the other ones are obtained by the symmetry of the problem. The utilitarian ranking is also leximin optimal, so our approach yields the same recommendations for all $\theta$. The two top-right users are equivalent, so the utility of the top left user is twice the utility of the top right user, hence equality is possible only at $0$ utility.
    % \textbf{Right:} Example where quality-weighted exposure reduces user utility while increasing inequalities between items.}
    \caption{\textbf{Left:} Example of a reciprocal recommendation task where equality of utility leads to $0$ utility (see the proof of Prop. \ref{prop:parityutility} in App. \ref{sec:proofs_fairness_exp}). There is one recommendation slot per user. We give the recommendation probabilities and utilities for the utilitarian ranking and three users, the other ones are obtained by the symmetry of the problem. The utilitarian ranking is also leximin optimal, so our approach yields the same recommendations for all $\theta$. 
    \textbf{Right:} Example where quality-weighted exposure reduces user utility while increasing inequalities between items.}
    \label{fig:fairnessexpo}
\end{figure}

\section{A generic Frank-Wolfe algorithm for ranking}\label{sec:fwproof}

In this section, we present a general form of our algorithm presented in Section~\ref{sec:algo}, as well as the proofs of the claims.

Let $F:\Re^n\rightarrow \Re$, concave, and we want to find
\begin{equation}\label{eq:genealobjfw}
    \rrkopt\in\argmaxP F(\tuP).
\end{equation}

Let $\xdp{X}{Y} = \sum_{ijk} X_{ijk}Y_{ijk}$ be the dot product between three-way tensors, and let $\nabla (F\circ\tu)(\rrk)$ be the gradient of $\rrk\mapsto F(\tu(\rrk))$ taken at $\rrk$, i.e., $(\nabla (F\circ\tu))_{ijk} = \frac{\partial F\circ\tu}{\partial \rrkijk}$

Starting from $\rrk^{(0)}\in\rrkS$ (in our experiments we always use a utilitarian ranking $\rrk^{(0)}\in\argmaxP \sum_{i=1}^n \tuiP$), the Frank-Wolfe algorithm alternates two steps for $t\geq 1$:
\begin{enumerate}
    \item let $\tilde{\rrk} \in \argmaxP \xdp{\rrk}{\nabla (F\circ\tu)(\rrk^{t-1})}$
    \item $\rrk^{(t)} = (1-\gamma^{(t)}) \rrk^{(t-1)} + \gamma^{(t)} \tilde{\rrk}$ with $\gamma^{(t)} = \frac{2}{t+2}$
\end{enumerate}
The stepsize $\frac{2}{t+2}$ is from \citet[Section 3]{clarkson2010coresets}, which avoids a line search and in our experiments seemed to yield acceptable results. Irrespective of the step size, the fundamental results which allows to use Frank-Wolfe in the setting of \eqref{eq:genealobjfw} are the two following lemmas:
\begin{lemma}\label{lem:derivative}
Recall that $ \tuiP = \sum_{i=1}^n \mu_{ij}(\rrkij + \rrkji)\v$. Let $\frac{\partial F}{\partial \bu_i}$ denote the derivative of $F$ with respect to its $i$-th argument and $\frac{\partial F}{\partial \bu_i}(\tuP)$ the value of this derivative at $\tu(P)$. 

Then, $\forall i\in\userS, \forall j\in\itemS, \forall k\in \intint{\nitems}$, we have:
\begin{align}
\frac{\partial F\circ\tu}{\partial \rrkijk}(\rrk) = 
\Big( \muij\frac{\partial F}{\partial \bu_i} \big(\tuP\big) + \muji\frac{\partial F}{\partial\bu_j} \big(\tuP\big) \Big) \v_k.
\end{align}
\end{lemma}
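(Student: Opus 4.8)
The plan is to compute the gradient by a direct application of the chain rule, since $F\circ\tu$ is the composition of the concave function $F$ with the \emph{linear} map $\rrk\mapsto\tuP$ whose $\ell$-th coordinate is $\tu_\ell(\rrk)=\sum_{m=1}^n \mu_{\ell m}(\rrk_{\ell m}+\rrk_{m\ell})\v$. First I would write
\[
\frac{\partial F\circ\tu}{\partial \rrkijk}(\rrk)=\sum_{\ell=1}^n \frac{\partial F}{\partial \bu_\ell}(\tuP)\,\frac{\partial \tu_\ell(\rrk)}{\partial \rrkijk},
\]
reducing everything to the partial derivatives of the coordinates $\tu_\ell$ of the linear utility map.

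Next I would expand $\tu_\ell(\rrk)=\sum_{m=1}^n\sum_{k'=1}^{\nitems}\mu_{\ell m}(\rrk_{\ell m k'}+\rrk_{m\ell k'})\v_{k'}$ and track in which summands the single scalar variable $\rrkijk$ occurs. Assuming $i\neq j$, it appears in exactly two places across all $\ell$: once in $\tu_i$, through the ``outgoing'' term $\muij\rrk_{ijk}\v_k$ (the $\rrk_{\ell m k'}$ contribution with $\ell=i,m=j,k'=k$), giving $\frac{\partial \tu_i}{\partial \rrkijk}=\muij\v_k$; and once in $\tu_j$, through the ``incoming'' term $\muji\rrk_{ijk}\v_k$ (the reversed $\rrk_{m\ell k'}$ contribution with $m=i,\ell=j,k'=k$), giving $\frac{\partial \tu_j}{\partial \rrkijk}=\muji\v_k$. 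For every $\ell\notin\{i,j\}$ the variable does not occur, so $\frac{\partial \tu_\ell}{\partial \rrkijk}=0$.

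Substituting these into the chain-rule expression collapses the sum over $\ell$ to its $\ell=i$ and $\ell=j$ terms and gives
\[
\frac{\partial F\circ\tu}{\partial \rrkijk}(\rrk)=\Big(\muij\frac{\partial F}{\partial\bu_i}(\tuP)+\muji\frac{\partial F}{\partial\bu_j}(\tuP)\Big)\v_k,
\]
which is the claimed formula.

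The computation is routine; the only real work is the bookkeeping, and the one step I would be careful about is the symmetric ``reversed'' term $\rrk_{m\ell k'}$ in the definition of $\tu_\ell$. It is precisely this term that makes $\rrkijk$ feed into $\tu_j$ rather than $\tu_i$, and with coefficient $\muji$ instead of $\muij$, thereby producing the second, item-side term of the formula. I would also record explicitly that, since $\rrkijk$ is a single coordinate which (for $i\neq j$) appears at most once inside any $\tu_\ell$, there is no double counting; the diagonal case $i=j$ relevant to reciprocal recommendation is handled identically using the convention that $\rrk$ is padded with zeros on irrelevant entries.
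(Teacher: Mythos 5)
Your proposal is correct and matches the paper's own proof essentially step for step: both apply the chain rule to the composition $F\circ\tu$ and then compute $\frac{\partial \tu_p(\rrk)}{\partial \rrkijk} = (\muij\mathbb{1}[p=i] + \muji\mathbb{1}[p=j])\v_k$ by tracking the two occurrences of $\rrkijk$ (the ``outgoing'' and ``reversed'' terms) in the utility formula. Your extra remarks about the absence of double counting and the zero-padded diagonal case are consistent with, though not spelled out in, the paper's argument.
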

\begin{proof}
The result is a consequence of the chain rule:
\begin{align}
    \frac{\partial F\circ\tu}{\partial \rrkijk}(\rrk) = \sum_{p=1}^n \frac{\partial F}{\partial \bu_p}(\tuP) \frac{\partial \tu_p(\rrk)}{\partial \rrkijk}
\end{align}
With 
\begin{equation}
    \tu_p(\rrk) = \sum_{q=1}^n \mu_{pq} \sum_{r=1}^{\nitems} (\rrk_{pqr}+\rrk_{qpr})\v_k.
\end{equation}
Thus $\frac{\partial \tu_p(\rrk)}{\partial \rrkijk} = (\mu_{ij}\indic{p=i} + \mu_{ji}\indic{p=j})\v_k$, which gives the desired result.
\end{proof}
\begin{lemma}\label{lem:ranking}
Let $A$ be an $\n\times\n$ matrix with $A_{ij}\in\Re$ (not necessarily non-negative). Let $\v\in\Re^{\nitems}$ with non-negative and non-increasing entries, i.e., $\forall k\in\intint{\nitems-1}$, $\v_k\geq \v_{k-1}\geq 0$. Let $K$ be the last index such that $\v_K>0$ (or $K=\nitems$ if there is no such index).

Let $\rrk\in\rrkS$ such that:
\begin{equation}
    \forall i, \forall \sigma_i\in\rkS(\rrki), \forall (j, j')\in\itemS^2: ~~\Big(\sigma_i(j) \leq K \text{~and~} \sigma_i(j) < \sigma_i(j') \implies A_{ij} \geq A_{ij'} \Big).
\end{equation} 
And let $X$ be the $\n\times\n\times\nitems$ tensor defined as $X_{ijk} = A_{ij}\v_k$.

Then $\rrk\in\argmaxP \xdp{\rrk}{X}$.

Moreover, if $\forall k\in\intint{\nitems-1}$, $\v_k> \v_{k-1}\geq 0$, then for every $\rrk\in\argmaxP \xdp{\rrk}{X}$, we have:
\begin{equation}
    \forall i, \forall \sigma_i\in\rkS(\rrki), \forall (j, j')\in\itemS^2: ~~\Big(\sigma_i(j) < \sigma_i(j') \implies A_{ij} \geq A_{ij'} \Big).
\end{equation}
\end{lemma}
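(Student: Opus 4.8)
The structural fact I would exploit first is that the objective decomposes over the row (user) index: $\xdp{\rrk}{X}=\sum_{i}\sum_{j\in\itemS}\sum_k \rrkijk A_{ij}\v_k$, and the constraints defining $\rrkS$ decouple accordingly, so that for each fixed $i$ the slice $(\rrkijk)_{j\in\itemS,\,k}$ ranges \emph{independently} over the Birkhoff polytope of doubly stochastic matrices (by Birkhoff--von Neumann this polytope is exactly the set of marginal matrices of distributions over rankings). Hence it suffices to analyze, for one user $i$, the linear-assignment subproblem $\max g_i(\rrki)$ with $g_i(\rrki)=\sum_{j,k}\rrkijk A_{ij}\v_k$, and then assemble the per-user conclusions. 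Writing $h_i(\sigma)=\sum_{j\in\itemS}A_{ij}\v_{\sigma(j)}$ for a deterministic ranking, the per-user optimum is $\max_\sigma h_i(\sigma)$. Everything then rests on one elementary exchange estimate: if items $j,j'$ occupy ranks $k_1<k_2$, then swapping them (on a permutation), equivalently shifting mass $\epsilon$ along the $2$-cycle $(j,k_1),(j',k_2)\mapsto(j,k_2),(j',k_1)$ (on a marginal matrix), changes the per-user objective by $(\v_{k_1}-\v_{k_2})(A_{ij'}-A_{ij})$; since $\v$ is non-increasing the move is strictly improving precisely when $\v_{k_1}>\v_{k_2}$ and $A_{ij}<A_{ij'}$, and the mass-shift version preserves all row and column sums, hence feasibility.

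\textbf{Sufficiency (first claim).} Given $\rrk$ satisfying the hypothesis, I would decompose each slice as $\rrki=\sum_\sigma \pi_\sigma M^\sigma$ into permutation matrices. Every $\sigma$ with $\pi_\sigma>0$ has $\rrk_{ij\sigma(j)}\ge\pi_\sigma>0$ for all $j$, so $\sigma\in\rkS(\rrki)$ and the hypothesis applies to it. Using $\v_k=0$ for $k>K$, the hypothesis forces each such $\sigma$ to place the $K$ items of largest $A_{ij}$ at the positive-weight ranks $1,\dots,K$ in non-increasing order of $A_{ij}$; by the rearrangement inequality this attains $\max_\sigma h_i(\sigma)$. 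Since $\xdp{\rrk}{X}=\sum_i\sum_\sigma \pi_\sigma h_i(\sigma)$ with every active $\sigma$ maximal, $\rrk\in\argmaxP\xdp{\rrk}{X}$.

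\textbf{Necessity (second claim).} Now let $\v$ be strictly decreasing and $\rrk\in\argmaxP\xdp{\rrk}{X}$, so each $\rrki$ is optimal for its subproblem. I would fix any $\sigma\in\rkS(\rrki)$ and any $j,j'$ with $\sigma(j)=k_1<k_2=\sigma(j')$; by the definition of $\rkS(\rrki)$ both $\rrk_{ijk_1}>0$ and $\rrk_{ij'k_2}>0$. Were $A_{ij}<A_{ij'}$, the feasible mass-shift above would \emph{strictly} increase the objective (here $\v_{k_1}>\v_{k_2}$ by strict monotonicity), contradicting optimality; hence $A_{ij}\ge A_{ij'}$, which is exactly the unrestricted sorting condition.

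\textbf{Main obstacle.} The delicate point, and the reason the marginal (rather than decomposition-based) definition of $\rkS(\rrki)$ is convenient, lies in the necessity step: the $2$-cycle argument uses only that the two marginal entries $\rrk_{ijk_1},\rrk_{ij'k_2}$ are positive, so it applies simultaneously to \emph{every} $\sigma\in\rkS(\rrki)$ without committing to a particular Birkhoff decomposition. The secondary care point is the zero-weight tail: the threshold $K$ must be threaded through the non-strict first claim (swaps among zero-weight ranks are neutral, so only placements into ranks $\le K$ are constrained), whereas strict monotonicity eliminates all ties in $\v$ and thereby removes the ``$\sigma(j)\le K$'' restriction in the converse.
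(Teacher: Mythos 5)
Your proof is correct, and its skeleton---decomposing the linear objective user by user over a product of Birkhoff polytopes and invoking the rearrangement inequality---is the same as the paper's. The genuine difference is in the converse (strict-monotonicity) direction. The paper reduces to deterministic rankings: it asserts that every permutation in $\rkS(\rrki)$ must itself be optimal for the per-user assignment problem and then applies the strict (``reciprocal'') form of the rearrangement inequality; justifying that assertion implicitly requires the fact that any permutation in the marginal support can be given positive weight in \emph{some} Birkhoff decomposition (peel off $\epsilon M^{\sigma}$ and renormalize), a step the paper leaves unstated. Your two-cycle mass-shift argument instead works directly on the marginal matrix: it needs only that the two entries $\rrk_{ijk_1}$ and $\rrk_{ij'k_2}$ are positive and that the perturbation preserves row and column sums, so it bypasses both the converse rearrangement inequality and the decomposition fact, making the necessity step more elementary and self-contained. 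In the sufficiency direction the two arguments coincide in substance, but yours is more explicit where the paper is terse: you spell out that every permutation appearing with positive weight in a decomposition of $\rrki$ lies in $\rkS(\rrki)$, hence is covered by the hypothesis and attains the per-user maximum, and your treatment of the zero-weight tail (ranks $k>K$) matches the paper's ``top-$K$ ranking'' remark. Both proofs are sound; yours trades a citation to a classical converse inequality for a short local perturbation argument.
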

\begin{proof}
The result stems from the rearrangement inequality (also known as the Hardy-Littlewood inequality \citep{hardy1952inequalities}), which states that for two vectors $a\in\Re_+^n$, and $b\in\Re^n$, $\argmax_{\nu} \sum_{j=1}^n a_{\nu(j)} b_j$, where $\nu$ spans the permutations of $\intint{n}$, is the set of permutations such that $b$ is ordered similarly to $(a_{\nu(i)})_{i=1}^n$. If the $a_k$s are non-increasing, then every permutation that sorts $b$ in decreasing order is in the argmax. We need the reciprocal statement for the second part of our Lemma: if the $a_i$s are strictly decreasing, then only the permutations that sort $b$ in decreasing order are in $\argmax_{\nu} \sum_{j=1}^n a_{\nu(j)} b_{j}$. Note that these arguments are well-known in learning to rank \citep[see, e.g., ][]{cossock2008statistical}.

In our case, notice that 
\begin{equation}
    \xdp{\rrk}{X} = \sum_{i\in\userS} \Big(\sum_{j\in\itemS} A_{ij}\rrkijk\v_k\Big)
\end{equation}
The maximization over $\rrk$ can then be performed over each user $i$ (and thus each bistochastic matrix $\rrki$ separately). Now, if $\rrki$ is such that every $\sigma_i\in\rkS(\rrki)$ orders $A_{ij}$ in decreasing order, then by the rearrangement inequality $\sigma_i\in\argmax_{\nu} \sum_{j\in\itemS} A_{ij} \v_{\nu(j)}$. Notice that if only the $K$ first elements of $\v$ are non-zero, we only need a top-$K$ ranking. This gives us the first part of the thoerem.

The second part of the theorem follows from the reciprocal of the rearrangement inequality, since for $\rrki$ to be an optimal stochastic ranking for $\sum_{j\in\itemS} A_{ij}\rrkijk\v_k$, every permutation $\sigma_i$ in its support must be in $\argmax_{\nu} \sum_{j\in\itemS} A_{ij} \v_{\nu(j)}$.
\end{proof}

\subsection{Proof of Theorem \ref{lem:fwgrad}}
Lemma~\ref{lem:derivative} and~\ref{lem:ranking} together are sufficient to give algorithms for the inference of stochastic rankings using our welfare function \eqref{eq:inference} and using the penalties of Section~\ref{sec:fairness_exposure}, by computing the partial derivatives $\frac{\partial F}{\partial \bu_i}$. The main result of Section~\ref{sec:algo}, which we prove now, instantiates this principle for the welfare function approach:
\frankwolfemax*
\begin{proof}
Notice that with $W(P)=F(\tuP)=\sum_{i=1}^n \Phii(\tuiP)$, then $\frac{\partial F}{\partial \bu_i}(\tuP) = \Phii'(\tuiP)$. By Lemma~\ref{lem:derivative}, we have that $\xdp{\rrk}{\nabla F(\rrk^{(t}})$ is of the form $\xdp{\rrk}{X}$ with $X_{ijk} = A_{ij}\v_k$ with $A_{ij} = \tempmuij$, so the result is implied by Lemma~\ref{lem:ranking}.
\end{proof}

\subsection{Proof of Proposition \ref{prop:fwcomplexity}}
\fwcomplexity*
\begin{proof}
Note that $\rrkS$ is a simplex over ranking tensors containing one deterministic ranking for each user. Using \citep[Section 3]{clarkson2010coresets}, the Frank-Wolfe algorithm with our step-size converges in $O\big(\frac{C_W}{t}\big)$, where, using \citep[Equation 11]{clarkson2010coresets} and denoting by $\nabla^2 W$ the Hessian of $W$, we have
\begin{align}
    C_W &\leq \sup\limits_{\substack{\bu,\bv\in\US\\ \tilde{\bf u}\in\US}} -\frac{1}{2}(\bu-\bv)^\top \nabla^2 W(\tilde{\bf u}) (\bu-\bv)
     \leq \frac{B}{2} \sup\limits_{\bu,\bv\in\US} \norm{\bu-\bv}^2_2 \leq 2B U.
\end{align}
where we used $\norm{\bu-\bv}_2^2\leq 2\norm{\bu}_2^2+2\norm{\bv}_2^2$.

For the computation cost, we use Lemma~\ref{lem:ranking}, which is more precise than Theorem \ref{lem:fwgrad}, to see that finding the argmax only requires a top-$K$ ranking. While technically any $\rrk\in\rrkS$ should contain a whole bistochastic matrix, it is not necessary to store a completion of the top-$K$ rankings because they have no impact on the utility. As such, storing each $\tilde{\rrk}$ only costs $O(K)$ bytes per user, which contain the indices of the top-K items in the ranking found by Theorem 2. 

Computing the two-sided utilities costs $O(\nusers\nitems)$, and thus $O(\nitems)$ per user. Moreover, computing the top-K ranking costs $O(\nitems\ln K)$ in the worst case, with a streaming method that maintains a min-heap of the top-K elements seen so far, and finish with sorting the top-$K$ elements. 
\end{proof}
Notice that for faster average performance, the top-K sort can be performed using a fast selection algorithm (such as quickselect), to obtain the top-$K$ elements with $O(\nitems)$ expected time complexity, and then sorting, yielding $O(\nitems+K\ln K)$ expected time complexity per user at each iteration. 
\newpage

\section{Additional experimental results}\label{app:experiments}

Our experiments are fully implemented in Python 3.9 using PyTorch\footnote{\url{http://pytorch.org}}. We provide the code as supplementary material. We compare our welfare maximization approach with the fairness penalties presented in Section \ref{sec:fairness_exposure}. 

We also compare ourselves to the algorithm FairRec from \citet{patro2020fairrec} (referred to as \patro in the figures and description), who consider envy-freeness as user-side fairness criterion, and max-min share of exposure as item-side fairness criterion. Envy-freeness states that every user should prefer their recommendation list to that of any other user. The max-min exposure criterion on the item side means that each user should receive an exposure of at least $\beta \frac{E}{\nitems},$ where $\beta$ is a parameter allowing to control how much exposure is guaranteed to items. We vary this parameter in our experiments to show the trade-offs achieved by \patro. Since \patro does not produce rankings, we took the recommendation list with the given order as a ranked list.

\subsection{One-sided recommendation: \lastfmsmall dataset}\label{app:experiments:1sided}

We describe in this section the details of the experiments presented in Section \ref{sec:experiments:1sided}. We use a dataset from the online music service Last.fm\footnote{\url{https://www.last.fm/}}. In the main paper, we presented results on \textbf{Lastfm-2k} from \citet{Cantador:RecSys2011} which contains real play counts of $2k$
users for $19k$ artists, and was used by \citet{patro2020fairrec} who also study two-sided fairness in recommendation. We filter the top $2,500$ items most listened to. Following \cite{johnson2014logistic}, we pre-process the raw counts with $\log$-transformation. We split the dataset into train/validation/test sets, each including $70\%/10\%/20\%$ of the user-item play counts. We create three different splits using three random seeds. One-sided preferences are estimated using the standard matrix factorization algorithm\footnote{Using the Python library Implicit: \url{https://github.com/benfred/implicit} (MIT License).} of \citet{hu2008collaborative} trained on the train set, with hyperparameters selected on the validation set by grid search. The number of latent factors is chosen in $[16, 32, 64, 128]$, the regularization in $[0.1, 1., 10., 20., 50.]$, and the confidence weighting parameter in $[0.1, 1., 10., 100.]$. The estimated preferences we use are the positive part of the resulting estimates.

Rankings are inferred from these estimated preferences. The exposure weights we use in the computation of utilities are the standard weights of the \emph{discounted cumulative gain} (DCG) (also used in e.g., \cite{singh2018fairness,biega2018equity,morik2020controlling}): $\forall k \in \intint{\nitems}, v_k = \frac{1}{\log_2(1+k)}$. For each ranking approach, the Frank-Wolfe algorithm is run with $5000$ iterations to make sure we are close to convergence, and the number of recommendation slots is set to $40$. 

We evaluate rankings on estimated preferences, considered as ground truth, following many works on fair recommendation \citep{singh2018fairness,patro2020fairrec,wang2020fairness,wu2021tfrom}. This is because the goal is to evaluate the fairness of ranking algorithms themselves, rather than biases in preference estimates. All results are averaged over three random seeds. %On the figures of this appendix, standard deviations across seeds are shown with confidence bands around the average Lorenz curves. However, when variations are too small compared to the scale of utilities, the confidence bands are not perceptible. 
To obtain various trade-offs, for \welf we vary $\lambda$ in $[0.001,0.01,0.05,0.075,0.1,0.125,0.15,0.2,0.3,0.325,0.35]$ and $[0.4,0.45,0.5,0.55,0.6,0.65,0.7,0.75,0.8,0.9,0.95,0.99,0.999]$. For \patro we vary $\beta$ in $[0.01,0.05,0.1,0.15,0.2,0.25,0.3,0.35,0.4]$ and $[0.45,0.5,0.55,0.6,0.65,0.7,0.75,0.8,0.85,0.9,0.95,1]$, and for other methods we vary $\beta$ in   $[0.001,0.005,0.01,0.015,0.0175,0.02,0.025,0.03,0.035,0.04,0.045,0.05,0.055,0.06]$ and $[0.065,0.07,0.075,0.08,0.085,0.09,0.095,0.1,0.105,0.11,0.2,0.5,1,2,5,10,20,30,40,50,70,100]$.

\paragraph{Item-side fairness} Figure \ref{fig:appendix_lastfmsmall_all_trade_offs} presents the various trade-offs achieved by each method in one-sided recommendation, as discussed in Section \ref{sec:experiments:1sided}. We observe that only \qualexpo is unable to reach equal exposure because of its quality-weighted exposure target: perfectly equal exposure is only permitted when all items have the same quality.

\paragraph{Two-sided fairness} Figure \ref{fig:appendix-lastfm-userfairness} shows the effect of varying $\alpha_1$ and $\lambda$ on user fairness as in Figure \ref{fig:lastfm-userfairness} of the main paper, but with results repeated over three random seeds. We observe the same trade-offs and conclude again that \welf is better than \patro and \equalexpo, in terms of its impact on worse-off users.

\paragraph{The importance of considering the whole Lorenz curve} In Fig.~\ref{fig:appendix-lastfm-userfairness-std} we show the results of the same models as before, but changing the way we measure the item inequality: using the standard deviation of exposure rather than the Gini index. Now, \equalexpo dominates the total utility/item inequality plot, since the plot corresponds exactly to the objective function of the algorithm. Comparing \equalexpo with \welf $\alpha_1=1$, we now see that the trade-offs are different, with \equalexpo performing better on the worse-off users. Comparing \welf $\alpha_1=0$ and \patro, we see that they still exhibit similar behaviors, with \welf $\alpha_1=0$ being better for better off users. Finally, \welf $\alpha_1=-2$ still dominates the othe methods in terms of performance on the worse-off users.

\begin{figure}
    \centering
    \begin{subfigure}[b]{0.23\linewidth}
    \centering
    \includegraphics[width=\linewidth]{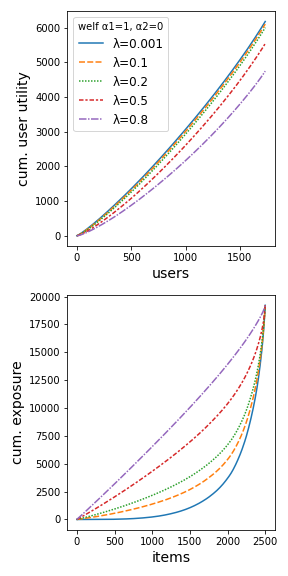}
    \caption{Welfare (ours)}
    \label{fig:my_label}
    \end{subfigure}
    \begin{subfigure}[b]{0.23\linewidth}
    \centering
    \includegraphics[width=\linewidth]{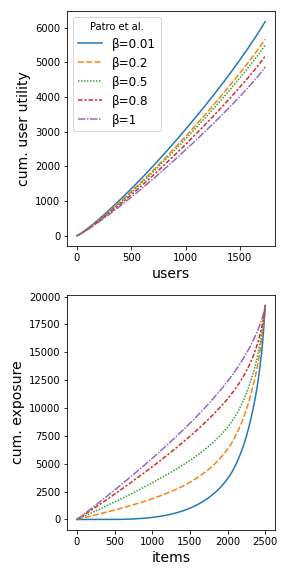}
    \caption{\patro}
    \label{fig:my_label}
    \end{subfigure}
    \begin{subfigure}[b]{0.23\linewidth}
    \centering
    \includegraphics[width=\linewidth]{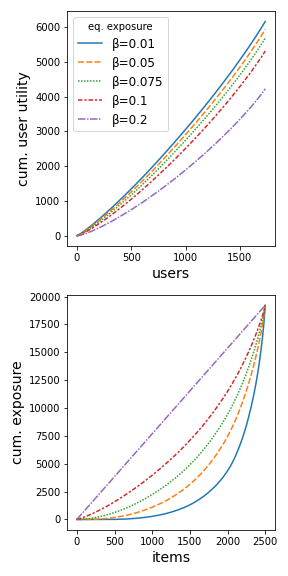}
    \caption{\equalexpo}
    \label{fig:my_label}
    \end{subfigure}
    \begin{subfigure}[b]{0.23\linewidth}
    \centering
    \includegraphics[width=\linewidth]{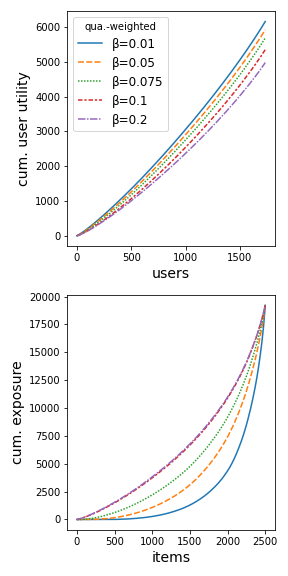}
    \caption{\qualexpo}
    \label{fig:my_label}
    \end{subfigure}
    \caption{representative trade-offs achieved by the various compared methods on \lastfmsmall. The trade-offs achieved by the different methods look alike, except that \qualexpo does not aim at reaching equality of exposure for exteme values of $\beta$. See Section~\ref{sec:experiments:1sided} for the discussions on the differences between the trade-offs achieved by the different approaches.}
    \label{fig:appendix_lastfmsmall_all_trade_offs}
\end{figure}

\begin{figure}
    \centering
    \begin{subfigure}[b]{0.23\textwidth}
         \centering
         \includegraphics[width=\linewidth]{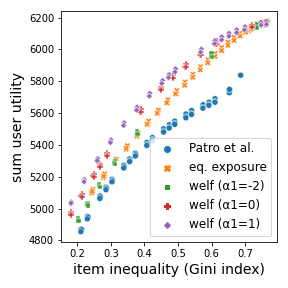}
         \caption{Total user utility.\\~}
         \label{fig:appendix-lastfmsmall_userfairness_utility}
     \end{subfigure}\begin{subfigure}[b]{0.23\textwidth}
         \centering
         \includegraphics[width=\linewidth]{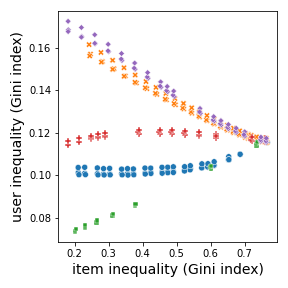}
         \caption{User inequality.\\~}
         \label{fig:appendix-lastfmsmall_userfairness_giniutility}
     \end{subfigure}\begin{subfigure}[b]{0.23\textwidth}
         \centering
         \includegraphics[width=\linewidth]{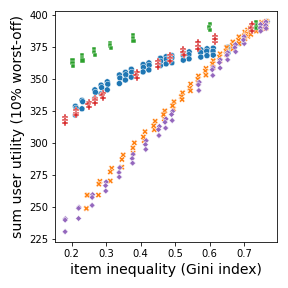}
         \caption{Cumulative utility for 10\% worse-off users}
         \label{fig:appendix-lastfmsmall_userfairness_utilitytenp}
     \end{subfigure}
     \begin{subfigure}[b]{0.23\textwidth}
         \centering
         \includegraphics[width=\linewidth]{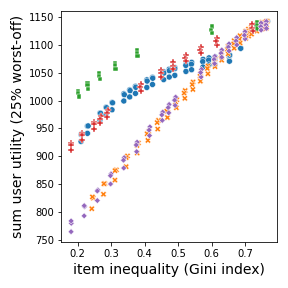}
         \caption{Cumulative utility for 25\% worse-off users}
         \label{fig:appendix-lastfmsmall_userfairness_utilitytwfp}
     \end{subfigure}
    \caption{Focus on user fairness on \lastfmsmall: effect of varying $\alpha_1$ (user-side curvature of the welfare function) keeping $\alpha_2=0$. The figure shows all the results obtained with a repetition of three seeds. Overlapping points correspond to the same model parameter across different seeds. We can see that the variance is negligible compared to the observed differences.}
    \label{fig:appendix-lastfm-userfairness}
\end{figure}

\begin{figure}
    \centering
    \begin{subfigure}[b]{0.23\textwidth}
         \centering
         \includegraphics[width=\linewidth]{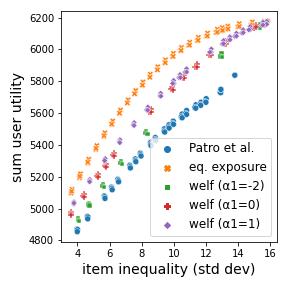}
         \caption{Total user utility.\\~}
         \label{fig:appendix-lastfmsmall_userfairness_utility}
     \end{subfigure}\begin{subfigure}[b]{0.23\textwidth}
         \centering
         \includegraphics[width=\linewidth]{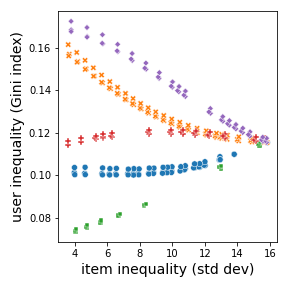}
         \caption{User inequality.\\~}
         \label{fig:appendix-lastfmsmall_userfairness_giniutility}
     \end{subfigure}\begin{subfigure}[b]{0.23\textwidth}
         \centering
         \includegraphics[width=\linewidth]{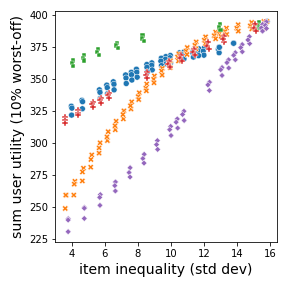}
         \caption{Cumulative utility for 10\% worse-off users}
         \label{fig:appendix-lastfmsmall_userfairness_utilitytenp}
     \end{subfigure}
     \begin{subfigure}[b]{0.23\textwidth}
         \centering
         \includegraphics[width=\linewidth]{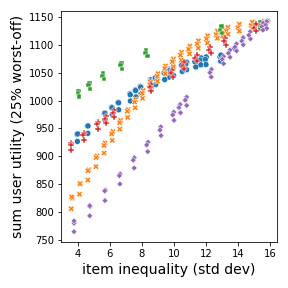}
         \caption{Cumulative utility for 25\% worse-off users}
         \label{fig:appendix-lastfmsmall_userfairness_utilitytwfp}
     \end{subfigure}
    \caption{Focus on user fairness on \lastfmsmall, measuring item inequality with standard deviation rather than Gini index. We observe a similar relative behavior between \welf and \patro, but now equality of exposure is optimal on the total utility/item inequality trade-off since it corresponds exactly to the objective of the algorithm. Nonetheless, \welf $\alpha_1=-2$ still obtains higher performance on 10\%-25\% worse-off users, showing that \welf offers a larger range of trade-offs than \equalexpo.}
    \label{fig:appendix-lastfm-userfairness-std}
\end{figure}

\subsection{One-sided recommendation: \lastfm dataset}\label{sec:lastfm360k}

We replicate the experiments on a larger dataset to verify our conclusions at a larger scale. We consider another Lastfm dataset from \citet{Celma:Springer2010}, which includes $360k$ users and $180k$ items (artists). We select the top $15,000$ users and items having the most interactions, so we refer to this dataset as \lastfm. We apply exactly the same experimental protocol as for \lastfmsmall, with the same range of hyperparameters for the different methods.

\paragraph{Results} Fig.~\ref{fig:appendix-lastfm360k-userfairness} and \ref{fig:appendix-lastfm360k-userfairness-std} show the results obtained by \welf, \patro and \equalexpo. The conclusions are similar to those on \lastfmsmall, with the results of \welf $\alpha=0$ being more uniformly better than those of \patro, even though overall similar. \welf $\alpha_1=-2$ dominates in terms of user utility on worse-off users. \welf and \equalexpo still find different trade-offs, with \welf dominating \equalexpo when inequality between items is measured by the Gini index, and \equalexpo dominating \welf when inequality is measured by the standard deviation.

\begin{figure}
    \centering
    \begin{subfigure}[b]{0.23\textwidth}
         \centering
         \includegraphics[width=\linewidth]{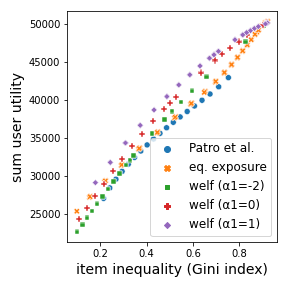}
         \caption{Total user utility.\\~}
     \end{subfigure}\begin{subfigure}[b]{0.23\textwidth}
         \centering
         \includegraphics[width=\linewidth]{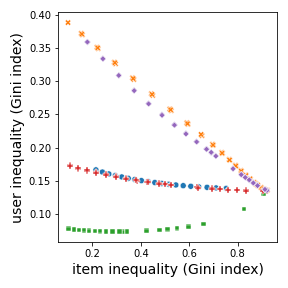}
         \caption{User inequality.\\~}
     \end{subfigure}\begin{subfigure}[b]{0.23\textwidth}
         \centering
         \includegraphics[width=\linewidth]{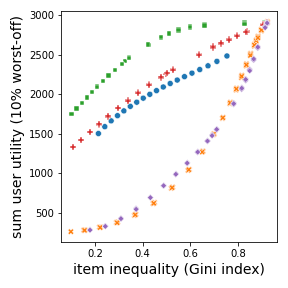}
         \caption{Cumulative utility for 10\% worse-off users}
     \end{subfigure}
     \begin{subfigure}[b]{0.23\textwidth}
         \centering
         \includegraphics[width=\linewidth]{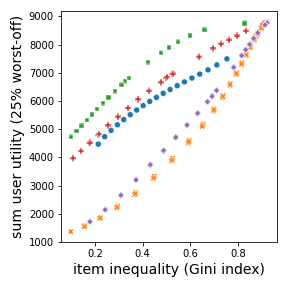}
         \caption{Cumulative utility for 25\% worse-off users}
     \end{subfigure}
    \caption{Results on \lastfm when measuring the inequality between items with the Gini coefficient.}
    \label{fig:appendix-lastfm360k-userfairness}
\end{figure}

\begin{figure}
    \centering
    \begin{subfigure}[b]{0.23\textwidth}
         \centering
         \includegraphics[width=\linewidth]{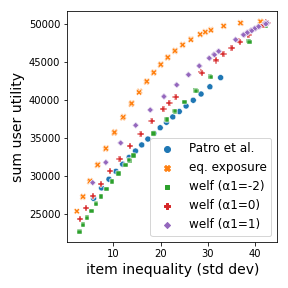}
         \caption{Total user utility.\\~}
     \end{subfigure}\begin{subfigure}[b]{0.23\textwidth}
         \centering
         \includegraphics[width=\linewidth]{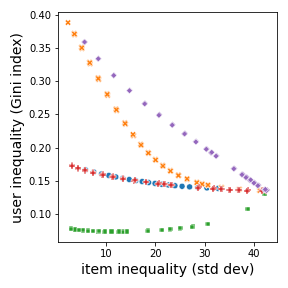}
         \caption{User inequality.\\~}
     \end{subfigure}\begin{subfigure}[b]{0.23\textwidth}
         \centering
         \includegraphics[width=\linewidth]{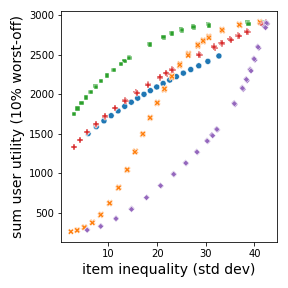}
         \caption{Cumulative utility for 10\% worse-off users}
     \end{subfigure}
     \begin{subfigure}[b]{0.23\textwidth}
         \centering
         \includegraphics[width=\linewidth]{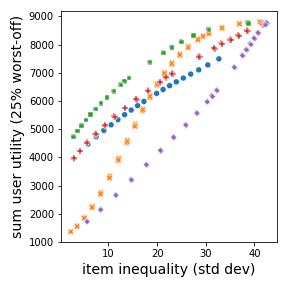}
         \caption{Cumulative utility for 25\% worse-off users}
     \end{subfigure}
    \caption{Results on \lastfm when measuring inequalities between  items with the standard deviation.}
    \label{fig:appendix-lastfm360k-userfairness-std}
\end{figure}

\subsection{One-sided recommendation: \mlm dataset}\label{sec:mlm}

We provide additional results on the \textbf{MovieLens}-20m dataset \cite{harper2015movielens}, which
contains ratings on a 5-star scale of movies by real users. %We set ratings $<3$ to zero and consider the task of predicting positive ratings only.
To simulate a collaborative filtering task with implicit feedback similar to Last.fm, we consider missing ratings as negative feedback and the task is to predict positive values. Since ratings $<3$ are usually considered as negative \cite{lim2015top,wang2018modeling}, we set ratings $<3$ to zero, resulting in a dataset with preference values among $\{0,3,3.5,4,4.5,5\}$. 
As for Lastfm-15k, we select the top $15,000$ users and items with the most interactions. For the inference and evaluation of rankings, we follow the same protocols as for Last.fm. 

The experimental protocol is the same as for \lastfmsmall and \lastfm except that we do not run the algorithm by \citep{patro2020fairrec} because its runtime was prohibitive.

\paragraph{results} The results are qualitatviely similar to those on \lastfmsmall and \lastfm except that the trends are magnified. \welf $\alpha=1$ and \equalexpo seem more similar, with \welf $\alpha=0$ dominating the trade-off total utility/item iniequality when item inequality is measured with the Gini index, and \equalexpo dominating this trade-off when item inequality is measured with standard deviation. \welf $\alpha=-2$ has great performance on worse-off users compared to \equalexpo or \welf with larger $\alpha$, but also comes at a significant cost in terms of total user utility, which is very rapidly driven down.

\begin{figure}
    \centering
    \begin{subfigure}[b]{0.23\textwidth}
         \centering
         \includegraphics[width=\linewidth]{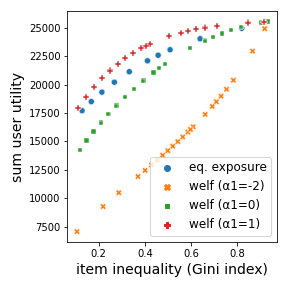}
         \caption{Total user utility.\\~}
     \end{subfigure}\begin{subfigure}[b]{0.23\textwidth}
         \centering
         \includegraphics[width=\linewidth]{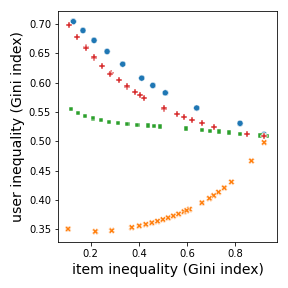}
         \caption{User inequality.\\~}
     \end{subfigure}\begin{subfigure}[b]{0.23\textwidth}
         \centering
         \includegraphics[width=\linewidth]{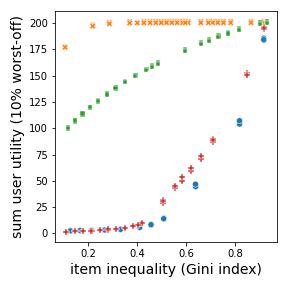}
         \caption{Cumulative utility for 10\% worse-off users}
     \end{subfigure}
     \begin{subfigure}[b]{0.23\textwidth}
         \centering
         \includegraphics[width=\linewidth]{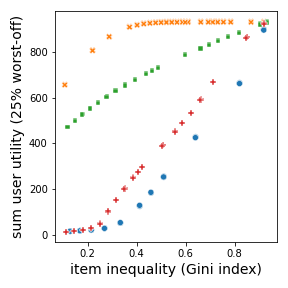}
         \caption{Cumulative utility for 25\% worse-off users}
     \end{subfigure}
    \caption{Results on \mlm when measuring the inequality between items with the Gini coefficient.}
    \label{fig:appendix-lastfm360k-userfairness}
\end{figure}

\begin{figure}
    \centering
    \begin{subfigure}[b]{0.23\textwidth}
         \centering
         \includegraphics[width=\linewidth]{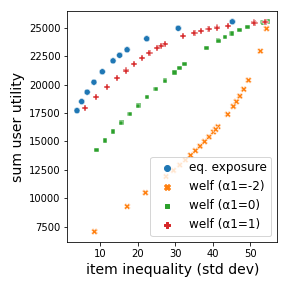}
         \caption{Total user utility.\\~}
     \end{subfigure}\begin{subfigure}[b]{0.23\textwidth}
         \centering
         \includegraphics[width=\linewidth]{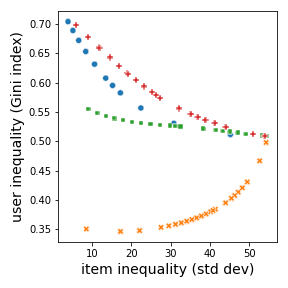}
         \caption{User inequality.\\~}
     \end{subfigure}\begin{subfigure}[b]{0.23\textwidth}
         \centering
         \includegraphics[width=\linewidth]{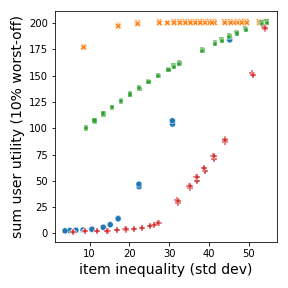}
         \caption{Cumulative utility for 10\% worse-off users}
     \end{subfigure}
     \begin{subfigure}[b]{0.23\textwidth}
         \centering
         \includegraphics[width=\linewidth]{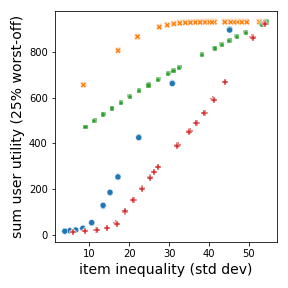}
         \caption{Cumulative utility for 25\% worse-off users}
     \end{subfigure}
    \caption{Results on \mlm when measuring inequalities between  items with the standard deviation.}
    \label{fig:appendix-ml20m-userfairness-std}
\end{figure}

\subsection{Reciprocal recommendation: \twitter dataset}
\label{app:experiments:reciprocal}

We now provide the full details of the experiments on Twitter presented in Section \ref{sec:experiments:reciprocal} of the main body. Given the lack of common benchmark for reciprocal recommendation \citep{palomares2021reciprocal}, we generate a reciprocal recommendation task for people-to-people recommendation problems based on the social network Twitter. We use the Higgs \twitter dataset which includes (directed) follower relationships between users.\footnote{It was collected following the discovery of the Higgs boson in July, 2012.} We keep users having at least $20$ mutual follows, resulting in a subset of $13$k users. We use the directed links to estimate the probability $\phi_{ij}$ that $i$ follows $j$, and the (symmetric) probability of a mutual follow, which is $\mu_{ij}=\phi_{ij} \times \phi_{ji}$.  As in the experiments for one-sided recommendation, we split the dataset into train/validation/test sets, each including $70\%/10\%/20\%$ of the \emph{directed} follower links. We create three random uniform splits, corresponding to three different seeds.

Estimates $\hat{\phi}_{ij}$ are built with logistic matrix factorization\footnote{Using the Python library Implicit (MIT License).} \citep{johnson2014logistic} trained on the train set with hyperparameter selection on the validation set. The number of latent factors is chosen in $[16, 32, 64, 128]$, the regularization in $[0.1, 1., 10., 20., 50.]$, and the confidence weighting parameter in $[0.1, 1., 10., 100.]$. Rankings are inferred from all estimated mutual preferences $\hat{\mu}_{ij} = \max(\hat{\phi}_{ij} \hat{\phi}_{ji}, 0)$. For each ranking method, the Frank-Wolfe algorithm is run with $5000$ iterations, and the number of recommendation slots is set to $40$. As for one-sided recommendation, rankings are estimated on estimated mutual preferences taken as ground truth. 

%Since preferences are learnt on \emph{directed} links from the train set (as in standard reciprocal recommendation \citep{palomares2021reciprocal,pizzato2013recommending}), the validation set includes directed links, not mutual links. Therefore, we define a validation mutual link as follows: $\mu^{\text{validation}}_{ij}=1$ if $i$ and $j$ mutually follow each other in the whole dataset, and if at least one of the directed links is in the validation set.
% iff $\phi^{\text{test}}_{ij}=1$. %\footnote{Note that $\mu_{ij}$ is in test if $\phi_{ij}$ is in the test set, but $\phi_{ji}$ need not be in the test set.} 

We generate different trade-offs with \welf by varying $\alpha$ in $[0.99,0.9,0.75,0.5,0.25,0,-0.25,-0.5,-0.6,-0.7,-0.8,-0.9,-1.0]$, $[-1.1,-1.25,-1.5,-1.75,-2.0,-2.5,-3,-5,-10,-15,-16,-17,-18]$. For all other methods, we vary $\beta$ in $[0.01,0.1,0.2,0.3,0.4,0.5,0.6,0.7,0.8,0.9,1,1.1,1.25,1.5,2,5,10,50,100]$.

All presented results are obtained by averaging performance over the three seeds.

\paragraph{Results} Figure \ref{fig:appendix_twitter_all_trade_offs} presents the trade-offs achieved by the different methods on \twitter. As expected, \qualexpo and \equalexpo do not exhibit a good behavior: stronger penalties lead to more dominated curves where the utility of every user is decreased. This is because constraining item exposure is not meaningful in reciprocal recommendation, where the relevant utility is the two-sided utility. The trade-offs achieved by the \welf and \equalu are different. Equal utility rapidly generates near-flat curves without really focusing on the very first users, while \welf increases the utility of the worst-off users while keeping the total utility relatively high.

\begin{figure}
    \centering
    \begin{subfigure}[b]{0.23\linewidth}
    \centering
    \includegraphics[width=\linewidth]{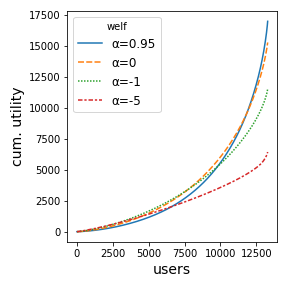}
    \caption{Welfare (ours)}
    \end{subfigure}
    \begin{subfigure}[b]{0.23\linewidth}
    \centering
    \includegraphics[width=\linewidth]{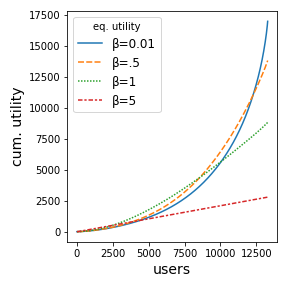}
    \caption{\equalu}
    \end{subfigure}
    \begin{subfigure}[b]{0.23\linewidth}
    \centering
    \includegraphics[width=\linewidth]{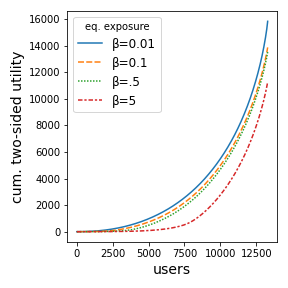}
    \caption{\equalexpo}
    \end{subfigure}
    \begin{subfigure}[b]{0.23\linewidth}
    \centering
    \includegraphics[width=\linewidth]{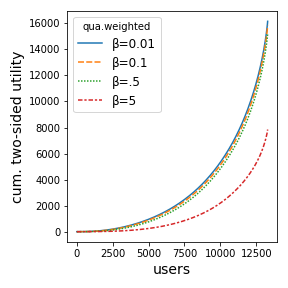}
    \caption{\qualexpo}
    \end{subfigure}
    \caption{representative trade-offs achieved by the various compared methods on \twitter. Exposure-based approaches (\qualexpo and \equalexpo) do not yield interesting trade-offs as they are unable to increase the utility of worse-off users. The trade-offs achieved by the \welf and \equalu are different. Equal utility rapidly generates near-flat curves without really focusing on the very first users, while \welf increases the utility of the worst-off users while keeping the total utility relatively high.}
    \label{fig:appendix_twitter_all_trade_offs}
\end{figure}

\subsection{Reciprocal recommendation: \epinions dataset}\label{sec:epinions}

We present additional experiments on reciprocal recommendation with the Epinions dataset \cite{richardson2003trust}. Epinions.com is a consumer review site with a who-trust-whom network, and the dataset gathers (directed) trust relationships between members of the platform. Here, we consider the task of finding mutual trust links. We keep users having at least $20$ mutual trust links, resulting in a subset of $800$ entities. For the inference and evaluation of rankings, we use the same protocols as for the Twitter experiments described in the previous subsection. The experimental parameters are the same as for the \twitter dataset.

\begin{figure}
    \centering
    \begin{subfigure}[b]{0.23\linewidth}
    \centering
    \includegraphics[width=\linewidth]{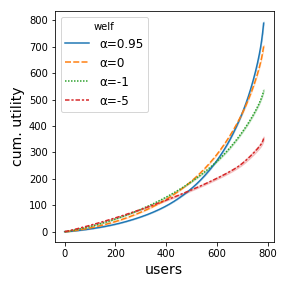}
    \caption{Welfare (ours)}
    \end{subfigure}
    \begin{subfigure}[b]{0.23\linewidth}
    \centering
    \includegraphics[width=\linewidth]{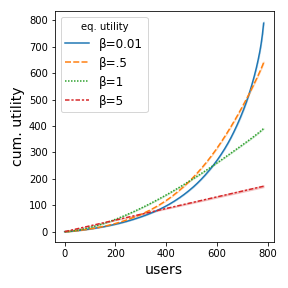}
    \caption{\equalu}
    \end{subfigure}
    \begin{subfigure}[b]{0.23\linewidth}
    \centering
    \includegraphics[width=\linewidth]{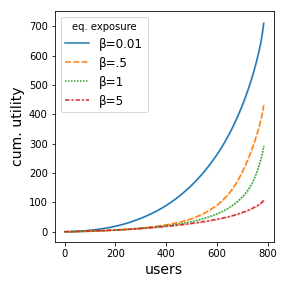}
    \caption{\equalexpo}
    \end{subfigure}
    \begin{subfigure}[b]{0.23\linewidth}
    \centering
    \includegraphics[width=\linewidth]{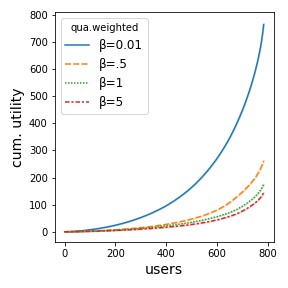}
    \caption{\qualexpo}
    \end{subfigure}
    \caption{representative trade-offs achieved by the various compared methods on \epinions.  The results are qualitatively similar to those on \twitter.}
    \label{fig:appendix_epinions_all_trade_offs}
\end{figure}

\begin{figure}[t]
\centering
    \begin{subfigure}[b]{0.23\textwidth}
         \centering
         \includegraphics[width=\linewidth]{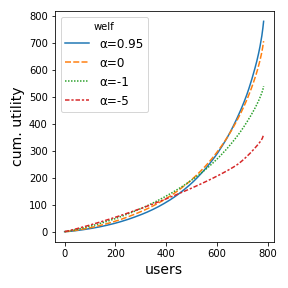}
         \caption{Example trade-offs achieved by \welf.}
         \label{fig:epinions-trafe-offs-welf}
     \end{subfigure}~~
     \begin{subfigure}[b]{0.23\textwidth}
         \centering
         \includegraphics[width=\linewidth]{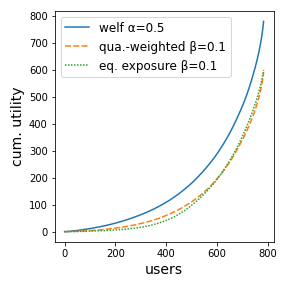}
         \caption{\welf dominates exposure-based methods.}
         \label{fig:epinions-welf-dom-quaexpo}
     \end{subfigure}~~
     \begin{subfigure}[b]{0.23\textwidth}
         \centering
         \includegraphics[width=\linewidth]{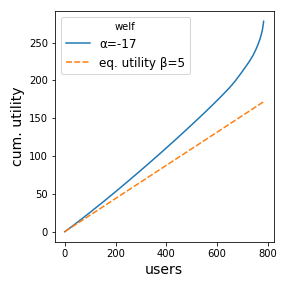}
         \caption{\welf dominates eq. utility near strict equality.}
         \label{fig:epinions-welf-dom-util}
     \end{subfigure}~~
%      \begin{subfigure}[b]{0.23\textwidth}
%          \centering
%          \includegraphics[width=\linewidth]{images/figures/twitter13k-welf_nearly_dom_util.png}
%          \caption{Comparing generalized Lorenz curves.}
%          \label{fig:welf-vs-util}
%      \end{subfigure}
% %     \label{fig:twitter-uu}
% % \end{figure}

% \begin{figure}
    \begin{subfigure}[b]{0.23\textwidth}
         \centering
         \includegraphics[width=\linewidth]{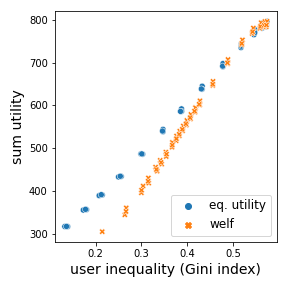}
         \caption{Total utility vs inequality.}
         \label{fig:epinions-uu-vs-ineq}
     \end{subfigure}~~
     \begin{subfigure}[b]{0.23\textwidth}
         \centering
         \includegraphics[width=\linewidth]{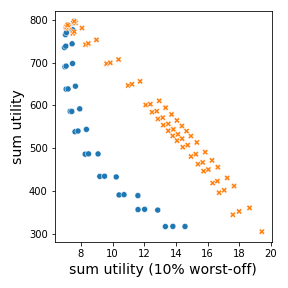}
         \caption{Total utility vs utility of 10\% worse-off users.}
         \label{fig:epinions-uu-first10}
     \end{subfigure}~~
     \begin{subfigure}[b]{0.23\textwidth}
         \centering
         \includegraphics[width=\linewidth]{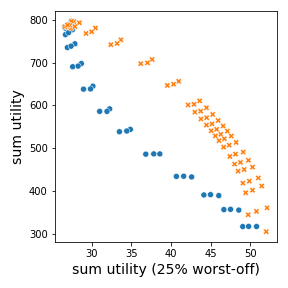}
         \caption{Total utility vs utility of 25\% worse-off users.}
         \label{fig:epinions-uu-first25}
     \end{subfigure}~~
     \begin{subfigure}[b]{0.23\textwidth}
         \centering
         \includegraphics[width=\linewidth]{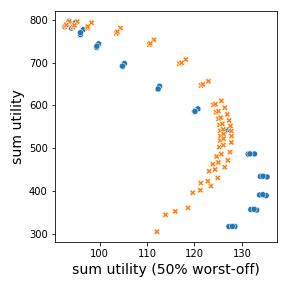}
         \caption{Total utility vs utility of 50\% worse-off users.}
         \label{fig:epinions-uu-first50}
     \end{subfigure}
    \caption{Results on the epinions dataset. }
    %The plot (d) is misleading showing that \welf  optimizes more efficiently (in terms of total utility) the utility of the $25\%$ worse-off users.}
    \label{fig:epinions-uu}
\end{figure}

\paragraph{Results} Figure \ref{fig:appendix_epinions_all_trade_offs} presents the trade-offs achieved by the different methods on \epinions. As expected, \qualexpo and \equalexpo do not exhibit a good behavior: stronger penalties lead to more dominated curves where the utility of every user is decreased. In Figure~\ref{fig:epinions-uu} plots the equivalent of Fig.~\ref{fig:twitter-uu}. The results are similar: all of \qualexpo, \equalexpo and \equalu have dominated curves. We also observe that in the more interesting region where we are closer to the maximum achievable utility, \welf optimizes better the utility of worse-off users. Yet, in that region, there is no strict dominance of \welf over \equalu.

\section{Pairwise vs pointwise penalties}\label{sec:pairwisepointwise}

Our penalty-based approach uses the penalty $\sqrt{\Penu}$ with:
\begin{equation}
    \Penu= 
    \sumI \Big( \bu_j - 
    \frac{1}{\nitems} \sum_{j'\in\itemS}\bu_{j'} \Big)^2
    .
\end{equation}

Some authors use $\Penpu=\sum_{(j,j')\in\itemS^2} |\bu_j - \bu_j'|$ instead of $\sqrt{\Penu}$ \citep{singh2019policy,morik2020controlling,basu2020framework}, but it is less computationally efficient than our penalty because it involves a quadratic number of terms.

The penalties are similar in that they are related to well-known measures of inequalities:
\begin{itemize}[leftmargin=*]
    \item $\frac{\Penpu}{2\nitems\sum_{j\in\itemS} \bu_j}$ is the Gini index of $\buitems$ \citep{gini1921measurement}, which, up to an affine transform is the area under the Lorenz curve.
    \item $\Penu$, which is (up to a constant) the variance of $\buitems$ is part of the family of additively decomposable inequality measures \citep{shorrocks1980class}. We use $\sqrt{\Penu}$ to scale the penalty with the sum of users' utilities. 
\end{itemize}
Note that $\sqrt{\Penu}$ and $\Penpu$ have the same dependency to the overall scale of the utilities (i.e., multiplying all utilities by a constant factor has the effect of multiplying both penalties by the same factor). Since both penalties drive towards equality, it is straightforward to show that the results of Section~\ref{sec:fairness_exposure} as $\beta\to\infty$ also apply to $\Penpu$. 

%However, because $\Penpu$ is not differentiable, the behavior of the penalty as we increase $\beta$ is not continuous, and setting $\beta>0$ but small might have no effect. With $\Penpu$, the propositions of Section~\ref{sec:fairness_exposure} need to be slightly adapted, using $\exists\beta>0$ instead of $\forall \beta>0$ in the statements of the Lorenz dominance of utility profiles.
\section{Exposure constraints at the level of every ranking}\label{sec:micro}

The notions of fairness of exposure are sometimes defined with item-side constraints defined at the level of \emph{every ranking} \citep{singh2018fairness,basu2020framework}. We give here the examples of constraints for equality of exposure and quality-weighted exposure:
\begin{align}
    \substack{\text{equality of}\\\text{exposure}} 
    &&
    \rrk^\expo \in \argmaxP \sum_{i\in\userS} \tuiP && \text{u.c.~~} \forall (i,j)\in\userS\times\itemS, \rrkij\v = \frac{\norm{\v}_1}{\nitems}\\
    \substack{\text{quality-weighted}\\\text{exposure}} 
    &&
    \rrk^\gqua \in \argmaxP \sum_{i\in\userS} \tuiP && \text{u.c.~~} \forall (i,j)\in\userS\times\itemS, \rrkij\v = \frac{\muij\norm{\v}_1}{\sum_{j'\in\itemS}\mu_{ij'}}\\
\end{align}
The advantage of this formulation is that it leads to optimization problems that can be solved locally for every user, since there is no dependency between user rankings through item utility anymore. 

However, applying the exposure criterion at the level of every ranking effectively applies a different notion of fairness. In our setting, this corresponds to defining a separate recommendation task for every user, i.e., taking $\nusers=1$. The welfare function then mediates, within a single ranking, between the user utility and the utility of the different items.

When evaluated on exposures aggregated over all users, as we do in the paper, applying the fairness constraints at the level of individual rankings can lead to drastic reductions of user utility for no benefit in terms of total item exposure. This is summarized in the following result, which shows that there exists problems for which the optimal rankings for every $\theta\in\Theta$ satisfy the constraints of equality of exposure and quality-weighted exposure as we define them in Section~\ref{sec:fairness_exposure}, but when applying the constraints at the level of every ranking, it has the effect of reducing user utility. In the proposition, we use the notation of the objective function for parity of exposure $F_\beta$ and $\Objgquabeta$ of Section~\ref{sec:fairness_exposure}.
\begin{proposition}
For every $d\in\mathbb{N}_*$ and every $N\in\mathbb{N}_*$, there is a one-sided recommendation task with $d+1$ items and $N(d+1)$ users such that,
$\forall\theta\in\Theta$:

$\forall \tutheta\in\argmaxU\Wthetau$, $\forall\beta>0$
we have: $\tutheta\in\argmaxUl \Objutilbetau$ and $\tutheta\in\argmaxUl  \Objgquabetau$, and
%, where $\Objutilbeta$ and $\Objgquabeta$ are defined in Section~\ref{sec:fairness_exposure}.
%For these tasks, we also have 
\begin{align}
    \sum\limits_{i\in\userS} \tu_i(\rrk^\expo) = \frac{2}{d+1} \sum\limits_{i\in\userS} \tuthetai && \text{~and~} && \sum\limits_{i\in\userS} \tu_i(\rrk^\gqua) = (\frac{1}{2}+\frac{1}{d}) \sum\limits_{i\in\userS} \tuthetai.
\end{align}
\end{proposition}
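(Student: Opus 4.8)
The plan is to exhibit a single-slot task whose unique welfare-optimal allocation simultaneously gives every user her maximal attainable utility \emph{and} equalizes both item exposure and item quality, so that it lies in the argmax of both aggregate criteria (their penalties vanish), while the \emph{per-ranking} versions of the two criteria are forced onto strictly sub-optimal allocations of total utility exactly $\frac{2}{d+1}$ and $\frac12+\frac1d$ of the optimum. Concretely (take $N=1$; the general case replicates the block $N$ times), I would use $d+1$ users $i_0,\ldots,i_d$, $d+1$ items $j_0,\ldots,j_d$, and one recommendation slot ($\v_1=1$, all other $\v_k=0$). Each user $i_k$ has a favorite item $j_k$ with $\mu_{i_kj_k}=1$, together with a circulant family of secondary values $\mu_{i_kj_{(k+\ell)\bmod(d+1)}}=s_\ell$ for $\ell=1,\ldots,d$, where the base pattern satisfies $0\le s_\ell<1$, $\sum_\ell s_\ell=1$ and $\sum_\ell s_\ell^2=\tfrac2d$ (for even $d$ one may simply give each user $d/2$ secondaries of value $\tfrac2d$; odd $d$ uses a non-uniform but still balanced pattern). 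The circulant structure makes every item's quality equal to $1+\sum_\ell s_\ell=2$, and $s_\ell<1$ makes the favorite each user's strict best item.

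First I would identify the welfare optimum. By Lemma~\ref{lem:ranking} / Proposition~\ref{prop:utilitarian}, showing $j_k$ to $i_k$ maximizes total user utility: in a single slot each $\tu_i$ is a convex combination of the $\muij$, so $\tu_i\le 1$ with equality \emph{only} at the favorite, and this ``favorite'' allocation gives every item exposure exactly $1$, the equal and hence item-leximin-optimal profile. Since each $\psi(\cdot,\alpha)$ is increasing and, for $\theta\in\Theta$, strictly concave, this profile maximizes the user part pointwise and the item part by equalization, so it is the unique maximizer of the strictly concave $\Wtheta$ over the convex set $\US$; this settles the ``$\forall\tutheta$'' quantifier. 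Because quality is constant ($=2$), equal exposure coincides with exposure proportional to quality, so this profile has zero penalty for both $\sqrt{\Penu}$ and $\sqrt{\Penquau}$ while attaining the maximal $\sum_i\tu_i$; hence it lies in $\argmaxUl\Objutilbetau$ and $\argmaxUl\Objgquabetau$ for every $\beta>0$, which is the first conclusion.

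Next I would read off the two ratios. With a single slot the per-ranking constraints pin the slot-$1$ distribution entirely, leaving no optimization freedom. Equality of exposure forces $\rrk^\expo_{ij1}=\tfrac{1}{\nitems}=\tfrac1{d+1}$, so $\tu_i(\rrk^\expo)=\tfrac1{d+1}\sum_j\muij=\tfrac2{d+1}$, and summing over the $d+1$ users gives $\sum_i\tu_i(\rrk^\expo)=2=\tfrac2{d+1}\sum_i\tuthetai$ (using $\sum_i\tuthetai=d+1$). Quality-weighted exposure forces $\rrk^\gqua_{ij1}=\tfrac{\muij}{\sum_{j'}\mu_{ij'}}$, so $\tu_i(\rrk^\gqua)=\tfrac{\sum_j\muij^2}{\sum_{j'}\mu_{ij'}}=\tfrac{1+2/d}{2}=\tfrac12+\tfrac1d$, and summing gives $\sum_i\tu_i(\rrk^\gqua)=(d+1)(\tfrac12+\tfrac1d)=(\tfrac12+\tfrac1d)\sum_i\tuthetai$.

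\textbf{Main obstacle.} The only genuine difficulty is the joint design of the off-diagonal entries: they must at once (a) keep quality constant across items, (b) make each user's own favorite her strict maximum, and (c) realize the prescribed mass $\sum_\ell s_\ell=1$ and squared mass $\sum_\ell s_\ell^2=\tfrac2d$ that produce exactly $\tfrac2{d+1}$ and $\tfrac12+\tfrac1d$. The circulant pattern resolves (a)--(c) simultaneously, the binding feasibility being $\tfrac2d\in[\tfrac1d,1)$, i.e. $d\ge 3$ (the regime in which $\tfrac12+\tfrac1d\le 1$ is attainable at all, since $\tu(\rrk^\gqua)\le\tu(\rrkopt)$ always). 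A secondary check, needed to discharge the universal quantifier over welfare optima, is the uniqueness of the maximizing utility profile, which follows from the strict concavity of $\Wtheta$ on $\Theta$ together with the fact that the favorite is each user's unique utility-maximizing item.
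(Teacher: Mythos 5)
Your proposal is correct and structurally mirrors the paper's own proof: both use a single-slot task with $d+1$ users and $d+1$ items, each user having a unit-value favorite item and a constant item quality of $2$ (so equal exposure and quality-proportional exposure coincide); the welfare optimum for every $\theta\in\Theta$ is then the favorite assignment, which simultaneously attains maximal total user utility and zero value for both aggregate penalties, hence lies in $\argmaxUl \Objutilbetau$ and $\argmaxUl \Objgquabetau$ for all $\beta>0$; finally the per-ranking constraints pin the single-slot probabilities completely and the two ratios are read off. The genuine difference is the off-diagonal pattern, and it matters. The paper sets $\mu_{i_k j}=\frac1d$ for all $j\neq j_k$; under the per-ranking quality-weighted constraint this forces probability $\frac{1}{2d}$ (not $\frac1d$ as the paper writes — its stated probabilities sum to $\frac32$) on each non-favorite, giving $\tu_i(\rrk^\gqua)=\frac12+\frac{1}{2d}$ rather than the claimed $\frac12+\frac1d$. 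Your circulant pattern with $\sum_\ell s_\ell=1$ and $\sum_\ell s_\ell^2=\frac2d$ is engineered precisely so that $\tu_i(\rrk^\gqua)=\frac{1+\sum_\ell s_\ell^2}{2}=\frac12+\frac1d$, i.e., it is the construction that actually delivers the constant asserted in the statement, which the paper's own construction does not.

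The price of your construction is the range of $d$: feasibility of $s_\ell\in[0,1)$ with the two moment conditions requires $d\ge 3$, whereas the statement quantifies over all $d\in\mathbb{N}_*$. This restriction is a defect of the statement rather than of your argument: for $d=1$ the claimed ratio $\frac12+\frac1d=\frac32$ exceeds $1$, which is impossible since $\sum_i\tu_i(\rrk^\gqua)\le\sum_i\tuthetai$, so no construction can prove the statement as written; for $d=2$ the ratio equals $1$ and can only be attained with the degenerate pattern $s_1=1$, $s_2=0$, in which case each user has two unit-value items and your strict-favorite uniqueness argument fails, although uniqueness of the optimal utility profile (and hence the conclusion) can still be recovered from the item-side equalization argument. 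It would strengthen your write-up to state explicitly that your proof covers $d\ge3$, handle $d=2$ by this degenerate variant, and note that $d=1$ is irrecoverable.
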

In other words, applying the constraints at the level of every ranking might lead to a drastic decrease of user utilities, even in tasks where satisfying the constraints on average over users (as we do in this paper) does not conflict with the optimal ranking.
\begin{proof}
We describe the problem with $N=1$, the general case is obtained by repeating the preference pattern. Let us consider a task with $d+1$ users, $d+1$ items and a single recommendation slot. Let $i_1, \ldots, i_{d+1}$ be the user indexes, and $j_1, \ldots, j_{d+1}$ the item indexes. The preferences are defined as:
\begin{align}
    \forall k\in\intint{d+1}, \mu_{i_kj_k} = 1 && \forall j\neq j_k, \mu_{i_k j} = \frac{1}{d}.
\end{align}
All items have the same quality. For every $\theta\in\Theta$, $\tutheta$ is given by the utilitarian ranking, which gives probability $1$ to item $j_k$ for user $i_k$, which leads to optimal user utility $\tuthetai = 1$ and equal exposure to every item $\tu_j^\theta=1$. Since the quality is the same for all items (equal to $1+d\frac{1}{d}$), the ranking for $\tutheta$ satisfies both equality of exposure and quality-weighted exposure constraints. Thus, for every $\beta>0$,  $\tutheta\in\argmaxU \Objutilbetau$ and $\tutheta\in\argmaxU  \Objgquabetau$.

On the other hand, satisfying equality of exposure at the level of every ranking requires $\rrk^\expo_{ij} = \frac{1}{d+1}$ for every user $i$ and item $j$, which leads to $\tui(\rrk^\expo) = \frac{1}{d+1}+d\times\frac{1}{d}\times\frac{1}{d+1} = \frac{2}{d+1}$ for every user.

For quality-weighted exposure for every ranking, it leads to:
\begin{align}
    \forall k\in\intint{d+1}, \rrk^\gqua_{i_kj_k} = \frac{1}{2} && \forall j\neq j_k,  \rrk^\gqua_{i_ k j} = \frac{1}{d}
\end{align}
and thus a user utility $\tu_i(\rrk^\gqua) = \frac{1}{2} + d\times\frac{1}{d}\times \frac{1}{d} = \frac{1}{2} +\frac{1}{d}$.
\end{proof}
Notice that in the examples of the proof, the global exposure of items is constant in $\rrk^\expo$ and $\rrk^\gqua$, as well as in the ranking given by optimal welfare. So from the point of view of our definitions of utility, applying the constraints at the level of every ranking only decreased user utility for the benefit of no items. Yet, we re-iterate that applying item-side fairness at the level of every ranking might be meaningful in some contexts. The goal of this section is to highlight the difference between using global and local definitions of item utilities.

\end{document}